\documentclass[a4paper,UKenglish,cleveref, autoref, thm-restate]{lipics-v2021}

\usepackage{mathtools,todonotes}

\newtheorem*{example*}{Example}

\newcommand{\nat}{\mathbb{N}}
\newcommand{\intg}{\mathbb{Z}}
\newcommand{\rel}{\mathbb{R}}
\newcommand{\rat}{\mathbb{Q}}

\newcounter{propcounter}
\newcounter{subpropcounter}[propcounter]
\renewcommand{\thepropcounter}{\arabic{propcounter}}
\renewcommand{\thesubpropcounter}{\thepropcounter\alph{subpropcounter}}

\newcounter{@propsave}
\newcounter{@subpropsave}

\newenvironment{proplist}
  {\setcounter{@propsave}{\value{propcounter}}
   \begin{list}{\thepropcounter.}
     {\usecounter{propcounter}
      \setcounter{propcounter}{\value{@propsave}}
      \leftmargin=1.5em}}
  {\end{list}}

\newenvironment{subproplist}
  {\setcounter{@subpropsave}{\value{subpropcounter}}
   \begin{list}{\thesubpropcounter.}
     {\usecounter{subpropcounter}
      \setcounter{subpropcounter}{\value{@subpropsave}}
      \leftmargin=3.0em}}
  {\end{list}}

\title{On existential B\"uchi arithmetic in two coprime bases} 

\author{Joris Nieuwveld}{University of Oxford, United Kingdom}{joris.nieuwveld@cs.ox.ac.uk
}{https://orcid.org/0009-0002-0339-1230}{}

\authorrunning{J. Nieuwveld} 

\Copyright{Joris Nieuwveld} 

\ccsdesc[100]{10003790, 10003792, 10003798} 

\keywords{Presburger arithmetic, B\"uchi arithmetic, decidability} 

\category{} 

\relatedversion{} 

\acknowledgements{This research was supported by the Glasstone Benefaction, University of Oxford [Violette and Samuel Glasstone Research Fellowships in Science 2025]}

\nolinenumbers 

\EventEditors{}
\EventNoEds{2}
\EventLongTitle{}
\EventShortTitle{}
\EventAcronym{}
\EventYear{}
\EventDate{}
\EventLocation{}
\EventLogo{}
\SeriesVolume{}
\ArticleNo{}
\begin{document}
\maketitle

\begin{abstract}
    For multiplicatively independent natural numbers $\alpha$ and $\beta$, Villemaire showed in 1992 that the first-order theory of Presburger arithmetic expanded with both B\"uchi predicates $V_\alpha$ and $V_\beta$ is undecidable, as it encodes multiplication.
    In recent years, Hieronymi and Schulz showed that Presburger arithmetic expanded with the weaker power predicates $\alpha^\nat = \{\alpha^n: n \in \nat\}$ and $\beta^\nat$ is also undecidable, while Karimov et al. showed that the existential fragment of this theory is decidable. 
    These results left open the natural problem of determining the decidability of the existential fragment of Villemaire's original expansion.
    We settle this question for coprime $\alpha$ and $\beta$.
    Specifically, we give a quantifier-elimination argument that proves the decidability of the existential fragment of $\mathsf{FO}(\intg;<,+, V_\alpha, V_\beta)$.
\end{abstract}

\section{Introduction}
For nearly a century, \emph{Presburger arithmetic}, the first-order theory of the integers with order and addition $\mathsf{FO}(\intg;<,+)$, has been studied, and countless modern works use its decidability. 
In 1929, Presburger established decidability using a quantifier-elimination method, showing that Hilbert's dream of effectively establishing the truth of any given arithmetic statement is attainable in the restricted setting of linear arithmetic. 
However, unable to express multiplication and many other arithmetic constructions, Presburger arithmetic is rather weak and far from Hilbert's unachievable dream in its full sense.
Nowadays, we understand this infeasibility well: Church showed a few years later that Presburger arithmetic expanded with multiplication is undecidable.\footnote{See Haase~\cite{haase2018survival} for a more detailed overview of the history of Presburger arithmetic.}

The introduction of automata in the 1950s provided a new perspective.
A natural number written in a fixed base $b$ can be seen as a finite word or string, and one can formulate addition and order as string operations.
That is, for natural numbers $x,y,z$, the expressions $x < y$ and $x + y = z$ hold if and only if certain automata accept $(x,y)$ and $(x,y,z)$, respectively, in their base-$b$ expansions.
This revelation led to another decision procedure for Presburger arithmetic: Turn a relation $R \subseteq \nat^k$ definable in Presburger arithmetic into a finite-state automaton that encodes the language of $R$ and determine whether this language is empty.\footnote{In this introduction we state results using the universes $\intg$ and $\nat$ as most are more natural in one universe: $\intg$ from an arithmetic perspective and $\nat$ from an automata perspective.}

Subsequently, B\"uchi~\cite{buchi1960weak} introduced a new predicate $V_b(x,y)$, the \emph{B\"uchi predicate}, and its expansion of Presburger arithmetic $\mathsf{FO}(\intg;<,+,V_b)$ called \emph{B\"uchi arithmetic}.
B\"uchi's predicate $V_b(x,y)$ holds if and only if $y$ is the largest power of $b$ dividing $x$. 
For example, $V_3(12, 3)$ holds while $V_2(12, 2)$ does not (as $4$ is a larger power of $2$ dividing $12$).
B\"uchi's predicate $V_b(x,y)$ is recognised by a finite-state automaton that uses the base-$b$ expansions of $x$ and $y$ as input.
Automata-theoretic methods can be used to show that B\"uchi arithmetic is decidable.
In a sense, B\"uchi arithmetic captures the full power of automata: The B\"uchi–Bruy\`ere theorem (see~\cite{bruyere1994logic}) states that a subset $S \subseteq \nat^r$ is definable in base-$b$ B\"uchi arithmetic if and only if $S$ is $b$-recognisable, i.e., some finite-state automaton with alphabet $\{0,\dots,b-1\}^r$ has language~$S$.

The encoding of $V_b$ above relies on natural numbers being represented in a single base $b$.
An automaton can translate to a second base $\tilde{b}$ when $b$ and $\tilde{b}$ are multiplicatively dependent,\footnote{That is, there are no integers $k, \ell$ such that $b^k\tilde{b}^\ell = 1$.} while multiplicatively independent $\tilde{b}$ and $b$ do not allow such a translation. 
In particular, in 1992, Villemaire~\cite{villemaire1992theory} showed the following result.
\begin{theorem}
    Let $\alpha$ and $\beta$ be multiplicatively independent.
    Then $\mathsf{FO}(\intg;<,+,V_\alpha,V_\beta)$ defines multiplication, and is therefore undecidable.
\end{theorem}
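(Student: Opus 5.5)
The plan is to define multiplication in $\mathsf{FO}(\intg;<,+,V_\alpha,V_\beta)$ and then invoke Church's undecidability of $\mathsf{FO}(\intg;<,+,\cdot)$. Since $+$ is already available, it suffices to define the squaring function on $\nat$, because $xy = \tfrac{1}{2}\bigl((x+y)^2 - x^2 - y^2\bigr)$, and division by $2$ is definable by $z=2w \iff z = w+w$. From $V_\alpha$ we can define the power predicate $P_\alpha(x) \iff V_\alpha(x,x)$, and similarly $P_\beta$. With $V_\alpha$ alone we can also define ``digit''-type relations: for a power $a$ of $\alpha$ and an $x$, the predicate ``$a$ appears in the base-$\alpha$ expansion of $x$ with a given digit'' is $\alpha$-recognisable, hence definable by the B\"uchi--Bruy\`ere theorem. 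The aim is to use this to encode finite sequences inside single integers, and to use the interplay of the two bases to transfer information between unrelated positions.

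The key steps, in order, are as follows. First, I would define the relation $R(a,b)$ on $\alpha^\nat \times \beta^\nat$ meaning ``$b$ is the largest power of $\beta$ with $b\le a$''. This is Presburger-definable from $P_\alpha,P_\beta$ as $b\le a<\beta b$. Second, using multiplicative independence, the map $n\mapsto$ (fractional behaviour of $n\log\alpha/\log\beta$) is non-periodic, and by Kronecker-type density the pairs $(a,R(a))$ are ``generic''. From this I would define, \`a la Villemaire, a relation on $\alpha^\nat$ that is not $\alpha$-recognisable, for instance $\{(a, a') : a' \text{ is the next power of }\alpha \text{ after } a \text{ such that the corresponding } \beta\text{-power gap pattern repeats}\}$. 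Third, and more concretely, I would try to define the function $a\mapsto a^2$ on $\alpha^\nat$, or equivalently the relation $\{(\alpha^n,\alpha^{2n})\}$. Once this is available, a standard argument in B\"uchi arithmetic gives full multiplication. One encodes $x$ and $y$ by words, and the product's digit sequence is a convolution that can be verified position-by-position using $V_\alpha$-definable digit extraction together with a definable ``length doubling'' to index the carries. Alternatively, one defines $\{(\alpha^n, n)\}$, and then Büchi arithmetic augmented with this function is known to interpret $(\nat,+,\cdot)$, since one can quantify over the sets $\{n: \alpha^n \text{ digit of } x\}$ as weak monadic second order with $+$ on exponents. In that case multiplication on indices comes from $+$ on indices via weak MSO with addition, which is undecidable and defines multiplication.

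The main obstacle is the second and third steps: extracting from the two incompatible base systems a definable relation linking exponents additively, such as $\{(\alpha^n,\beta^n)\}$. My first attempt would be to characterise $(\alpha^n,\beta^n)$ via an integer $x$ whose base-$\alpha$ expansion has digit $1$ exactly at positions $\alpha^0,\dots,\alpha^n$, and whose relationship to a second integer $y$ with $\beta$-digits at $\beta^0,\dots,\beta^n$ is forced by a step-by-step constraint. Each $\alpha$-position must be matched, via $R$, to a unique $\beta$-position, and consecutive ones must be matched to consecutive ones. Coprimality is not needed for this, only independence, but making the ``consecutive matching'' first-order without already having the bijection is the delicate point. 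I expect to need an auxiliary witness integer whose $\alpha$-digits mark positions $\alpha^k$ and whose $\beta$-digits mark the corresponding $\beta$-positions simultaneously, with consistency checked locally by $V_\alpha$, $V_\beta$ and order.
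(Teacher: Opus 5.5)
The paper does not prove this theorem; it only cites Villemaire~\cite{villemaire1992theory}. Judged on its own, your proposal has a genuine gap at exactly the point where the theorem's content lies. The parts you make precise are routine and correct. $R(a,b)\iff b\le a<\beta b$ on $\alpha^\nat\times\beta^\nat$ is definable. Squaring suffices for multiplication. A definable graph $\{(\alpha^n,n):n\in\nat\}$ would, via base-$\alpha$ digit sets, interpret weak monadic second-order logic over $(\nat,+)$ and hence multiplication. But you never define that graph, nor $\{(\alpha^n,\alpha^{2n})\}$, nor $\{(\alpha^n,\beta^n)\}$, nor any other relation on exponents from which multiplication follows. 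The step you call ``the delicate point'' is the whole of Villemaire's theorem. Your second step also aims at an inadequate target. A definable subset of $\alpha^\nat$ that is merely not $\alpha$-recognisable gives nothing by itself: for instance, $\langle\nat,+,V_\alpha,\{\alpha^{2^n}:n\in\nat\}\rangle$ is interpretable in the monadic theory of $(\nat,S,\{2^n:n\in\nat\})$, which is decidable by Elgot and Rabin.

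The mechanism you sketch for the missing step cannot work as described. A single witness integer cannot have its base-$\alpha$ digits and its base-$\beta$ digits prescribed independently, because either expansion determines the integer and hence the other. Suppose instead you use two codes, one read through $V_\alpha$ and one through $V_\beta$, glued only by $R$. Then $R$ sends the $\alpha$-position $n$ to the $\beta$-position $\lfloor n\log\alpha/\log\beta\rfloor$, not to $n$. Consecutive positions go to positions whose gaps follow an aperiodic Sturmian pattern, so ``consecutive to consecutive'' is not expressible this way. More fundamentally, everything built from two digit codings glued only by $R$ is interpretable in the monadic theory of $(\nat,<)$ expanded by a Sturmian predicate of computable irrational slope. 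That theory is decidable, since such words are effectively almost periodic. So no argument that only transports digit sets between the two bases can reach multiplication. What is missing is an essential use of how addition acts on the two expansions of the same integers, for example on the base-$\beta$ digits of numbers assembled from powers of $\alpha$, to extract a genuinely arithmetic relation between exponents.
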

The set of powers of $b$, denoted $b^\nat$, is $b$-recognisable.
Presburger arithmetic expanded with a \emph{power predicate} $b^\nat$ is strictly less expressive than B\"uchi arithmetic, i.e., certain $b$-recognisable sets cannot be defined in $\mathsf{FO}(\intg;<,+,b^\nat)$.
A few years after Villemaire's result, B\`es~\cite{Bes1997} showed that replacing one of the two B\"uchi predicates with the weaker power predicate does not change the result: $\mathsf{FO}(\intg;<,+, V_\alpha, \beta^\nat)$ defines multiplication and is undecidable.

It took until 2021 for Hieronymi and Schulz~\cite{hieronymi2022strong} to establish that $\mathsf{FO}(\intg;<,+,\alpha^\nat,\beta^\nat)$ is undecidable when $\alpha$ and $\beta$ are multiplicatively independent.
However, they encoded the Halting Problem for Turing machines directly, as Schulz~\cite{schulz2025undefinability} also showed that multiplication cannot be defined in this structure.
Recently, Karimov et al.~\cite{karimov2025decidability} showed that the decidability of restricted fragments of such theories is not completely hopeless. 
Using quantifier elimination, they showed that the existential fragment of $\mathsf{FO}(\intg;<,+,\alpha^\nat,\beta^\nat)$ is decidable when $\alpha$ and $\beta$ are multiplicatively independent.\footnote{Although they included $0$ and $1$ in their language, we omit them as they are existentially definable. Zero is the unique number satisfying $x+x=x$ and $1$ is the unique number in both $\alpha^\nat$ and $\beta^\nat$.}

The current paper extends the result of~\cite{karimov2025decidability} from power predicates to the more expressive B\"uchi predicates under the (stronger) condition that $\alpha$ and $\beta$ are coprime.

\begin{theorem}\label{thm:main theorem}
    Let $\alpha, \beta \ge 2$ be coprime. 
    Then the existential fragment of $\mathsf{FO}(\intg;<,+,V_\alpha,V_\beta)$ is decidable.
\end{theorem}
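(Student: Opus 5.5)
We follow the strategy of Karimov et al.~\cite{karimov2025decidability} and eliminate the B\"uchi predicates in favour of power predicates plus linear constraints, so that decidability reduces to their result for $\mathsf{FO}(\intg;<,+,\alpha^\nat,\beta^\nat)$. This reduction will not be literal, because $V_\alpha$ carries divisibility information. Instead we aim for a normal form that their quantifier-elimination procedure can still handle.

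The first step is syntactic. By disjunctive normal form and introducing fresh variables, an existential sentence becomes a finite disjunction of conjunctions. Each conjunct is a linear (in)equality, a divisibility constraint $m \mid t$, or an atom $V_\alpha(x,u)$ or $V_\beta(y,w)$ with $x,y,u,w$ variables. We rewrite $V_\alpha(x,u)$ as
\[
u\in\alpha^\nat \wedge \exists z\,\bigl(x = u z \wedge \alpha\nmid z\bigr),
\]
and similarly for $V_\beta$; the case $x=0$ is handled by the convention for $V_\alpha(0,\cdot)$. Since $u$ is a power, we substitute $u=\alpha^{n}$ with an integer variable $n$. The constraint then reads $x = \alpha^{n} z$ with $z \not\equiv 0 \pmod \alpha$. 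So the formulas to decide are systems of linear constraints in ordinary variables together with terms $\alpha^{n_i} z_i$ and $\beta^{m_j} z'_j$, where the $z$'s carry only congruence side-conditions. The essential new feature compared with~\cite{karimov2025decidability} is that we have products $\alpha^{n}z$ with $z$ unbounded, not just pure powers.

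The second and main step eliminates the unbounded multipliers. Fix the ordering of the exponents, say $n_1\le\dots\le n_k$ for the $\alpha$-side and $m_1\le\dots\le m_\ell$ for the $\beta$-side. Variables of the form $x_i=\alpha^{n_i}z_i$ are exactly the elements of $\alpha^{n_i}\intg$ that lie outside $\alpha^{n_i+1}\intg$. In a linear system this is a congruence condition modulo a power of $\alpha$ with a \emph{variable} exponent. Coprimality is the key tool here: by the Chinese remainder theorem, congruences modulo $\alpha^{n}$ and $\beta^{m}$ are independent. We therefore propose to solve the linear system as follows.
\begin{itemize}
\item Project away the free integer variables by Presburger-style elimination. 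This is done uniformly in $\alpha^{n_i},\beta^{m_j}$, treated as parameters, following the ``semilinear in the powers'' technique of~\cite{karimov2025decidability}.
\item The output is a disjunction of (a) linear inequalities among the powers $\alpha^{n_i},\beta^{m_j}$, and (b) solvability conditions of congruences. The latter have the form: some integer linear combination of powers is divisible by $\alpha^{n_i}$ (resp.\ $\beta^{m_j}$), or by a fixed modulus.
\end{itemize}
For (b), divisibility of $\sum c_j\beta^{m_j}$ by $\alpha^{n}$ is the hard constraint; it is genuinely multiplicative. Here we use coprimality again: $\beta$ is a unit modulo $\alpha^{n}$, so such conditions are conditions on the multiplicative order of $\beta$ modulo $\alpha^{n}$, combined with $p$-adic valuations of $S$-unit sums. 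We would try to show via a $p$-adic Baker/Yu bound, or a Skolem--Mahler--Lech-type argument, that $v_\alpha\bigl(\sum c_j\beta^{m_j}\bigr)$ is either bounded linearly in $\log$ of the coefficients and $\max m_j$ or forced by a vanishing subsum. Then ``$\alpha^{n}$ divides it'' reduces to a linear inequality $n \le C\cdot\max m_j + D$ together with finitely many congruence conditions on the $m_j$.

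I expect this $p$-adic linear-forms step, with its effectivity and the degenerate (vanishing-subsum) cases, to be the main obstacle. After it, every remaining constraint is of the type handled in~\cite{karimov2025decidability}: linear inequalities between $\alpha$-powers and $\beta$-powers, and congruences on exponents. Their decision procedure then concludes the proof of Theorem~\ref{thm:main theorem}.
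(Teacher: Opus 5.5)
\paragraph{Where the proposal breaks.} Your reduction fails at the second step. There you assert that projecting out the multipliers $z$ leaves only (a) linear inequalities among pure powers and (b) divisibility conditions, and that a $p$-adic Baker bound then puts (b) into the format of Karimov et al. Both halves fail.

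First, the multiplier $z$ enters with the \emph{variable} coefficient $\alpha^{n}$, so Presburger elimination does not go through uniformly. For instance, $\exists z\colon \beta^{m}<\alpha^{n}z<\beta^{m}+\beta^{k}$ with $\beta^{k}<\alpha^{n}$ says that the least positive residue of $-\beta^{m}$ modulo $\alpha^{n}$ is below $\beta^{k}$. This is neither a linear inequality among powers nor a finite disjunction of divisibilities. Writing it as $\alpha^{n}\mid\beta^{m}+r$ with $0<r<\beta^{k}$ just reintroduces an unbounded multiplier. Such terms are exactly the paper's linear power variables $x_{i,j}z_j^{n_j}$; they cannot be eliminated, and handling them is the whole content of the proof.

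Second, even a pure divisibility does not reduce as you claim. By coprimality, $\alpha^{n}\mid\beta^{m}-\beta^{k}$ (with $m>k$) is equivalent to $\operatorname{ord}_{\alpha^{n}}(\beta)\mid m-k$. Lifting the exponent (the mechanism behind Lemma~\ref{lem:range modulo}) gives $\operatorname{ord}_{\alpha^{n}}(\beta)=r\alpha^{n-\ell}$ for fixed $r,\ell$ and all large $n$. So the condition is a congruence on $m-k$ modulo a number exponential in $n$, not finitely many congruences on the exponents. A $p$-adic Baker/Yu estimate yields only the necessary condition $n=O(\log\max_j m_j)$, never a characterisation. These constraints cannot be avoided, since, e.g., $V_\alpha(\beta^{m}-\beta^{k},\alpha^{n})$ is directly expressible. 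Even your target bound $n\le C\max_j m_j+D$ is an inequality between exponents, not a linear inequality between powers, so it is not in Karimov et al.'s format either.

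\paragraph{The missing idea.} The paper never describes these solution sets; it proves that solutions exist. It keeps the linear power variables, turns equalities into congruences modulo $d\,z^{n}$, orders all variables, and groups them into levels and a bottom family. Archimedean Baker bounds (Lemmas~\ref{lem:Baker helper reduce} and~\ref{lem:Baker to eliminate bottom family modular constraints}) show that a variable modulus lying in the bottom family either makes its congruence trivial or bounds an exponent (property~\ref{step 3c}). The smaller variables outside the bottom family are then solved for first, recursively on the number of power variables, which turns every variable modulus into a constant. Coprimality enters crucially through Lemma~\ref{lem:range modulo}: which residues $\beta^{n}$ attains modulo $\alpha^{m}$ depends only on residues modulo a fixed $\alpha^{\ell}$. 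Hence the congruences on the remaining large variables become periodic, and the Kronecker-type Theorem~\ref{thm:polyhedron no modular constraints} supplies powers satisfying the inequalities. Some existence argument of this kind, rather than a reduction to the power-predicate result, is what your plan lacks.
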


As $x$ is a power of $z$ if and only if $V_z(x,x)$ holds, the theorem above entails the result of Karimov et al. for coprime $\alpha$ and $\beta$. 
In Section~\ref{Sec:conclusion}, we discuss the difficulties for general multiplicatively independent bases.
To emphasise this distinction, lemmas that do not rely on coprimality explicitly use multiplicative independence.

The rest of the paper is devoted to proving Theorem~\ref{thm:main theorem}. 
In Section~\ref{sec:reduction to system}, we reduce the decision problem to a system of inequalities and modular equations involving powers $z^n$ of $\alpha$ and $\beta$ and multiples $xz^n$ of these powers.
We reduce these kinds of general systems to those satisfying a list of properties, which we give and establish in Section~\ref{sec:outline}.
The necessary constructions are gently introduced in Section~\ref{sec:outline} and formally defined in Section~\ref{sec:levels and bottom family}, where we also show they are computable.
We finalise the quantifier-argument in Section~\ref{sec:Finalising the argument}.
For readability and flow, we included many examples, while moving most proofs to appendices.

\section{Preliminaries}\label{sec:prelims}
Let $\mathbf{0}$ denote the (column) vector $(0,\dots,0)$ whose dimension should be clear from context, and similarly, $\mathbf{1}$ denotes $(1,\dots,1)$.
We use bold letters for vectors and their entries: $\mathbf{x} = (\mathbf{x}_1,\dots,\mathbf{x}_t)$.
Matrices and vectors are rational except when stated otherwise. For a matrix $A$ and $k \le \ell$, $A_{k :\ell}$ is the restriction of $A$ to its $k$th to $\ell$th column.

\subsection{Logic and Decidability}\label{sec:logic}
A structure $\mathbb{M}$ consists of a \emph{universe} $U$, constants $c_1,\dots,c_k$, functions $f_1\colon U^{\mu_1} \to U,\dots,\allowbreak f_\ell\colon U^{\mu_\ell} \to U$, and predicates $P_1\subseteq U^{\nu_1},\dots,P_m\subseteq U^{\nu_m}$ (with $\mu_i,\nu_i \in \nat$).
The \emph{language} of $\mathbb{M}$ consists of all well-formed first-order formulas built from constants, functions, predicates, and the symbols $=$, $\lnot$, $\land$, $\lor$, $\exists$, and $\forall$. 
A \emph{theory} of $\mathbb{M}$ is a subset of its language.
The \emph{first-order theory} of $\mathbb{M}$, denoted $\mathsf{FO}(U;c_1,\dots,c_k,f_1,\dots,f_\ell,P_1,\dots,P_m)$, is the set of all such \emph{sentences} (formulas without free variables) in the language of $\mathbb{M}$ that hold in $\mathbb{M}$.
The \emph{existential fragment} $\exists\mathsf{FO}(U;c_1,\dots,c_k,f_1,\dots,f_\ell, P_1,\dots, P_m)$ of $\mathbb{M}$ consists of all existential sentences in its first-order theory, i.e., all sentences $\exists x_1,\dots,x_k \colon \varphi(x_1,\dots,x_k)$ where $\varphi$ is a quantifier-free formula in the language of~$\mathbb{M}$.
A theory is \emph{decidable} if there is an algorithm that decides whether a given sentence is in this theory, and \emph{undecidable} otherwise.
We note that the first-order theory of \emph{Presburger arithmetic}, $\mathsf{FO}(\intg;<,+)$, is decidable.

\subsection{Linear programming}\label{sec:lin programming prelims}
We use~\cite{schrijver1998theory} for many of these definitions.
Let $A$ be a matrix and $\mathbf{v}$ a vector of appropriate dimensions with rational coefficients and $\sim \in \{>,\ge\}$.
We study the rational or integer-valued, $t$-dimensional vectors $\mathbf{x}$ that satisfy $A\mathbf{x} \sim \mathbf{v}$. We refer both to the abstract space defined by the inequality and the set of vectors $\mathbf{x}$ in the appropriate ring as a \emph{polyhedron} of \emph{ambient dimension} $t$.
This should be clear from context.
Let $T := \{t_1,\dots,t_r\} \subseteq \{1,\dots,t\}$.
Then $B(\mathbf{x}_{t_1},\dots,\mathbf{x}_{t_r}) \sim \mathbf{w}$ is the \emph{projection} of the polyhedron $A\mathbf{x} \sim \mathbf{v}$ onto the coordinates from $T$ if the following holds for all $\mathbf{x} \in \rat^r$: $B\mathbf{x} \sim \mathbf{w}$ if and only if there is a $\tilde{\mathbf{x}} \in \rat^t$ satisfying $A\tilde{\mathbf{x}}\sim\mathbf{v}$ and $\mathbf{x}_1 = \tilde{\mathbf{x}}_{t_1},\dots,\mathbf{x}_r = \tilde{\mathbf{x}}_{t_r}$. 

\begin{lemma}\label{lem:lin programming trick}
    With the notation above, one can effectively compute a matrix $B$ and vector $\mathbf{w}$ such that $B\mathbf{x}' > \mathbf{w}$ is the projection of $A\mathbf{x} > \mathbf{v}$ onto the coordinates $T$.
\end{lemma}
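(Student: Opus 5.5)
The plan is Fourier--Motzkin elimination, carried out one variable at a time and adapted to strict inequalities. By induction on the number of coordinates outside $T$, it suffices to eliminate a single coordinate, say $\mathbf{x}_t$ with $t \notin T$. Composing projections gives a projection, since ``there exists $\tilde{\mathbf{x}}$'' can be split coordinate by coordinate. Each round is an explicit rational computation on $(A,\mathbf{v})$, so the overall procedure is effective.

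For the single step, I partition the rows $\mathbf{a}_i\mathbf{x} > \mathbf{v}_i$ of $A\mathbf{x} > \mathbf{v}$ by the sign of the coefficient $a_{i,t}$. Dividing each row with $a_{i,t} \neq 0$ by $|a_{i,t}|$, every such row becomes either a strict lower bound $\mathbf{x}_t > \ell_i(\mathbf{x}')$ (when $a_{i,t}>0$) or a strict upper bound $\mathbf{x}_t < u_j(\mathbf{x}')$ (when $a_{i,t}<0$). Here $\ell_i, u_j$ are affine functions with rational coefficients of the remaining coordinates $\mathbf{x}'$. The rows with $a_{i,t}=0$ are unaffected. The new system $B\mathbf{x}' > \mathbf{w}$ consists of these unaffected rows together with the inequalities $u_j(\mathbf{x}') - \ell_i(\mathbf{x}') > 0$ for every pair $(i,j)$ of a lower and an upper bound. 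If there are no lower bounds, or no upper bounds, only the unaffected rows are kept.

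Next I verify that this is the projection. Soundness is immediate: if $\tilde{\mathbf{x}}$ satisfies the original system, then $\ell_i < \tilde{\mathbf{x}}_t < u_j$ for all $i,j$, so every new inequality holds.

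Completeness is the step that requires care, because the inequalities are strict. Suppose $\mathbf{x}'$ satisfies the new system. Then $L := \max_i \ell_i(\mathbf{x}')$ is strictly less than $U := \min_j u_j(\mathbf{x}')$, since these are finitely many strict comparisons. I choose $\mathbf{x}_t := (L+U)/2$, which is rational. If only one side is present, I take $L+1$ or $U-1$ instead, and if neither side is present, I take $0$. This choice satisfies every original row, which proves the projection property over $\rat$.

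The main (though mild) obstacle is the degenerate case where the system becomes trivial or empty. It can be handled by allowing $B$ to have zero rows, or by encoding ``always true'' as $0 > -1$ and ``always false'' as $0 > 1$ so that $B$ remains a matrix.

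Each elimination round is a finite rational computation: it involves sorting rows by sign, scaling, and forming pairwise differences. Hence $B$ and $\mathbf{w}$ are effectively computable. The number of rows may grow quadratically per round, but it stays finite, which is all the statement requires.
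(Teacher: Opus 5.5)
Your proposal is correct: it is Fourier--Motzkin elimination, and you handle the strict inequalities properly. Pairing every strict lower bound with every strict upper bound and then taking the rational midpoint is exactly what the paper's notion of projection over $\rat$ requires. The paper gives no proof of this lemma; it treats it as the standard fact from Schrijver's book, which is this same elimination argument.
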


Let $A\mathbf{x} \ge \mathbf{0}$ denote the \emph{recession cone} of $A\mathbf{x} \sim \mathbf{v}$.
Its \emph{intrinsic dimension} is the maximum number of $\rat$-independent vectors in its recession cone (which is thus independent of $\mathbf{v}$).
The Minkowski-Weyl theorem gives a finite set of vectors $\mathbf{a}_1,\dots,\mathbf{a}_r$ such that every $\mathbf{x}$ in the recession cone is a $\rat_{\ge 0}$-linear combination of $\mathbf{a}_1,\dots,\mathbf{a}_r$.
One can compute these vectors $\mathbf{a}_i$ using the double description method.
The \emph{orthogonal complement} of a recession cone $A\mathbf{x} \ge \mathbf{0}$ is the $\rat$-vector space $U = \{\mathbf{u} : \mathbf{u}\cdot\mathbf{x} = 0\text{ whenever } A\mathbf{x} \ge \mathbf{0}\}$.
Clearly, $\dim(U)$ and the intrinsic dimension sum to the ambient dimension.
If $\mathbf{a}_i$ are the vectors obtained from the Minkowski-Weyl theorem, $\mathbf{y} \in U$ if and only if $\mathbf{y}$ is orthogonal to every $\mathbf{a}_i$.
\begin{lemma}\label{lem:compute basis orth space}
    Given a matrix $A \in\intg^{s \times t}$, one can compute a basis for the linear subspace $U = \{\mathbf{u} \in \rat^t : \mathbf{u} \cdot \mathbf{x} = 0\text{ whenever } A\mathbf{x} \ge \mathbf{0}\}$.
\end{lemma}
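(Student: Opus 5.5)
The plan is to compute a finite generating set of the recession cone and then take its orthogonal complement by Gaussian elimination. The whole argument rests on the characterisation stated just before the lemma: if $\mathbf{a}_1,\dots,\mathbf{a}_r$ are the Minkowski--Weyl generators of $C = \{\mathbf{x} \in \rat^t : A\mathbf{x} \ge \mathbf{0}\}$, then $\mathbf{u} \in U$ if and only if $\mathbf{u}\cdot\mathbf{a}_i = 0$ for all $i$.

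First, I would check this characterisation directly. If $\mathbf{u} \in U$, then $\mathbf{u}$ is orthogonal to each $\mathbf{a}_i \in C$. Conversely, if $\mathbf{u}\cdot\mathbf{a}_i = 0$ for all $i$, then $\mathbf{u}$ is orthogonal to every $\rat_{\ge 0}$-combination of the $\mathbf{a}_i$, and these are exactly the elements of $C$. Hence
\[
U = \{\mathbf{u} : M\mathbf{u} = \mathbf{0}\},
\]
where $M$ is the $r\times t$ matrix whose rows are $\mathbf{a}_1^{\top},\dots,\mathbf{a}_r^{\top}$.

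The computation then has two steps.
\begin{enumerate}
\item Compute the generators $\mathbf{a}_1,\dots,\mathbf{a}_r$ from $A$ using the double description method. Since $A$ is integral, these can be taken rational, indeed integral after scaling. If $C$ contains a lineality space, the method returns generators of that space together with their negatives; the $\rat_{\ge 0}$-span is unchanged, so the characterisation above still holds.
\item Compute a basis of the kernel of $M$ over $\rat$ by Gaussian elimination: bring $M$ to reduced row echelon form and read off one basis vector per free variable. All arithmetic is exact rational arithmetic, so this is effective.
\end{enumerate}

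If one prefers to avoid relying on double description as a black box, there is a more elementary route. Note that $U = C^\perp = (\operatorname{span} C)^\perp$, and $\operatorname{span} C$ equals $\{\mathbf{x} : A_I\mathbf{x} = \mathbf{0}\}$. Here $I$ is the set of rows $i$ whose constraint is implicitly an equality on $C$, that is, rows with $A_i\mathbf{x} = 0$ for every $\mathbf{x} \in C$. The set $I$ can be found row by row by deciding feasibility of the linear program $A\mathbf{x} \ge \mathbf{0}$, $A_i\mathbf{x} \ge 1$; since $C$ is a cone, this scaling is harmless. With $I$ known, $U$ is the row space of $A_I$, and a basis of it follows by Gaussian elimination.

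The only step that needs care is justifying $\operatorname{span} C = \ker A_I$. I would take a point $\mathbf{x}_0 \in C$ with $A_j\mathbf{x}_0 > 0$ for all $j \notin I$, obtained as a sum of witnesses for each such $j$. For any $\mathbf{y} \in \ker A_I$ and small enough $\varepsilon > 0$, the point $\mathbf{x}_0 + \varepsilon\mathbf{y}$ lies in $C$. Therefore $\mathbf{y} = \varepsilon^{-1}\bigl((\mathbf{x}_0+\varepsilon\mathbf{y}) - \mathbf{x}_0\bigr)$ lies in $\operatorname{span} C$, which gives $\ker A_I \subseteq \operatorname{span} C$. The reverse inclusion is immediate from the definition of $I$. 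Apart from this step, everything is routine exact linear algebra.
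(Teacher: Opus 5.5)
Your main route is correct and is the one the paper indicates in the paragraph preceding the lemma: compute the Minkowski--Weyl generators $\mathbf{a}_1,\dots,\mathbf{a}_r$ by the double description method, then obtain $U$ as the kernel of the matrix with rows $\mathbf{a}_i^\top$, since $\mathbf{u}\in U$ exactly when $\mathbf{u}$ is orthogonal to every generator. Your alternative through the implicit-equality rows $I$, which gives $U$ as the row space of $A_I$, is also correct and avoids double description as a black box, but the lemma does not need it.
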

\begin{definition}\label{def : thin}
    The polyhedron $A\mathbf{x} \sim \mathbf{v}$ is thin if there is a number $f$ and vector $\mathbf{u} \in \rat^t \setminus \{\mathbf{0}\}$ such that $|\mathbf{u}\cdot\mathbf{x}| < f$ whenever $A\mathbf{x} \sim \mathbf{v}$.
\end{definition}
The following observation is proven in Appendix~\ref{app:Baker}.
\begin{lemma}\label{lem:thinness computable}
    One can determine whether a polyhedron is thin, and if so, one can compute a witness $\mathbf{u}$ and $f$. 
\end{lemma}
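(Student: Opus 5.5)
The plan is to reduce thinness to a property of the recession cone together with a boundedness check, and to carry everything out by linear programming.

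\emph{Reduction.} Call the polyhedron $P = \{\mathbf{x} : A\mathbf{x} \sim \mathbf{v}\}$. I first dispose of the case $P = \emptyset$, which is decidable by linear programming. When $P$ is empty, thinness holds vacuously with any nonzero $\mathbf{u}$ and any $f$ (say $\mathbf{u} = \mathbf{e}_1$, $f=1$), assuming $t \ge 1$. So assume $P \neq \emptyset$. I claim that $P$ is thin if and only if the orthogonal complement $U$ of the recession cone $C = \{\mathbf{x} : A\mathbf{x} \ge \mathbf{0}\}$ is nonzero. For vectors in $U$, boundedness of $|\mathbf{u}\cdot\mathbf{x}|$ on $P$ should then follow.

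\emph{Thin implies $U \neq \{\mathbf{0}\}$.} Suppose $|\mathbf{u}\cdot\mathbf{x}|<f$ on $P$. Take $\mathbf{x}_0 \in P$ and $\mathbf{c} \in C$. Then $\mathbf{x}_0 + \lambda\mathbf{c} \in P$ for all $\lambda \ge 0$, since $A(\mathbf{x}_0+\lambda \mathbf{c}) \sim \mathbf{v}$ holds for both $\ge$ and $>$. Hence $\mathbf{u}\cdot(\mathbf{x}_0+\lambda\mathbf{c})$ is bounded in $\lambda$, which forces $\mathbf{u}\cdot\mathbf{c}=0$. So $\mathbf{u} \in U\setminus\{\mathbf{0}\}$.

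\emph{$U \neq \{\mathbf{0}\}$ implies thin.} Here I use the decomposition of the closure $\bar P = \{A\mathbf{x}\ge\mathbf{v}\}$ as a Minkowski sum $Q + C$ with $Q$ a polytope (Minkowski--Weyl for polyhedra). Then every $\mathbf{x}\in P\subseteq \bar P$ can be written $\mathbf{x}=\mathbf{q}+\mathbf{c}$, and for $\mathbf{u}\in U$ we get $\mathbf{u}\cdot\mathbf{x} = \mathbf{u}\cdot\mathbf{q}$. This is bounded because $Q$ is bounded. Note that the recession cone of $\bar P$ is exactly $C$ when $P \neq \emptyset$, so the decomposition uses the same cone.

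\emph{Effectiveness.} By Lemma~\ref{lem:compute basis orth space} I compute a basis of $U$; the polyhedron is thin iff this basis is nonempty, and I take $\mathbf{u}$ to be any basis vector. To compute $f$, I solve the two linear programs $\max \mathbf{u}\cdot\mathbf{x}$ and $\min \mathbf{u}\cdot\mathbf{x}$ over $\bar P$. Both are bounded by the previous step, so their optimal values $M,m$ are rational and computable. Then $f := \max(|M|,|m|)+1$ works, since $P\subseteq\bar P$.

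The main point of care is the passage from the strict system to its closure: one must check that the closure has the same recession cone and contains $P$. Both hold when $P\neq\emptyset$, because the recession cone depends only on $A$. Alternatively, I could compute $Q$'s vertices directly by the double description method to obtain $f$. Everything else is routine linear programming.
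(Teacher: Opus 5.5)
Your proof is correct and follows essentially the same route as the paper: thinness is characterized by the orthogonal complement $U$ of the recession cone being nonzero. One direction uses the ray argument along recession directions, and the other uses the decomposition $\{A\mathbf{x}\ge\mathbf{v}\} = Q + C$ with $Q$ a polytope (Schrijver, Cor.~7.1b); you compute $f$ by linear programming, whereas the paper bounds the coordinates of $Q$. Your explicit handling of the empty case and of the passage from the strict system to its closure is slightly more careful than the paper, which tacitly assumes the polyhedron is nonempty.
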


\subsection{Kronecker's theorem on Diophantine approximation and Baker's theorem}
Let $\{x\}$ denote the \emph{fractional part} of $x$, i.e., $\{x\} \in [0, 1)$ is the real number such that $\{x\} \equiv x \pmod{1}$.
The following is a variation on Kronecker's theorem on Diophantine approximation, of which a simple version states that $\{n\theta: n \in \nat\}$ is dense in $(0,1)$ when $\theta$ is irrational (see, e.g.,~\cite[Theorem 439]{hardy1979introduction}).
The following consequence is proven in Appendix~\ref{app:Baker}.
\begin{theorem}\label{thm:kronecker dense}
    Let $\alpha, \beta \in \intg_{\ge2}$ be multiplicatively independent and $M, N \in \nat$. Then the set $\{\alpha^n/\beta^m : m \in \intg_{>M}, n \in \intg_{>N}\}$ is dense in $\rel_{>0}$.
\end{theorem}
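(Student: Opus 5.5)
Taking logarithms turns the statement into a claim about the additive set $\{n\log\alpha - m\log\beta\}$. A positive real $r$ is approximated by $\alpha^n/\beta^m$ exactly when $\log r$ is approximated by $n\log\alpha - m\log\beta$, and $\exp$ is a homeomorphism from $\rel$ onto $\rel_{>0}$. So it suffices to show that $S := \{n\log\alpha - m\log\beta : n > N, m > M\}$ is dense in $\rel$. Put $\theta := \log\alpha/\log\beta$. This is irrational: if $\theta = p/q$ with integers $p,q \ge 1$, then $\alpha^q = \beta^p$, which contradicts multiplicative independence. (Both logarithms are positive because $\alpha,\beta\ge 2$.)

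We factor out $\log\beta$ and prove that $\{n\theta - m : n>N, m>M\}$ is dense in $\rel$. Fix a target $x\in\rel$ and $\varepsilon>0$. Write $x = k + y$ with $k\in\intg$ and $y = \{x\}\in[0,1)$. We use the version of Kronecker's theorem in which the index is restricted to a tail: $\{\{n\theta\} : n > n_0\}$ is dense in $[0,1)$ for every $n_0$. This follows from the basic version. The basic version gives some $d\ge1$ with $\{d\theta\}$ or $1-\{d\theta\}$ in $(0,\varepsilon)$. The multiples $jd$ for $j$ large then step around the circle in increments smaller than $\varepsilon$, so some $n = jd > n_0$ has $\{n\theta\}$ within $\varepsilon$ of $y$.

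For such $n$ we have $n\theta = \lfloor n\theta\rfloor + \{n\theta\}$. Set $m := \lfloor n\theta\rfloor - k$. Then $n\theta - m = k + \{n\theta\}$, which lies within $\varepsilon$ of $x$. It remains to ensure $n > N$ and $m > M$. Choose $n_0 \ge N$ large enough that $\lfloor n\theta\rfloor - k > M$ for all $n > n_0$; this is possible since $\theta>0$ and $\lfloor n\theta\rfloor\to\infty$. Both constraints then hold. Multiplying by $\log\beta$ and exponentiating gives $\alpha^n/\beta^m$ within a factor $e^{\varepsilon\log\beta}$ of any prescribed positive real, which is the claimed density.

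The only step needing real care is the tail version of Kronecker's theorem, and the stepping argument above handles it. Everything else is bookkeeping with floors, chosen so that $m$ is forced past $M$ by taking $n$ large.
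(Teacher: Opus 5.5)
Your proof is correct and follows the paper's route: take logarithms, use multiplicative independence to show $\theta=\log\alpha/\log\beta$ is irrational, deduce that $\{n\theta-m : n>N,\ m>M\}$ is dense in the reals, and exponentiate. The only difference is that the paper asserts the tail-restricted density directly from Kronecker's theorem. You justify it explicitly, with the stepping argument for the tail and the choice $m=\lfloor n\theta\rfloor-k$ to force $m>M$.
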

Baker's theorem is a powerful tool from number theory. To define it, we need to define the \emph{logarithmic (Weil) height} $h(x)$ of a rational number $x$, which, when writing $x = p/q$ in lowest terms, is just $\max(\log|p|,\log|q|)$. 
We give the following special case of Baker's theorem, which follows from the work of Matveev~\cite{matveev2000explicit}.
\begin{theorem}\label{thm:baker}
    Let $\alpha,\beta \ge 2$ be multiplicatively independent integers.
    Then there is a computable constant $C$ with the following property. If $k_\alpha,k_\beta \in \nat$, $c \in \rat_{\ne0}$, $N = \max(3, |k_\alpha|, |k_\beta|)$, and $\Lambda = c \alpha^{k_\alpha}\beta^{k_\beta}-1$, then $\Lambda = 0$ or $|\Lambda| > N^{-Ch(c)}\,.$
\end{theorem}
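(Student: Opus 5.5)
The statement to prove is Theorem~\ref{thm:baker}. It is a specialisation of Matveev's explicit lower bound for linear forms in logarithms, so the plan is to pass from $\Lambda$ to a linear form in three logarithms, apply Matveev, and translate the resulting bound back. Assume $\Lambda \neq 0$. If $|\Lambda| \ge 1/2$, the claimed bound is trivial once $C \ge 1$: we have $N \ge 3$ and $h(c) \ge \log 2$ whenever $c \ne \pm 1$. The cases $c = \pm1$ need a separate remark. For $c = -1$, $\Lambda \le -2$. For $c = 1$, we have $\Lambda \ge 1$ unless $k_\alpha = k_\beta = 0$, in which case $\Lambda = 0$; and since $k_\alpha, k_\beta \in \nat$, the product is an integer $\ge 1$. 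So we may assume $|\Lambda| < 1/2$, which forces $c > 0$ and $c \notin \{\pm1\}$, hence $h(c) \ge \log 2$.

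Under that assumption, set $L = \log c + k_\alpha \log\alpha + k_\beta \log\beta$, so that $\Lambda = e^L - 1$. The elementary inequality $|e^L - 1| \ge |L|/2$ holds for $|e^L-1|<1/2$, because then $|L| \le \log 2 < 1$. Also $L \ne 0$ since $\Lambda \neq 0$. Matveev's theorem~\cite{matveev2000explicit} applies to the three rational numbers $c, \alpha, \beta$ with integer exponents $1, k_\alpha, k_\beta$. It gives
\[
  \log|L| > -C_0 \, A_c A_\alpha A_\beta \, (1 + \log B),
\]
where $C_0$ is an explicit absolute constant (depending on the degree $1$ and the number $3$ of logarithms), each $A_x \ge \max(h(x), |\log x|, 0.16)$, and $B \ge \max(1, |k_\alpha|, |k_\beta|)$. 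We may take $B = N$, $A_\alpha = \log\alpha$, $A_\beta = \log\beta$ (both constants), and $A_c = \max(h(c), |\log c|, 0.16)$. Since $c$ is rational, $|\log c| \le h(c)$, and $h(c) \ge \log 2 > 0.16$, so $A_c = h(c)$.

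Because $N \ge 3$, we have $1 + \log N \le 2\log N$. This yields
\[
  |\Lambda| \ge |L|/2 > \tfrac12 \exp\bigl(-2C_0\log\alpha\log\beta\, h(c)\log N\bigr) = \tfrac12 N^{-C_1 h(c)},
\]
with $C_1 = 2C_0\log\alpha\log\beta$. To absorb the factor $1/2$, use $h(c) \ge \log 2$ and $\log N \ge \log 3$: these give $\tfrac12 \ge N^{-h(c)\cdot \log 2/(\log 2\log 3)}$. So $C = C_1 + 1/\log 3 + 1$, say, is computable and works.

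The main care point is the bookkeeping of Matveev's hypotheses rather than any real obstacle. Multiplicative independence of $\alpha,\beta$ is not even required for Matveev's bound itself, which needs only $L \ne 0$. One must still check that the constant depends only on $\alpha,\beta$, and handle the degenerate small cases ($c = \pm 1$, and $|\Lambda|$ large) separately as above.
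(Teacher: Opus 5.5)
Your route is the one the paper intends; the paper gives no argument beyond citing Matveev. The following steps of your derivation are all correct: the reduction to $0<|\Lambda|<\tfrac12$ with $c>0$, the estimate $|e^L-1|\ge |L|/2$, the choices $A_c=h(c)$, $A_\alpha=\log\alpha$, $A_\beta=\log\beta$, $B=N$, and absorbing the factor $\tfrac12$ via $N^{-h(c)/\log 3}\le \tfrac12$.

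\textbf{The step that fails: $c=1$.} There $h(c)=0$, so the theorem demands the strict inequality $|\Lambda|>N^{0}=1$, and your observation $\Lambda\ge 1$ does not give it. No argument can repair this, because the statement as written is false there. Take $\alpha=2$, $k_\alpha=1$, $k_\beta=0$, $c=1$. Then $\Lambda=1\neq 0$, while $N^{-Ch(c)}=1$ for every $C$. You should have flagged this and proved a corrected form, for instance with $\max(h(c),1)$ in place of $h(c)$, or with a non-strict inequality. Your Matveev argument gives that version essentially verbatim.

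\textbf{Why the correction matters downstream.} The fix is not merely cosmetic. In the proof of Lemma~\ref{lem:Baker helper reduce}, the theorem is applied to $c\,\beta^{k_\beta}\alpha^{-k_\alpha}-1$, which has a negative exponent. There $c=1$ is possible (when $x_\beta=-x_\alpha$), and by Theorem~\ref{thm:kronecker dense} that quantity can be arbitrarily small. So the $h(c)=0$ version fails badly in exactly the use the paper makes of it. The $\max(h(c),1)$ version still holds: Matveev allows arbitrary integer exponents, and for $c=1$ you take $A_c=0.16$ or drop that logarithm. It is also all that lemma needs, since there $h(c)$ is only bounded from above.
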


In~\cite{karimov2025decidability}, Baker's theorem was applied to certain \emph{exponential Diophantine equations}.
\begin{theorem}\label{thm:eq with only pure powers}
    Let $\alpha, \beta \ge 2$ be multiplicatively independent, $r \ge 0$, $c_1,\dots,c_r,c \in \intg$, and $z_1,\dots,z_r \in \{\alpha, \beta\}$. 
    Then one can compute $\intg$-modules $W_1,\dots,W_\ell$ of $\intg^r$ such that
    \begin{equation}\label{eq:Soda 2025}
        c_1z_1^{k_1} + \cdots + c_rz_r^{k_r} = c
    \end{equation}
    if and only if $(k_1,\dots,k_r) 
    \in \nat^r$ is in some $W_j$.
    For each $\intg$-module $W_j$, one can compute a partition $\mathcal{I}_1,\mathcal{I}_2,\mathcal{I}_3$ of $\{1,\dots,r\}$, a map $\sigma \colon \mathcal{I}_2 \to \mathcal{I}_1$, and $f_i \in \nat$ for all $i \in \mathcal{I}_2\cup\mathcal{I}_3$ such that $(k_1,\dots,k_r) \in W_j$ if and only if $k_i = k_{\sigma(i)} + f_i$ for all $i\in \mathcal{I}_2$ and $k_i = f_i$ for all $i\in \mathcal{I}_3$.
\end{theorem}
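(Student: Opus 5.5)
We argue by strong induction on $r$, proving the stronger statement that the solution set in $\nat^r$ of \eqref{eq:Soda 2025} is a finite union of sets of the described shape: each set is cut out by $k_i = k_{\sigma(i)} + f_i$ for $i \in \mathcal{I}_2$ and $k_i = f_i$ for $i \in \mathcal{I}_3$, and each can be computed. The base cases are straightforward. For $r=0$ the equation is $0=c$, so the solution set is either everything or empty. For $r=1$ we get $c_1 z_1^{k_1}=c$, which has at most one solution, computable by trial division, unless $c_1=c=0$.

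For the inductive step we first handle vanishing subsums. Every solution falls into one of two cases, according to which subsets of the terms sum to zero. In the first case some proper nonempty subset $S \subsetneq \{1,\dots,r\}$ satisfies $\sum_{i\in S} c_i z_i^{k_i}=0$. Then the complementary sum equals $c$, and we apply induction to both equations separately and take the product of the two resulting families. This is legitimate because the two equations involve disjoint variables and each yields constraints of the required shape. The second case is the \emph{non-degenerate} one: no proper subsum of the left-hand side vanishes, and if $c=0$ we also require that no proper nonempty subsum vanishes. The heart of the proof is to show that in this case all exponents are bounded by a computable constant $B$. The non-degenerate solutions are then finitely many explicit points, which are sets with $\mathcal{I}_3=\{1,\dots,r\}$. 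We also drop terms with $c_i=0$ at the start; their exponents are free, so we put them in $\mathcal{I}_1$.

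Within each power class $z\in\{\alpha,\beta\}$, terms with the same base and the same exponent can be merged. The solution set coming from a vanishing sum $\sum_{i\in S} c_i z^{k_i}=0$ within one base needs care. Since $\alpha,\beta\ge 2$, a vanishing sum of powers of a single base with fixed coefficients forces the exponent differences to be bounded. The reason is that the largest power dominates unless it is matched by terms of nearby size, so the differences are bounded by roughly $\log_z\sum|c_i|$. Such constraints give exactly the relations $k_i = k_{\sigma(i)} + f_i$. A vanishing sum mixing $\alpha$ and $\beta$ is handled by the bound argument below applied with $c=0$, after splitting off a minimal vanishing subsum. This produces the map $\sigma$ and the offsets $f_i$ after enumerating the finitely many offset patterns. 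Such pattern enumerations also terminate: after regrouping, each pattern reduces the number of free exponents, so induction applies.

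The main obstacle is the bound in the non-degenerate case. We use the standard Baker-type gap argument. Order the terms by size, $|c_1 z_1^{k_1}| \ge \cdots$, and let $K=\max k_i$. If $c\neq 0$ and every term is small, $K$ is bounded trivially. Otherwise we find the first large gap in the sizes: an index $j$ with $|c_{j+1}z_{j+1}^{k_{j+1}}|$ much smaller than $|c_jz_j^{k_j}|$, with all earlier consecutive ratios bounded. The dominant part $\sum_{i\le j}$ is then tiny relative to its largest term. Within a base we factor out the largest power, writing $\sum_{i\le j, z_i=\alpha} c_i\alpha^{k_i}=a\,\alpha^{m}$ with $h(a)$ bounded because the ratios are bounded; similarly we get $b\,\beta^{n}$. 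The dominant part becomes $a\alpha^m+b\beta^n$, or a single-base block. The single-base block is either zero, which is excluded by non-degeneracy, or of size at least comparable to its largest term. In the mixed case, $|(-a/b)\alpha^m\beta^{-n}-1|$ is exponentially small in $K$, at most about $\min(\alpha,\beta)^{-(K-k_{j+1})}$ up to a constant factor. Theorem~\ref{thm:baker} bounds this quantity below by $K^{-C h}$, which forces the gap $K-k_{j+1}$ to be $O(\log K)$. The expression cannot be zero, because it would then be a vanishing proper subsum. Iterating the argument down the chain of gaps, and handling the constant $c$ as the final term, gives $K = O((\log K)^{r})$, and hence an explicit computable bound $B$. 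Everything is effective because $C$ is computable and the finitely many bounded cases can be enumerated. The delicate point is bookkeeping the heights $h(a),h(b)$ as the gaps accumulate, so that at each step they are at most polylogarithmic in $K$. We would follow the template of~\cite{karimov2025decidability} for this.
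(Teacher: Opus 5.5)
The paper does not prove this theorem; it imports it from \cite{karimov2025decidability}, so I am judging your argument on its own terms. Your overall plan is the standard one: peel off vanishing subsums by induction, then run a Baker gap-chain on non-degenerate solutions. It is sound when $c\neq 0$, because there the nonzero constant anchors the last step of the chain. There are two problems, both in the homogeneous case $c=0$.

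\textbf{The dichotomy is misstated.} For a single-base equation with $c=0$, non-degenerate solutions need not be bounded. For example, $2^{k_1}-2^{k_2}=0$ has no vanishing proper subsum, yet it holds for all $k_1=k_2$. Such solutions must go to your bounded-differences analysis, not to the claim that all exponents are at most $B$. That analysis gives one free index, and you should take it to be the index of smallest exponent so that $f_i\in\nat$.

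\textbf{Genuine gap: the mixed case with $c=0$.} For a minimal vanishing subsum involving both bases, ``the bound argument below applied with $c=0$'' does not bound $K$. The gap-chain only shows that all terms lie within a factor $\exp(\mathrm{polylog}\,K)$ of one another. Its final block is $a\alpha^m+b\beta^n=0$, so $\Lambda=0$ and Baker's theorem says nothing. Since $c=0$, there is no constant term left to anchor $K$.

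You need one more step. The identity $\alpha^m\beta^{-n}=-b/a$ holds with $h(-b/a)$ polylogarithmic in $K$. Combine it with an effective lower bound $h(\alpha^m\beta^{-n})\ge\delta\max(m,n)$ for some $\delta>0$. This bounds $\max(m,n)$, and hence $K$, by a polylogarithm of $K$, so $K$ is bounded.
\begin{itemize}
\item For coprime bases the lower bound is immediate, since $h(\alpha^m\beta^{-n})=\max(m\log\alpha,n\log\beta)$.
\item For bases that are merely multiplicatively independent, it follows because the vectors $(v_p(\alpha))_p$ and $(v_p(\beta))_p$ of $p$-adic valuations are linearly independent.
\end{itemize}
The paper's proof of Lemma~\ref{lem:Baker to eliminate bottom family modular constraints} faces exactly this anchoring problem in its $c=0$ case. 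It resolves it by a divisibility argument that uses coprimality, which is one of the places the conclusion flags as depending on it.
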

We add the following addendum to this theorem whose proof is in Appendix~\ref{app:Baker}.
\begin{lemma}\label{lem::soda observation}
    If at least one $c_i$ is non-zero, we must have $|\mathcal{I}_1| < r$.
\end{lemma}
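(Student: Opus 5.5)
Suppose, for a contradiction, that some $W_j$ produced by Theorem~\ref{thm:eq with only pure powers} has $|\mathcal{I}_1| = r$. Then $\mathcal{I}_2 = \mathcal{I}_3 = \emptyset$, and the description of $W_j$ imposes no constraints, so $W_j \cap \nat^r = \nat^r$. By the ``if and only if'' in Theorem~\ref{thm:eq with only pure powers}, every $(k_1,\dots,k_r) \in \nat^r$ would then satisfy~\eqref{eq:Soda 2025}. (If $r=0$ there is no $c_i$ at all, so the hypothesis that some $c_i \ne 0$ forces $r \ge 1$; we may assume this.) The plan is to show that this universal validity contradicts the assumption that some $c_i$ is non-zero. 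Note that we may take the modules $W_j$ as given; we only need to exploit what membership in them means.

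The key step is to pick a convenient family of exponent vectors and compare two of them. Fix an index $i_0$ with $c_{i_0} \neq 0$. Take $\mathbf{k} = \mathbf{0}$ and $\mathbf{k}' = e_{i_0}$, the vector with a $1$ in coordinate $i_0$ and $0$ elsewhere. Both lie in $\nat^r$, so both satisfy~\eqref{eq:Soda 2025}. Subtracting the two equations gives
\[
c_{i_0}\bigl(z_{i_0}^{1} - z_{i_0}^{0}\bigr) = 0, \quad\text{i.e.}\quad c_{i_0}(z_{i_0}-1)=0 .
\]
Since $z_{i_0} \in \{\alpha,\beta\}$ and $\alpha,\beta \ge 2$, we have $z_{i_0}-1 \ge 1$, so $c_{i_0} = 0$, a contradiction.

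The only point needing care is the interpretation of the characterisation. When $\mathcal{I}_2=\mathcal{I}_3=\emptyset$, the conditions in Theorem~\ref{thm:eq with only pure powers} are vacuous, so membership in $W_j$ places no restriction on the natural-number vectors. I expect no real obstacle beyond making this explicit, and the argument does not even need multiplicative independence.
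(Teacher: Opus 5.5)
Your proposal is correct and follows the paper's proof essentially verbatim: assuming $|\mathcal{I}_1| = r$ forces $\mathcal{I}_2 = \mathcal{I}_3 = \emptyset$, so every exponent vector in $\nat^r$ solves~\eqref{eq:Soda 2025}, and subtracting the equations for $\mathbf{0}$ and for the unit vector at an index $i$ with $c_i \neq 0$ gives $c_i(z_i - 1) = 0$, contradicting $z_i \ge 2$. Your explicit remark that the hypothesis forces $r \ge 1$ is a small detail that the paper leaves implicit.
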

We need two technical generalisations of Theorem~\ref{thm:eq with only pure powers}.
In these, the constants $c_i$ are variable but still relatively small.
The proofs are in Appendix~\ref{app:Baker}.
\begin{lemma}\label{lem:Baker helper reduce}
    Let $\alpha, \beta \ge 2$ be coprime, $C_1 \in \rat_{>1}$, $C_2, C_3 \in \rat$, $Z\in\{\alpha,\beta\}$, and $p_\alpha,p_\beta \in \rat[X]$. 
    We can compute a $p \in \rat[X]$ such that for all $x_\alpha,x_\beta \in \intg$ and $m, k_\alpha, k_\beta, \ell_\alpha, \ell_\beta \in \nat$ we have that $\ell_Z - m < p(\log\ell_Z)$ whenever
    \begin{enumerate}
        \item $\alpha^{\ell_\alpha}/\beta^{\ell_\beta} \in (1/C_1, C_1)$;
        \item $|x_z| < C_2 z^{\ell_z-k_z}$ for $z \in \{\alpha, \beta\}$;
        \item $\ell_z - k_z < p_z(\log\ell_z)$ for $z \in \{\alpha, \beta\}$; and
        \item $0< \big|x_\alpha \alpha^{k_\alpha} + x_\beta \beta^{k_\beta} \big| < C_3Z^{m}$.
    \end{enumerate}
\end{lemma}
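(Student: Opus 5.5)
The idea is to show that the nonzero quantity $S := x_\alpha\alpha^{k_\alpha}+x_\beta\beta^{k_\beta}$ in condition~4 has absolute value at least $Z^{\ell_Z}\cdot e^{-q(\log \ell_Z)}$ for some computable polynomial $q$. Combined with $|S|<C_3Z^m$ and taking logarithms to base $Z$, this gives $\ell_Z-m < q(\log\ell_Z)/\log Z + \log_Z C_3$. That is the required polynomial $p$.

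First I would normalise scales. By condition~1, $|\ell_\alpha\log\alpha-\ell_\beta\log\beta|<\log C_1$. Hence $\ell_\alpha$ and $\ell_\beta$ agree up to a constant factor and an additive constant, so $\log\ell_\alpha$ and $\log\ell_\beta$ differ by a bounded amount. Any polynomial in $\log\ell_\alpha$ is therefore bounded by a computable polynomial in $\log\ell_Z$, and likewise with the roles swapped. In particular $\alpha^{\ell_\alpha}$ and $\beta^{\ell_\beta}$ are both within a factor $C_1$ of $Z^{\ell_Z}$. Next, for each $z$ with $x_z\neq0$ I would bound the term $T_z:=x_zz^{k_z}$ from both sides. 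From below, $|x_z|\ge1$ and condition~3 give $|T_z|\ge z^{k_z}> z^{\ell_z-p_z(\log\ell_z)}$. From above, condition~2 gives $|T_z|<C_2z^{\ell_z}$. So every nonzero term equals $Z^{\ell_Z}$ up to a factor $e^{\pm O(\mathrm{poly}(\log\ell_Z))}$. Conditions~2 and~3 together also give $|x_z|<C_2z^{p_z(\log\ell_z)}$, hence $h(x_z)=O(\mathrm{poly}(\log\ell_Z))$.

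Now I would split into cases. If exactly one of $x_\alpha,x_\beta$ is zero, then $|S|=|T_z|$ for the other $z$, and the lower bound above already finishes the proof. (Both cannot vanish, since $S\neq0$.) If both are nonzero, I write
\[
S=-T_\beta\Lambda,\qquad \Lambda=c\,\alpha^{k_\alpha}\beta^{-k_\beta}-1,\qquad c=-x_\alpha/x_\beta .
\]
Then $\Lambda\neq0$ because $S\neq0$, and $h(c)\le h(x_\alpha)+h(x_\beta)=O(\mathrm{poly}(\log\ell_Z))$. Applying Theorem~\ref{thm:baker} gives $|\Lambda|>N^{-Ch(c)}$ with $N=\max(3,k_\alpha,k_\beta)=O(\ell_Z)$. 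This yields $|\Lambda|>\exp(-O(\mathrm{poly}(\log\ell_Z))\cdot\log\ell_Z)$, and therefore $|S|=|T_\beta||\Lambda|\ge Z^{\ell_Z}e^{-q(\log\ell_Z)}$, as desired. I would make all constants explicit so that $q$, and hence $p$, is computable from $C_1,C_2,C_3,p_\alpha,p_\beta$ and the constant $C$ of Theorem~\ref{thm:baker}. Finally, I would enlarge $p$ by a constant so that it also covers the finitely many small values of $\ell_Z$, where the asymptotic estimates may fail.

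The main obstacle is the Baker step. Theorem~\ref{thm:baker} is stated with $k_\alpha,k_\beta\in\nat$, but here the exponent of $\beta$ is negative. Moving $\beta^{k_\beta}$ into $c$ would destroy the height bound. So I would first check that the cited Matveev bound applies verbatim to integer exponents; the statement itself already uses $|k_\alpha|$ and $|k_\beta|$ in $N$, which suggests it does. Otherwise I would rederive this special case directly from Matveev's theorem for $\Lambda=c\,\alpha^{k_\alpha}\beta^{-k_\beta}-1$. A secondary point to watch is the bookkeeping that turns $N^{-Ch(c)}$ into $e^{-\mathrm{poly}(\log\ell_Z)}$: this needs $h(c)$ to be polylogarithmic, which is exactly what conditions~2 and~3 give. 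Coprimality of $\alpha$ and $\beta$ does not seem essential in this argument beyond multiplicative independence, which Theorem~\ref{thm:baker} uses.
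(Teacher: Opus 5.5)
Your proposal is correct and follows essentially the same route as the paper. You handle the case where one of $x_\alpha, x_\beta$ is zero via conditions~1 and~3; otherwise you apply Theorem~\ref{thm:baker} to the ratio of the two terms (the paper divides by the $\alpha$-term after assuming $Z=\alpha$ by symmetry, a cosmetic difference), bound $h(c)$ polylogarithmically via conditions~2--3 and $\log N$ via condition~1, and take logarithms. The negative exponent you flag also occurs, unremarked, in the paper's proof. It is harmless because Matveev's bound holds for integer exponents, as the $|k_\alpha|,|k_\beta|$ in Theorem~\ref{thm:baker} suggests, and you are right that coprimality is not used in this lemma.
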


\begin{lemma}\label{lem:Baker to eliminate bottom family modular constraints}
    Let $\alpha, \beta \ge 2$ be coprime, $z_1,\dots,z_r\in\{\alpha,\beta\}$, $c_1,\dots,c_r \in \intg_{\ne0}$, $c \in\intg$, $C_1\in \rat_{>1}$, $C_2 \in \rat$, and $p_\alpha,p_\beta\in \rat[X]$. 
    We can compute $N \in \nat$ such that for all $k_\alpha,k_\beta, \ell_\alpha, \ell_\beta$, $n_1,\dots,n_r \in \nat$ and $x_\alpha,x_\beta\in\intg$ such that $x_\alpha \alpha^{k_\alpha} + x_\beta \beta^{k_\beta}+c_1z_1^{n_1} + \cdots + c_rz_r^{n_r} + c = 0$,
    \begin{enumerate}
        \item\label{Baker cond 1} $z_1^{n_1} < \cdots < z_r^{n_r} < \alpha^{k_\alpha}, \beta^{k_\beta}$ and $z^{k_z} \le z^{\ell_z}$ for $z \in \{\alpha, \beta\}$;
        \item\label{Baker cond 2} $\alpha^{\ell_\alpha}/\beta^{\ell_\beta} \in (1/C_1, C_1)$;
        \item\label{Baker cond 3} $|x_z| < C_2 z^{\ell_z - k_z}$  for $z \in \{\alpha, \beta\}$;
        \item\label{Baker cond 4} $\ell_z - k_z < p_z(\log\ell_z)$ for $z \in \{\alpha, \beta\}$; and
        \item\label{Baker cond 5} No non-empty proper subsum of the multiset $\{x_\alpha \alpha^{k_\alpha}, x_\beta \beta^{k_\beta},c_1z_1^{n_1}, \ldots, c_rz_r^{n_r}\}$ is zero,
    \end{enumerate}
    we have that $k_\alpha,k_\beta, \ell_\alpha, \ell_\beta,n_1,\dots,n_r < N$.
\end{lemma}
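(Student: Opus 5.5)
We argue by induction on $r$, in the spirit of the proof of Theorem~\ref{thm:eq with only pure powers}: we peel off the largest terms using Baker's theorem (Theorem~\ref{thm:baker}) and use Lemma~\ref{lem:Baker helper reduce} to reach lower levels. Throughout, $L := \max(\ell_\alpha,\ell_\beta)$. By condition~\ref{Baker cond 2}, $\ell_\alpha\log\alpha$ and $\ell_\beta\log\beta$ differ by at most $\log C_1$. By conditions~\ref{Baker cond 1} and~\ref{Baker cond 4}, $k_\alpha,k_\beta$ are within $O(p(\log L))$ of $\ell_\alpha,\ell_\beta$. So the two large terms $x_\alpha\alpha^{k_\alpha}$ and $x_\beta\beta^{k_\beta}$ both have size roughly $\alpha^{\ell_\alpha}\asymp\beta^{\ell_\beta}$, with coefficients $x_z$ of height $O(p_z(\log L))$ by condition~\ref{Baker cond 3}. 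It suffices to bound $L$: then the $k_z$ are bounded by condition~\ref{Baker cond 1}, and the $n_i$ are bounded because $z_i^{n_i} < \alpha^{k_\alpha}$.

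\textbf{Step 1 (the top two terms).} Set $S := x_\alpha\alpha^{k_\alpha}+x_\beta\beta^{k_\beta}$. By condition~\ref{Baker cond 5}, $S \neq 0$ whenever $r\ge1$ or $c\ne0$. If $r=0$ and $c=0$, then $S=0$ is itself a proper subsum condition issue only in a trivial sense. In that case coprimality gives $\alpha^{k_\alpha}\mid x_\beta$, which is impossible for large $k_\alpha$, since $|x_\beta|$ grows only like $\beta^{p(\log L)}$, i.e.\ subexponentially. So assume $S\ne0$.

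Apply Lemma~\ref{lem:Baker helper reduce} with $Z$ equal to the base $z_r$ of the largest remaining term, and with $m=n_r$ (or $m=0$ when $r=0$, using $|S|=|c|$). From the equation, $|S| \le (|c_1|+\cdots+|c_r|+|c|)\,z_r^{n_r}$, so hypothesis~(4) holds with $C_3 = \sum|c_i|+|c|$. This yields $\ell_{z_r}-n_r < p(\log \ell_{z_r})$. In other words, the gap between the top level and the next-largest term is only polylogarithmic in $L$. When $r=0$ this already gives $\ell_Z < p(\log \ell_Z)$, hence $L$ is bounded.

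\textbf{Step 2 (induction).} Absorb $c_r z_r^{n_r}$ into one of the two big terms. If $z_r=\alpha$, replace $x_\alpha$ by $x_\alpha' := x_\alpha + c_r\alpha^{n_r-k_\alpha}$. This is only legitimate when $n_r \ge k_\alpha$, so we must handle the ordering. Instead, define new parameters $k'_\alpha := n_r$ and $x'_\alpha := x_\alpha\alpha^{k_\alpha-n_r}+c_r$, which is an integer. Then
\[
|x'_\alpha| < C_2\alpha^{\ell_\alpha-n_r}+|c_r| \le C_2'\alpha^{\ell_\alpha-n_r},
\qquad \ell_\alpha-k'_\alpha < p(\log\ell_\alpha).
\]
So conditions~\ref{Baker cond 1}--\ref{Baker cond 4} hold with $r-1$ small terms and new constants $C_2'$ and polynomial $p_\alpha' := p_\alpha + p$. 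Condition~\ref{Baker cond 5} is inherited: a vanishing subsum of the new multiset containing $x'_\alpha\alpha^{k'_\alpha}$ would yield a vanishing subsum of the old multiset containing both $x_\alpha\alpha^{k_\alpha}$ and $c_r\alpha^{n_r}$. That old subsum is proper unless it is the whole multiset. In the whole-multiset case the new equation is exactly the full equation, which is allowed. We also need $x'_\alpha\ne0$. If $x'_\alpha=0$, then $x_\alpha\alpha^{k_\alpha}+c_r\alpha^{n_r}$ is a vanishing proper subsum, contradicting condition~\ref{Baker cond 5}. Applying the induction hypothesis for $r-1$ with the computed constants then bounds everything. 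All constants and polynomials stay computable, so $N$ is computable.

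\textbf{Main obstacle.} The main obstacle is the bookkeeping in Step 2. The ordering requirement in condition~\ref{Baker cond 1}, $z_{r-1}^{n_{r-1}} < z^{k'_z}$, must hold for both $z$. This is fine for the absorbed base. For the other base we need $z_{r-1}^{n_{r-1}}<\beta^{k_\beta}$, which is already true. We must also check that Lemma~\ref{lem:Baker helper reduce} applies when the roles of $x$ and $k$ have changed, which requires that the inductive statement is phrased with arbitrary polynomials $p_z$. This is why the lemma is stated with the generality of $p_\alpha$ and $p_\beta$. Both points deserve care, but I expect them to go through.
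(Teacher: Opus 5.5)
Your proposal is correct and follows essentially the same route as the paper. Repeated applications of Lemma~\ref{lem:Baker helper reduce} give polylogarithmic gaps $\ell_{z_i}-n_i$; the small terms are then absorbed into the two leading terms to reduce to $r=0$; and the case $r=0$ is settled by coprimality when $c=0$ and by one more application of Lemma~\ref{lem:Baker helper reduce} with $m=0$ when $c\neq 0$. The only difference is organisational: you interleave each Baker step with its absorption as an induction on $r$, whereas the paper first computes all the polynomials $q_i$ and then absorbs every term at once.
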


\section{Proof of the main theorem}
\subsection{Reduction to systems of (in)equalities}\label{sec:reduction to system}
To start our proof of the main result, we transform an existential formula $\varphi$ in $\mathsf{FO}(\intg;<,+, V_\alpha, V_\beta)$ to a system of equalities, inequalities, and modular constraints with two kinds of unknowns: exponential ones of the form $z_j^{n_j}$ and linear-exponential ones of the form $x_{i,j}z_j^{n_j}$. 
Recall that a function $f \colon \intg^s \to \intg$ is called \emph{affine} if there are integers $a_1,\dots,a_s,a$ such that $f(x_1,\dots,x_s) = a+\sum_{i=1}^s a_i x_i$.
The B\"uchi predicate $V_z$ can be written as
\begin{equation}\label{eq:to solve}
        \exists n \in \nat, x'\in \intg \colon \bigvee_{1 \le a < z} x = (zx'+a)z^n \land y  =z^n\,.
\end{equation}

\begin{proposition}\label{prop:reduction to a system}
    Let $\alpha,\beta \in \intg_{\ge 2}$. Then  $\exists \mathsf{FO}(\intg;<,+,V_{\alpha},V_{\beta})$
    is decidable if one can determine whether for given ${\sim_i}\in \{>, =\}$, $c_i, c_i',d_i' \in \intg$, $r, s,s', s_1,\dots,s_r \in \nat$, affine functions $f_{i,j},f_{i,j}' \colon \intg^{s_j}\to\intg$, and $z_j \in \{\alpha, \beta\}$ it holds that
    \begin{equation}\label{eq:sys before mod}
    \begin{split}
        \exists n_1,\dots,n_r \in \nat, \mathbf{y}_1 &\in \intg^{s_1},\dots,\mathbf{y}_r \in \intg^{s_r}\colon\bigwedge_{i=1}^s \:\:\sum_{j=1}^r f_{i,j}(\mathbf{y}_j)z_j^{n_j} \,\sim_i\, c_i\\
        &\land \bigwedge_{i=1}^{s'} \:\:\sum_{j=1}^r f_{i,j}'(\mathbf{y}_j)z_j^{n_j} \equiv\, c_i'\pmod{d_i'}\,.
    \end{split}
    \end{equation}
\end{proposition}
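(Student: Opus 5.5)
Given an existential sentence $\exists x_1,\dots,x_k\colon\varphi$, I first normalise $\varphi$. Push negations inward and put $\varphi$ in disjunctive normal form. It suffices to decide each disjunct separately, so fix one conjunction of literals. The atoms are of the form $t_1<t_2$, $t_1=t_2$ and $V_z(t_1,t_2)$ for $z\in\{\alpha,\beta\}$, where the $t_i$ are terms built with $+$. Introducing fresh existential variables for subterms, I may assume every atom involves only variables and affine combinations of them. Negated order and equality literals are easy: $\lnot(t_1<t_2)$ becomes $t_2<t_1\lor t_1=t_2$, and $\lnot(t_1=t_2)$ becomes $t_1<t_2\lor t_2<t_1$. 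Redistributing into DNF, I again treat each disjunct separately.

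Next I eliminate the B\"uchi predicates. A positive literal $V_z(x,y)$ I replace using \eqref{eq:to solve}. For each such literal I introduce a fresh $n\in\nat$ and $x'\in\intg$, and a disjunction over $a\in\{1,\dots,z-1\}$ of $x=(zx'+a)z^n\land y=z^n$; distributing the disjunction gives finitely many systems. A negative literal $\lnot V_z(x,y)$ holds iff $x=0$, or $y\ne z^n$ for the unique $n$ with $x=(zx'+a)z^n$, $1\le a<z$. Every nonzero $x$ has such a decomposition, so $\lnot V_z(x,y)$ is equivalent to $x=0\lor\exists n,x',a\colon x=(zx'+a)z^n\land (y<z^n\lor y>z^n)$. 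I need to check that $V_z(0,y)$ is false under the paper's convention (no largest power divides $0$). If instead the convention is different, I handle the case $x=0$ accordingly.

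After this step every constraint is a linear (in)equality among the original integer variables, the new $x'$'s, and terms $z_j^{n_j}$ or $(zx'+a)z_j^{n_j}$. The original integer variables and the $x'$ are grouped as vectors. Linear terms with no power attached I absorb as coefficients of a dummy power $z_j^{n_j}$ with the equation $z_j^{n_j}=1$ (i.e.\ $n_j=0$, expressible as $z_j^{n_j}=1$ with the trivial $f=1$). Alternatively, I treat $n_j$ as forced to $0$, so a free variable $x$ appears as $f(\mathbf y_j)z_j^{n_j}$ with $z_j^{n_j}=1$. Each variable is then attached to one power, as the format $f_{i,j}(\mathbf y_j)z_j^{n_j}$ requires. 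I give each power index $j$ its own block $\mathbf y_j$: the $x'$ for $V$-atoms, and the free variables in the dummy block. Terms like $(zx'+a)z^n$ are then $f(\mathbf y_j)z_j^{n_j}$ with $f$ affine. Strict inequalities are oriented as ${>}$, and equalities are moved into the form $\sum\cdots=c_i$. No modular constraints are needed from this reduction, so $s'=0$ is allowed; the modular part of \eqref{eq:sys before mod} is only for later convenience.

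The main care point is bookkeeping rather than mathematics. First, each constraint must take the exact shape of a sum over $j$ of $f_{i,j}(\mathbf y_j)z_j^{n_j}$, with affine $f_{i,j}$ depending only on the block $\mathbf y_j$. Second, the translation must be uniform and effective, so the finitely many resulting systems can be enumerated. The original sentence is true iff one of them is satisfiable, which gives the claimed reduction.
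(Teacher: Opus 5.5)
Your reduction is correct, but it takes a different route from the paper.

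The paper handles the integer unknowns $y_1,\dots,y_t$ that are not attached to any power separately. It eliminates them one by one with a Cooper-style Presburger step. When an equality with a nonzero coefficient exists, it solves it. Otherwise it compares the tightest lower and upper bounds and splits into residue classes. That step is exactly where the constant-modulus constraints $\equiv c_i' \pmod{d_i'}$ in \eqref{eq:sys before mod} come from.

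You instead place all these unknowns in a dummy block $\mathbf{y}_{j_0}$ and pin $n_{j_0}=0$ with the equation $1\cdot z_{j_0}^{n_{j_0}}=1$. This is a legitimate instance of the format, since constant affine functions and $0\in\nat$ are allowed, so you may take $s'=0$.

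Your worry about $x=0$ resolves as you hoped. Every power of $z$ divides $0$, so $V_z(0,y)$ is false, consistent with \eqref{eq:to solve}. The paper uses the same equivalence $\lnot V_z(x,y)\iff x=0\lor\exists y'\,(V_z(x,y')\land y\neq y')$.

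Your route gives a shorter, fully rigorous proof of the proposition as stated, and it avoids the bound-comparison case analysis. The price is that eliminating the non-power integer unknowns is deferred rather than avoided. Fed into Proposition~\ref{prop:inequalities + mods}, your systems contain the pure-power equation $z_{j_0}^{n_{j_0}}=1$. Theorem~\ref{thm:eq with only pure powers} resolves it to $n_{j_0}=0$, which leaves ordinary integer unknowns. Remark~\ref{rem:reduce when equality} says to eliminate these ``as in Proposition~\ref{prop:reduction to a system}'' --- that is, by exactly the Presburger argument your proof skips. With your proof, that argument would have to be supplied at the later stage instead.
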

\begin{proof}
    We rewrite an existential formula $\varphi$ in this structure into conjunctive normal form. 
    Each atom is of the form $x = 0$, $x > 0$, or $V_z(x, y)$, where $z \in \{\alpha,\beta\}$ and $x$ and $y$ are linear combinations of existentially quantified variables in $\varphi$.
    We reduce to the existential positive fragment by writing the negations of these atoms as a positive combination:
    \begin{itemize}
        \item $\lnot V_z(x, y) \Longleftrightarrow  x = 0 \lor \exists y' \colon V_z(x, y') \land \lnot y = y'$
        \item $\neg  x = 0 \Longleftrightarrow x > 0 \lor -x > 0$;
        \item $\neg x > 0 \Longleftrightarrow -x+1 > 0$.
    \end{itemize}
    Substitute atoms $V_z(x, y)$ with the equivalent formula~\eqref{eq:to solve} to obtain the following system:
    \begin{align*}
        &\exists n_1,\dots,n_r \in \nat, \mathbf{y}_1 \in \intg^{s_1},\dots,\mathbf{y}_r \in \intg^{s_r},y_1,\dots,y_t\in\intg\colon \\
        &\quad\quad \bigwedge_{i=1}^s \:\:\sum_{j=1}^r f_{i,j}(\mathbf{y}_j)z_j^{n_j} + \sum_{j=1}^t a_{i,j}y_j \,\,\sim_i\,\, c_i\,,
    \end{align*}
    where $r, \sim_i, c_i,s,s_1,\dots,s_r, f_{i,j},z_j$ are as in the hypothesis, $t \in \nat$, and $a_{i,j} \in \intg$.
    
    It remains to eliminate the variables $y_i$, which we do using standard quantifier techniques. 
    Assume we want to eliminate $y_t$. 
    At certain steps, we introduce modular conditions
    \begin{equation*}
        \sum_{j=1}^r f_{i,j}(\mathbf{y}_j)z_j^{n_j} + \sum_{j=1}^t a_jy_j \equiv c_i \pmod{d}\,.
    \end{equation*}
    By splitting into the cases where $(e_0,\dots,e_t) \in \{0,\dots,d-1\}^{t+1}$ and $e_0 \equiv \sum_{j=1}^r f_{i,j}(\mathbf{y}_j)z_j^{n_j} \pmod{d}$ and $e_j \equiv y_j \pmod{d}$ for $1 \le j \le t$ such that $e_0 + \sum_{j=1}^ta_je_j \equiv c \pmod{d}$ and making the substitutions $y_j \mapsto d y_j + e_j$ we eliminate $y_j$ from these modular constraints.
    
    If there is some $i$ such that $\sim_i$ is an equality and $a_{i,t}$ is non-zero, rename such that $i = s$. 
    Use this equality to eliminate $y_t$ from the equations and inequalities, multiplying each with an appropriate constant to make all numbers integer-valued. 
    Turn this equality into the modular condition  
    \begin{equation*}
        \sum_{j=1}^r f_{s,j}(\mathbf{y}_j)z_j^{n_j} + \sum_{j=1}^{t-1} a_{s,j}y_j \,\,\equiv \,\, c_s \pmod{|a_{s,t}|}
    \end{equation*}
    and add it to the conjunction of modular constraints. 
    Otherwise, $a_{i,t} = 0$ whenever $\sim_i$ is an equality. 
    Compute $\mathcal{I}_-$ and $\mathcal{I}_+$, the subsets of $\{1,\dots,s\}$ such that $a_{i,t} < 0$ and $a_{i,t} > 0$, respectively.
    If $\mathcal{I}_- = \emptyset$, take $y_t$ arbitrarily large and reduce to the (in)equalities labelled $\{1,\dots,s\} \setminus \mathcal{I}_+$ in which $a_{i,t}$ is always zero. 
    We likewise assume that $\mathcal{I}_+ \ne \emptyset$.
    By multiplying these inequalities with appropriate constants, we can assume that $a\coloneqq |a_{i,t}|$ is the same for all $i\in\mathcal{I}_-\cup\mathcal{I}_+$.
    Let $x_i$ be a shorthand for $\sum_{j=1}^r f_{i,j}(\mathbf{y}_j)z_j^{n_j} + \sum_{j=1}^{t-1} a_{i,j}y_j - c_i$ such that we have the inequalities $x_i - ay_t > 0$ for $i \in \mathcal{I}_-$ and $x_i + ay_t > 0$ for $i \in \mathcal{I}_+$ as the only (in)equalities where $y_t$ plays a role.
    By renaming and splitting into cases, we can assume that $1 \in \mathcal{I}_-$, $2 \in \mathcal{I}_+$, $x_1 < x_i$ for $i \in \mathcal{I}_-$, and $x_2 < x_i$ for $i \in \mathcal{I}_+$.
    Then $y_t$ should satisfy $x_1 > ay_t > -x_2$, which can be shown by showing that either $x_1+ x_2 \ge a+2$ or if $0 < x_1 + x_2$ is small and $x_1$ and $x_2$ are in particular 
    congruence classes modulo $a$.
    
    Hence, we have eliminated $y_t$.
    The proposition follows.
\end{proof}

In the last proof, we observed that statements $\exists y \in \intg \colon x + a y = 0$ with $a \ne 0$ can be turned into modular equations $x \equiv 0 \pmod{|a|}$. 
The same principle applies when $y = x_{i,j}z_j^{n_j}$: For $a \ne 0$ and $x \in \intg$, we have that $\exists x_{i,j} \in \intg \colon x + ax_{i,j}z_j^{n_j} = 0$ is equivalent to $x \equiv 0 \pmod{|a|z_j^{n_j}}$.
If an equation necessarily does not have an unknown $x_{i,j}z_j^{n_j}$ with a non-zero coefficient, e.g., $\alpha^{n_1} + 7\alpha^{n_2} - 3\beta^{n_3} = 6$, Theorem~\ref{thm:eq with only pure powers} lets us reduce to a system with strictly fewer variables.
The following proposition formalises this process.

\begin{proposition}\label{prop:inequalities + mods}
    Let $\alpha,\beta \ge 2$. Then $\exists\mathsf{FO}(\intg;<,+,V_\alpha,V_\beta)$ is decidable when for all $r,s,s',s'',t_1, \dots,t_r\in\nat$, $t = \sum_{i=1}^r (1+t_i)$, bijections $F \colon \{1,\dots,t\} \to \{(i,j) : 1 \le i \le r, 0 \le j \le t_i\}$,
    $z_1,\dots,z_r \in \{\alpha,\beta\}$, $A \in \intg^{s \times t}$, $\mathbf{v} \in \intg^s$, $c_{i,k}, c_{i',k}', c_i, c_{i'}' \in \intg$, $d_i, d_{i'}' \in \intg_{\ge 0}$, $k_i \in \{1,\dots,r\}$ for $1 \le i \le s'$, $1 \le i'\le s''$ and $1 \le k \le t$, one can determine whether there are $n_1,\dots,n_r \in \nat$ and $x_{i,j} \in \intg_{\ge1}$, $\mathbf{x} \in \intg_{\ge 1}^t$ such that 
    \begin{equation*}
        \mathbf{x}_k = 
        \begin{cases}
            z_j^{n_j} & \text{if $F(k) = (0,j)$}\\
            x_{i,j}z_j^{n_j} &\text{if $F(k) = (i,j)$ and $i \ge 1$}
        \end{cases}
    \end{equation*}
     and
    \begin{equation}\label{eq:sys with mods}
        A\mathbf{x}
        > \mathbf{v}
        \: \land \: \bigwedge_{i=1}^{s'}  \sum_{k=1}^tc_{i,k}\mathbf{x}_k\equiv c_i \pmod{d_i z_{k_i}^{n_{k_i}}} \:\land\:\, \bigwedge_{i=1}^{s''}  \sum_{k=1}^tc_{i,k}'\mathbf{x}_k\equiv c_i' \pmod{d_i'}\,.
    \end{equation}
    Moreover, we can assume that $\mathbf{x}_1 > \cdots > \mathbf{x}_t > 0$ whenever $A\mathbf{x} > \mathbf{v}$ and that $c_{i,k'} = 0$ for $1 \le i \le s'$ and $1 \le k < k'< t$ such that $F(k) = (0,k_i)$ and $F(k') = (i',j')$ for some $i'\ge 1$.
\end{proposition}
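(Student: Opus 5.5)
Starting from Proposition~\ref{prop:reduction to a system}, it suffices to decide systems of the form~\eqref{eq:sys before mod}. We transform such a system, through finitely many effective case splits, into finitely many systems of the form~\eqref{eq:sys with mods}. The original system is then satisfiable if and only if one of the new ones is.

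\textbf{Step 1: positivity of the unknowns.} Each affine function $f_{i,j}(\mathbf{y}_j)$ is a linear combination of the entries of $\mathbf{y}_j$ plus a constant. Expanding, $f_{i,j}(\mathbf{y}_j)z_j^{n_j}$ becomes a combination of $z_j^{n_j}$ and the products $y_{j,\ell}z_j^{n_j}$. Split each $y_{j,\ell}$ into the cases $y_{j,\ell} = 0$, $y_{j,\ell} \ge 1$ and $y_{j,\ell}\le -1$. In the zero case the term disappears; in the negative case replace $y_{j,\ell}$ by $-y_{j,\ell}$ and flip the sign of its coefficient. After this, every remaining unknown is either $z_j^{n_j}\ge 1$ or $x_{i,j}z_j^{n_j}$ with $x_{i,j}\ge 1$. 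Collect these into the vector $\mathbf{x}$, indexed via a bijection $F$.

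\textbf{Step 2: ordering and removing equalities.} We guess a total preorder on the entries of $\mathbf{x}$ (finitely many choices). Ties $\mathbf{x}_k=\mathbf{x}_{k'}$ are handled as follows:
\begin{itemize}
\item Ties between two pure powers are equations of the form treated by Theorem~\ref{thm:eq with only pure powers}. Apply that theorem and substitute $n_i = n_{\sigma(i)}+f_i$ or $n_i = f_i$; this replaces a variable by a constant multiple of another or by a constant.
\item Ties involving a linear-exponential term are substituted away.
\end{itemize}
Thus we may assume strict order $\mathbf{x}_1>\cdots>\mathbf{x}_t>0$, which we add to $A\mathbf{x}>\mathbf{v}$.

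Each remaining equation $\sum_k a_k\mathbf{x}_k = c$ is handled by one of two cases:
\begin{itemize}
\item If some linear-exponential term $x_{i,j}z_j^{n_j}$ occurs with nonzero coefficient $a$, eliminate $x_{i,j}$ as in the remark before the statement. The equation becomes the modular constraint $\sum_{k\neq k_0} a_k\mathbf{x}_k\equiv c \pmod{|a| z_j^{n_j}}$, and $x_{i,j}$ is substituted in the other constraints after clearing denominators. The substitution must also preserve the conditions $x_{i,j}\ge1$ and the ordering; these become linear inequalities in the remaining unknowns and are added.
\item If only pure powers occur, apply Theorem~\ref{thm:eq with only pure powers}. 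By Lemma~\ref{lem::soda observation}, at least one $n_i$ is fixed or tied to another variable, so the number of variables strictly decreases.
\end{itemize}
Induction on the number of variables shows that all equalities disappear. Modular constraints modulo a fixed $d'$ are carried along unchanged. The modular constraints from~\eqref{eq:sys before mod} with constant modulus become the $d_i'$ constraints.

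\textbf{Step 3: the final normalisation, which I expect to be the main obstacle.} We must ensure $c_{i,k'}=0$ for terms $x_{i',j'}z_{j'}^{n_{j'}}$ that are smaller than the modulus power $z_{k_i}^{n_{k_i}}$. The difficulty is that a small linear-exponential term $x z^{m}$ with $x z^m< z_{k_i}^{n_{k_i}}$ cannot simply be reduced modulo $z_{k_i}^{n_{k_i}}$.

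The plan is to introduce a fresh variable $r$ with $0\le r< d_iz_{k_i}^{n_{k_i}}$ capturing the residue, or to rewrite the congruence as an equation with a new unknown multiple $w\, z_{k_i}^{n_{k_i}}$. That multiple is a new linear-exponential term attached to index $k_i$, with sign and order guessed again. The small term is then absorbed into an inequality or equation, and we re-enter Step 2.

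The remaining task is to show that this loop terminates. I would try a lexicographic measure: the number of exponential variables, then the number of modular constraints having nonzero coefficients below their modulus position. The new unknown sits at position $k_i$ itself, so the offending coefficients move upward. I expect this termination and bookkeeping argument to be the delicate point.
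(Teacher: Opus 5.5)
Your Steps 1 and 2 follow the paper's proof closely: splitting signs of the linear parts, guessing the order, using Theorem~\ref{thm:eq with only pure powers} for purely exponential equations, and turning an equation that contains a linear power variable into a congruence with variable modulus. The gap is Step 3. The final clause of the statement, namely $c_{i,k'}=0$ for every linear power variable lying below the modulus power $z_{k_i}^{n_{k_i}}$, is never established.

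Your repair loop has no termination argument. It rewrites the congruence as $\sum_k c_{i,k}\mathbf{x}_k - c_i = w\,d_i z_{k_i}^{n_{k_i}}$ and re-enters Step 2. Step 2 picks an arbitrary linear term, so it may eliminate $w\,d_i z_{k_i}^{n_{k_i}}$ again, and then you recover exactly the congruence you started with. Your lexicographic measure does not exclude this. Its first component is unchanged because no exponent variable is added or removed, and you give no reason why the second component decreases.

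The missing idea is that no repair loop is needed; you only have to choose \emph{which} linear power variable Step 2 eliminates. The strict order $\mathbf{x}_1>\cdots>\mathbf{x}_t$ is fixed before equations are removed. So always eliminate the linear power variable $\mathbf{x}_k = x_{i,j}z_j^{n_j}$ with nonzero coefficient that has the \emph{largest} index, i.e.\ the smallest one.

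The new congruence then has modulus $|c_{i,k}|\,z_j^{n_j}$. Every other linear power variable in it has index $<k$, so it is larger than $\mathbf{x}_k$, which in turn is strictly larger than its own exponential part $z_j^{n_j}$. Hence all of them lie above the modulus power.

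Next, substitute $\mathbf{x}_k$ into previously created congruences, after multiplying each congruence and its modulus by $c_{i,k}$. Take an earlier congruence that contains $\mathbf{x}_k$. By the invariant, its modulus power lies below $\mathbf{x}_k$. The linear terms you substitute in all have index $<k$, so they also lie above that modulus power, and the invariant is preserved. Inequalities and constant-modulus congruences need no invariant.

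This is why the paper can say the last claim holds ``by construction''. With this selection rule, the situation you worry about in Step 3, a small term $xz^m$ sitting beneath the modulus, never arises.
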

\begin{proof}
    We reduce from Proposition~\ref{prop:reduction to a system} by translating~\eqref{eq:sys before mod} into~\eqref{eq:sys with mods}.
    Unpacking the affine functions, for each $1 \le j \le t$, we have $t_j+1$ variables: $z_j^{n_j}, x_{1,j}z_j^{n_j},\dots,x_{t_j,j}z_j^{n_j}$.
    As the bijection $F$ suggests, we introduce variables $\mathbf{x}_1,\dots,\mathbf{x}_t$ corresponding to the unknowns in~\eqref{eq:sys before mod}, a matrix $A$, a vector $\mathbf{v}$, and integers $c_{i,j}', c_i'$, and $s \in \nat$ and split the inequalities and equalities.
    By splitting into cases, we can assume that $x_{i,j} = 0$ (giving strictly fewer variables) or $x_{i,j}$ (by splitting in the positive and negative cases and changing signs of the coefficients accordingly). Again, by splitting into cases and possibly renaming, we can assume that $\mathbf{x}_1 > \cdots > \mathbf{x}_t > 0$, giving the second-to-last claim.
    
    We show that for a system of equalities, inequalities, and modular constraints (involving both constant and variable moduli), we can eliminate an equality.
    Fix an equality $\sum_{k=1}^t c_{i,k}\mathbf{x}_k = c_i$.
    There are two cases. 

    First, assume that $c_{i,k} = 0$ whenever $F(k) = (i',j')$ with $i'\ge1$. 
    That is, when $c_{i,k}$ is non-zero, $\mathbf{x}_k$ is of the form $z^n$ with $z \in\{\alpha,\beta\}$, giving an exponential Diophantine equation on which we can apply Theorem~\ref{thm:eq with only pure powers}. 
    Therefore, we can compute a finite number of $\intg$-modules $W_1,\dots, W_w \subseteq \intg^r$ such that this equation holds if and only if $(n_1,\dots,n_r) \in\nat^r$ is in a $W_k$.
    Split into cases.
    Compute the map $\sigma$, partition $\mathcal{I}_1, \mathcal{I}_2, \mathcal{I}_3$ of $\{1,\dots,r\}$, and $f_j \in \intg$ for all $j \in \mathcal{I}_2 \cup \mathcal{I}_3$ such that $(n_1,\dots,n_r)$ is in $W_k$ if and only if for all $j \in \mathcal{I}_2$ we have that $n_j = n_{\sigma(j)} + f_j$ and $z_j = z_{\sigma(j)}$ for all $j\in\mathcal{I}_2$ and $n_j = f_j$ for all $j\in\mathcal{I}_3$.
    Making these suggested substitutions gives a system with at most the same number of variables and one fewer equation.

    Otherwise, there is a $1 \le k \le t$ such that $F(k) = (i',j)$ with $i' \ge 1$ and $c_{i,k} \ne 0$, and without loss of generality, we can assume $k$ to be maximal. 
    Using Gaussian elimination, eliminate $\mathbf{x}_k$ from all inequalities in $A\mathbf{x} > \mathbf{v}$, the equations $\{1,\dots,s'\}\setminus\{i\}$, and all already formed modular relations. 
    Then using that
    \begin{equation}\label{eq:eliminate variable for mod}
        \exists x_{i,j} \in \intg\colon \sum_{k'=1}^t c_{i,k'}\mathbf{x}_{k'}= c_i \Longleftrightarrow \sum_{\substack{k'=1\\k'\ne k}}^t c_{i,k'}\mathbf{x}_{k'}\equiv c_i \pmod{|c_{i,k}| z_j^{n_j}}\,,
    \end{equation}
    we have eliminated $\mathbf{x}_k$. 
    By construction, the last claim also holds.
\end{proof}

For the remainder of the paper, we study the polyhedron $A\mathbf{x} > \mathbf{v}$, where (the boldface) $\mathbf{x}_k$ denotes the integer corresponding to a \emph{power variable}.
We call a variable $\mathbf{x}_k$ of the form $z_j^{n_j}$ a \emph{pure power variable} and a variable of the form $x_{i, j}z_j^{n_j}$ a \emph{linear power variable}.
Thus, (the non-boldface) $x_{i,j}$ denotes the \emph{linear part} of a linear power variable $x_{i,j}z_j^{n_j}$.
In both cases, we call $z_j$ the (\emph{exponential}) \emph{base} of $\mathbf{x}_k$ and $z_j^{n_j}$ the \emph{exponential part}.
We have two kinds of modular constraints: those with a \emph{variable modulus}, i.e., those whose modulus is of the form $d_iz_{k_i}^{n_{k_i}}$, and those with a \emph{constant modulus}, i.e., those whose modulus is of the form~$d_i'$. 

\begin{remark}\label{rem:reduce when equality}
    When we encounter a system as in Proposition~\ref{prop:inequalities + mods} and an equation $\sum_{k=1}^tc_k\mathbf{x}_k = c$, where not all of $c_1,\dots,c_t,c$ are zero (call such an equation a \emph{non-trivial equation}), we can reduce to a system with strictly fewer variables by applying the proof of Proposition~\ref{prop:inequalities + mods} \emph{mutatis mutandis}.
    Similarly, if we encounter a non-power variable, i.e., a variable which is not a power variable, we can instantly eliminate it as in Proposition~\ref{prop:reduction to a system}.
\end{remark}

\subsection{Properties for our system}\label{sec:outline}
This section outlines the first step of our algorithm for solving the system of inequalities and modular equations in powers of $\alpha$ and $\beta$ established in Proposition~\ref{prop:inequalities + mods}.
Our algorithm reduces these systems to finitely many systems that satisfy certain properties from which we can eliminate many variables at once.
In this process, $t$ (the number of variables) will never increase, but the number of modular constraints might.
For this reason, we converted the equations into modular constraints earlier.
We introduce a few notions before listing these properties and the remainder of our reduction.

We treat the power variables $\mathbf{x}_k$ both as numbers that satisfy a system of linear inequalities implying that $\mathbf{x}_1 > \dots > \mathbf{x}_t > 0$ and modular constraints, and we treat these $\mathbf{x}_k$ as numbers which have this specific form of being a pure or linear power variable. 

\textbf{Levels.}
We start by defining \emph{levels} from a linear algebra perspective as non-empty sets of power variables of roughly equal size.
We flesh this further out in Section~\ref{sec:levels and bottom family} and treat this here mostly by example.
Using the strict ordering  $\mathbf{x}_1 > \dots > \mathbf{x}_t > 0$, we say that $\mathbf{x}_k$ and $\mathbf{x}_{k+1}$ are on the same level when $\mathbf{x}_k/\mathbf{x}_{k+1} < C$ has to hold for some constant $C$.
Else, $\mathbf{x}_k$ and $\mathbf{x}_{k+1}$ are not on the same level.
The first level contains the largest variables, the second the next largest variables, etc.
Let us go through an example.
\begin{example}\label{ex:outline}
    Let $t = 4$ and take the following system of inequalities:
    \begin{equation}\label{ex:three variables}
        \begin{split}
            \mathbf{x}_1 > \mathbf{x}_2 > \mathbf{x}_3 > \mathbf{x}_4 &> 0\\
            \mathbf{x}_1 - 2\mathbf{x}_3 + 3\mathbf{x}_4&> 0 \\
            -\mathbf{x}_1 + 2\mathbf{x}_3 - \mathbf{x}_4 &> 0\,.
        \end{split}
    \end{equation}
    The last two inequalities are equivalent to $|\mathbf{x}_1 - 2\mathbf{x}_3 + 2\mathbf{x}_4| < \mathbf{x}_4$.
    The first level includes $\mathbf{x}_1$, the largest power variable.
    The last two inequalities imply that $\mathbf{x}_1/\mathbf{x}_3$ is bounded, and so $\mathbf{x}_1$, $\mathbf{x}_2$, and $\mathbf{x}_3$ are on the first level. 
    As for any $c > 3$, $\mathbf{x}_1 = (2c-2)\mathbf{x}_4$, $\mathbf{x}_2 = (2c-3)\mathbf{x}_4$, $\mathbf{x}_3 = c\mathbf{x}_4$ satisfies the system of inequalities and $\mathbf{x}_3/\mathbf{x}_4$ gets arbitrarily large as $c$ increases, $\mathbf{x}_1$, $\mathbf{x}_2$, and $\mathbf{x}_3$ form the first level while $\mathbf{x}_4$ alone forms the second.
\end{example}

Equivalently, the first level consists of $\mathbf{x}_1,\dots,\mathbf{x}_\lambda$ when $\lambda \ge 1$ is the smallest number such that the recession cone $A_{1:\lambda}(\mathbf{x}_1, \dots, \mathbf{x}_\lambda) \ge \mathbf{0}$ contains a vector $(\mathbf{x}_1,\dots,\mathbf{x}_\lambda) > 0$. 
Here we call $B_1 := A_{1:\lambda}$ the first \emph{induced subpolyhedron}.
One continues this process inductively.
Project $\mathbf{x}_1,\dots,\mathbf{x}_\lambda$ out and apply the same method on the remaining cone, getting the next level, say the $\ell$th.
Then we also obtain the $\ell$th induced subpolyhedron $B_\ell$.
On each level $\ell$, there is a maximal linear subspace $U_\ell$ orthogonal to $B_\ell$, the \emph{orthogonal complement of the $\ell$th level}. 
Thus, for any $\mathbf{u} \in U_\ell$, $B_\ell\mathbf{u} = \mathbf{0}$. 
Let $\mathbf{u}_{\ell,1},\dots,\mathbf{u}_{\ell,p_\ell}$ denote a basis of $U_\ell$.

\begingroup
\renewcommand{\theexample}{} 
\makeatletter
\def\@thmcountersep{} 
\makeatother
\begin{example}[Continuation of Example~\ref{ex:outline}]
    Turn the inequalities~\eqref{ex:three variables} into a matrix form:
    \begin{equation*}
        \begin{pmatrix} 1 & -1 & 0& 0\\0 & 1 & -1& 0 \\0 &0 & 1 & -1\\0 & 0 & 0& 1\\1 & 0 & -2& 3 \\ -1 & 0 & 2& -1\end{pmatrix}
        \begin{pmatrix}     \mathbf{x}_1\\\mathbf{x}_2\\\mathbf{x}_3\\\mathbf{x}_4\end{pmatrix} \ge \mathbf{0}\,.
    \end{equation*}
    Due to its bottom two rows, the restriction of this polyhedron to the first or first two coordinates has no strictly positive element. 
    However, taking $(\mathbf{x}_1,\mathbf{x}_2,\mathbf{x}_3) = (4,3,2)$, we have that $A_{1:3}(\mathbf{x}_1,\mathbf{x}_2,\mathbf{x}_3) =(1, 1, 2, 0, 0, 0) \ge \mathbf{0}$. 
    Thus, again, $\mathbf{x}_1$, $\mathbf{x}_2$, and $\mathbf{x}_3$ form the first level. 
    For all $(\mathbf{x}_1,\mathbf{x}_2,\mathbf{x}_3)$ satisfying $A_{1:3}(\mathbf{x}_1,\mathbf{x}_2,\mathbf{x}_3) \ge \mathbf{0}$, we have that $\mathbf{x}_1 - 2\mathbf{x}_3 = 0$.
    Hence, $(1, 0, -2)$ is in $U_1$, and we can show that this vector actually generates~$U_1$.
\end{example}
\endgroup

Computing the partition of the variables into levels, the induced polyhedra, their orthogonal complements, and bases is straightforward linear algebra.

\textbf{The bottom family.} 
We introduce the \emph{bottom family} next.
Levels indicate which variables can be arbitrarily far apart, but we also need a more subtle distinction. 
Certain restrictions still keep variables close compared to their actual size.
For example, it could be $\mathbf{x}_1/\mathbf{x}_2 < \log(\mathbf{x}_1)^{C}$ for some constant $C$.
We return to our example.

\begingroup
\renewcommand{\theexample}{} 
\makeatletter
\def\@thmcountersep{} 
\makeatother
\begin{example}[Continuation of Example~\ref{ex:outline}]
    Assume that $\mathbf{x}_1 = 2^{n_1}$, $\mathbf{x}_2 = x_{1,2}2^{n_2}$, $\mathbf{x}_3 = 3^{n_3}$, and $\mathbf{x}_4 = 2^{n_2}$.
    The bottom two inequalities imply that $|2^{n_1} - 2\cdot 3^{n_3}| < 3\cdot 2^{n_2}$. Using Baker's theorem on $\frac{1}{2^{n_1}}|2^{n_1} - 2\cdot 3^{n_3}| = |2\cdot 3^{n_3}/2^{n_1} - 1| $ gives that 
    \begin{equation*}
        \mathbf{x}_4/\mathbf{x}_1 = 2^{n_2 - n_1} > \frac{1}{3}|2\cdot 3^{n_3}/2^{n_1} - 1| > \frac{1}{3}\max(3,n_1,n_3)^{-c}
    \end{equation*}
    for some computable constant $c$ whenever $2^{n_1} \ne 2\cdot 3^{n_3}$.
    Hence, as $3^{n_3}/2^{n_1} \approx 1/2$, we have that $\mathbf{x}_4/\mathbf{x}_1 > \frac{1}{3}n_1^{-c}$ for all computable large $n_1$.
    Thus, $\mathbf{x}_1/\mathbf{x}_4 < \log_2(\mathbf{x}_1)^c$, which is polynomial in $\log(\mathbf{x}_1)$. 

    As a second quirk, we note that as $\mathbf{x}_2 = x_{1,2}2^{n_2} < \mathbf{x}_1$, we have that $x_{1,2}$ has to be small. 
    To be exact, $x_{1,2} < 2^{n_1-n_2}$ is polynomial in $n_1$.
\end{example}
\endgroup

In Section~\ref{sec:levels and bottom family}, we define the bottom family using linear algebra.
As a consequence of the proof of the upcoming property~\ref{step 3c}, one can compute a polynomial $p$ such that for all $\mathbf{x}_k$ in the bottom family, $\mathbf{x}_1/\mathbf{x}_k < p(\log\mathbf{x}_1)$  whenever $A\mathbf{x} > \mathbf{v}$. 
A power variable has \emph{small coefficients} if its exponential part is in the bottom family. 
In our example, the bottom family consists of $\mathbf{x}_1,\dots,\mathbf{x}_4$ and these variables have small coefficients.
These coefficients are either constant (for pure power variables) or can only grow polynomially in~$n_1$.

\textbf{Properties we need to fulfil.} Let us list all properties that our system, levels, bottom family, and the bases of the orthogonal spaces $U_\ell$ have to satisfy. 
The proofs of these lemmas are in Appendix~\ref{app:properties}.

\begin{proplist}
\item\label{step 1} If $\mathbf{x}_k = x_{i,j}z_j^{n_j}$ is a linear power variable and $\mathbf{x}_{k'} = z_j^{n_j}$ is its accompanying pure power variable, then $\mathbf{x}_{k'} < \mathbf{x}_k$. Moreover, $t \ge 1$.
\end{proplist}

This first property sounds technical but is quite natural: Else, $x_{i,j}z_j^{n_j} < z_j^{n_j}$ with $z_j^{n_j} > 0$ and $x_{i,j} > 0$ being an integer. That is of course absurd.
Secondly, when $t = 0$, the formula is variable-free, and testing satisfiability is trivial.

\begin{proplist}
\item\label{step 2} The polyhedron $A\mathbf{x} > \mathbf{v}$ is not thin.
\end{proplist}

If the polyhedron was thin, then, in the notation of Section~\ref{sec:lin programming prelims}, there are a computable non-zero vector $\mathbf{u}$ and a number $f$ such that $|\mathbf{u}\cdot \mathbf{x}|< f$ whenever $A\mathbf{x} > \mathbf{v}$.
Hence, one of the non-trivial equations $\mathbf{u}\cdot \mathbf{x} = -f+1,\dots,\mathbf{u}\cdot \mathbf{x}=f-1$ has to hold, reducing the number of polyhedra to deal with.
Combined with the observation for property~\ref{step 1}, we obtain the following lemma.

\begin{lemma}\label{lem:properties 1,2}
    When one of properties~\labelcref{step 1,step 2} does not hold, we can either reduce to a system with strictly fewer variables or directly determine whether the given system has a solution.
\end{lemma}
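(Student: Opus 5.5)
We treat the two properties separately, beginning with the easy part of property~\ref{step 1}. If $t = 0$, the system in Proposition~\ref{prop:inequalities + mods} has no variables. Every inequality and modular constraint is then a variable-free statement about integers (the variable moduli do not occur, since there are no $n_j$). We evaluate these statements directly and thereby decide whether the system has a solution.

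Now suppose $t \ge 1$, and suppose some linear power variable $\mathbf{x}_k = x_{i,j}z_j^{n_j}$ and its accompanying pure power variable $\mathbf{x}_{k'} = z_j^{n_j}$ violate the first part of property~\ref{step 1}. By the last part of Proposition~\ref{prop:inequalities + mods}, $A\mathbf{x} > \mathbf{v}$ implies $\mathbf{x}_1 > \cdots > \mathbf{x}_t$. A violation therefore means $k' < k$, so every solution satisfies $x_{i,j}z_j^{n_j} < z_j^{n_j}$. Since $z_j^{n_j} > 0$, this gives $x_{i,j} < 1$, which contradicts $x_{i,j} \in \intg_{\ge 1}$. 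Hence the system has no solution, and we answer directly. If the accompanying pure power variable is not among the $\mathbf{x}$'s, we can add it as a new variable; alternatively, the convention in Proposition~\ref{prop:inequalities + mods} guarantees that it is present, since $F$ ranges over all pairs $(0,j)$.

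For property~\ref{step 2}, we use Lemma~\ref{lem:thinness computable} to decide whether $A\mathbf{x} > \mathbf{v}$ is thin. If it is thin, the same lemma gives a nonzero $\mathbf{u} \in \rat^t$ and a number $f$ with $|\mathbf{u}\cdot\mathbf{x}| < f$ on the polyhedron. We clear denominators so that $\mathbf{u} \in \intg^t \setminus \{\mathbf{0}\}$, and we rescale $f$ to be an integer. Since every $\mathbf{x}$ is integral, $\mathbf{u}\cdot\mathbf{x}$ is an integer $c$ with $|c| < f$. We then split into the finitely many cases $\mathbf{u}\cdot\mathbf{x} = c$ for $c \in \{-f+1,\dots,f-1\}$. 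Each case adds an equation to the system, and this equation is non-trivial because $\mathbf{u} \ne \mathbf{0}$. By Remark~\ref{rem:reduce when equality}, following the proof of Proposition~\ref{prop:inequalities + mods}, each case reduces to a system with strictly fewer variables. The original system is solvable if and only if one of these cases is.

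The only point that needs care is the reduction via Remark~\ref{rem:reduce when equality}, and it splits by the form of the equation. If the equation involves a linear power variable with nonzero coefficient, we take the largest such variable and eliminate it, turning the equation into a modular constraint with variable modulus; this removes that variable. If the equation involves only pure power variables, Theorem~\ref{thm:eq with only pure powers} splits it into module cases. Since $\mathbf{u} \ne \mathbf{0}$, Lemma~\ref{lem::soda observation} gives $|\mathcal{I}_1| < r$, so at least one exponent is either fixed or tied to another exponent. Substituting then strictly decreases the number of independent power variables. The tied exponents make two variables differ by a constant factor, so they merge into one variable with a constant coefficient, and a fixed exponent turns its variable into a constant. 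This step is the main obstacle: one must check that the number of variables genuinely drops here, and not merely the number of equations.
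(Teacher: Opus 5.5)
Your proof is correct and follows the paper's argument step for step: you dispose of property~\ref{step 1} via the trivial case $t=0$ or the impossible inequality $x_{i,j}z_j^{n_j}<z_j^{n_j}$, and you split a thin polyhedron into finitely many non-trivial equations $\mathbf{u}\cdot\mathbf{x}=c$, each handled by Remark~\ref{rem:reduce when equality}. Your extra paragraph, which uses Lemma~\ref{lem::soda observation} to show that the pure-power case really removes a variable, is a worthwhile addition, since the paper's proof of Proposition~\ref{prop:inequalities + mods} only asserts ``at most the same number of variables and one fewer equation'' in that case.
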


\begin{proplist}
\item Modular constraints with variable modulus $\sum_{k=1}^tc_{i,k}\mathbf{x}_k\equiv c_i \pmod{d_i z_{k_i}^{n_{k_i}}}$ satisfy the following.
\end{proplist}

\begin{subproplist}
\item\label{step 3a} When $c_{i,k} \ne 0$ and $\mathbf{x}_k$ is a linear power variable, $\mathbf{x}_k$ has exponential part at least~$z_{k_i}^{n_{k_i}}$.
\end{subproplist}

Property~\ref{step 3a} is satisfied at the start by Proposition~\ref{prop:inequalities + mods}.
If we achieve a system as in the proposition by some other reduction, then the proof of the proposition shows we can rewrite the system and assume this property holds.
Thus, we obtain the following lemma (whose proof is not in an appendix as it is immediate).
\begin{lemma}\label{lem:property 3a}
    We can reduce to systems where property~\ref{step 3a} holds.
\end{lemma}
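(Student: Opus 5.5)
The plan is to show that any offending term can be moved out of a variable-modulus constraint without changing the solution set. Property~\ref{step 3a} fails only if some constraint
\[
\sum_{k=1}^t c_{i,k}\mathbf{x}_k\equiv c_i \pmod{d_i z_{k_i}^{n_{k_i}}}
\]
contains a linear power variable $\mathbf{x}_k = x_{i',j}z_j^{n_j}$ with $c_{i,k}\neq 0$ and $z_j^{n_j} < z_{k_i}^{n_{k_i}}$. Each such term will be handled by a finite case split that removes it from the constraint. The variable count $t$ never grows, and every other constraint is left unchanged. Since there are finitely many such terms, iterating terminates.

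The key step is the following. First split into the two cases $z_j = z_{k_i}$ and $z_j \neq z_{k_i}$; the bases are fixed data, so this is really just a case distinction on the given system.

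If $z_j \ne z_{k_i}$, coprimality of $\alpha$ and $\beta$ makes $z_j^{n_j}$ invertible modulo $z_{k_i}^{n_{k_i}}$. Modulo $d_i$, however, it need not be invertible. To handle this, split the modulus as $d_i z_{k_i}^{n_{k_i}} = d_i'' \cdot (d_i' z_{k_i}^{n_{k_i}})$, where $d_i''$ is coprime to $z_{k_i}$. By the Chinese remainder theorem, the constraint becomes the conjunction of:
\begin{itemize}
\item a constant-modulus constraint modulo $d_i''$, which property~\ref{step 3a} does not restrict;
\item a constraint modulo $d_i' z_{k_i}^{n_{k_i}}$.
\end{itemize}
In the second constraint, the term $c_{i,k}x_{i',j}z_j^{n_j}$ is still present. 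Here I would rather exploit that $x_{i',j}$ is a free integer unknown appearing only through $\mathbf{x}_k$.

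In the case $z_j = z_{k_i}$ and $n_j < n_{k_i}$, write $x_{i',j}z_j^{n_j}$ and note that this term is divisible by $z_j^{n_j}$. The natural move is to reverse the elimination performed in \eqref{eq:eliminate variable for mod}. Reintroduce a fresh integer $y$ and turn the constraint back into the equality
\[
\sum_k c_{i,k}\mathbf{x}_k + d_i y z_{k_i}^{n_{k_i}} = c_i .
\]
Then solve this equality for the offending linear power variable instead. Its coefficient is $c_{i,k}$, so we get a constraint with modulus $|c_{i,k}|z_j^{n_j}$, which is now the smallest exponential part. The new term $d_i y z_{k_i}^{n_{k_i}}$ is a linear power variable with exponential part $z_{k_i}^{n_{k_i}} \ge z_j^{n_j}$, so it satisfies property~\ref{step 3a} relative to the new modulus. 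Since $y$ replaces $x_{i',j}$, the variable count is preserved, and the ordering assumptions are restored by splitting into cases on signs and order exactly as in Proposition~\ref{prop:inequalities + mods}.

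The same exchange trick handles differing bases uniformly: choose the smallest-exponential-part linear variable in the constraint and make its base power the modulus. This is the ``rewrite as in the proof of the proposition'' that the text alludes to.

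The main obstacle is checking that this exchange does not destroy the other constraints. The eliminated $\mathbf{x}_k$ must be substituted, via Gaussian elimination, into the inequalities and into the other modular constraints, just as in Proposition~\ref{prop:inequalities + mods}. This introduces the term $y z_{k_i}^{n_{k_i}}$ into other constraints. That is harmless, because its exponential part is at least as large as the old one, so no new violations of~\ref{step 3a} are created.
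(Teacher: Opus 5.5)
Your proposal is correct and is essentially the paper's argument. The paper only remarks that property~\ref{step 3a} holds after Proposition~\ref{prop:inequalities + mods}, and that any later violation is removed by rerunning that proposition's elimination. That rerun is exactly your exchange: turn the constraint back into an equality with a fresh linear power variable $y z_{k_i}^{n_{k_i}}$, then re-eliminate the linear power variable of smallest exponential part.

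The coprimality/CRT detour for $z_j\neq z_{k_i}$ is unnecessary, because the uniform exchange already covers that case. Your choice of the variable with smallest exponential part, rather than the proposition's maximal index, is also the one that guarantees property~\ref{step 3a} literally as stated.
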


\begin{subproplist}
\item\label{step 3b} Each $d_i$ is a power of $z_{k_i}$.
\end{subproplist}

Property~\ref{step 3b} does not give a direct reduction but can be satisfied cheaply using the Chinese Remainder theorem.
For example, if $z_{k_i} = 6$, $d_i = 10$ and $x$ is some expression in power variables.
Then $x \equiv 7 \pmod{10\cdot 6^{n_{k_i}}}$ is equivalent to $x \equiv 7 \pmod{5} \land x \equiv 7 \pmod{2\cdot 6^{n_{k_i}}}$, where $2$ divides $6$. Hence, multiplying with $3 = 6/2$ gives that the second modular constraints is equivalent to $3x \equiv 21 \pmod{6\cdot 6^{n_{k_i}}}$.
The following lemma formalises this method.
\begin{lemma}\label{lem:d_i power of z_(k_i) 3b}
    We can reduce to a system where property~\ref{step 3b} holds, and this reduction preserves property~\ref{step 3a} and the polyhedron $A\mathbf{x} > \mathbf{v}$.
\end{lemma}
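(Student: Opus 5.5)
We handle each variable-modulus constraint $\sum_{k=1}^t c_{i,k}\mathbf{x}_k \equiv c_i \pmod{d_i z^{n}}$ separately, writing $z := z_{k_i}$ and $n := n_{k_i}$. The plan is to factor $d_i = d_i^{(1)} d_i^{(2)}$. Here $d_i^{(1)}$ collects all prime powers of $d_i$ whose primes divide $z$, and $d_i^{(2)}$ is coprime to $z$. Since $\gcd(d_i^{(2)}, d_i^{(1)} z^n) = 1$, the Chinese Remainder theorem splits the constraint into two parts. The first is $\sum_k c_{i,k}\mathbf{x}_k \equiv c_i \pmod{d_i^{(2)}}$, which has constant modulus and is simply appended to the constant-modulus constraints. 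The second is $\sum_k c_{i,k}\mathbf{x}_k \equiv c_i \pmod{d_i^{(1)} z^{n}}$. If $d_i = 0$, the constraint reads as an equality; in that case we set $d_i = 1$ after noting that a modulus-zero condition is either treated as an equation via Remark~\ref{rem:reduce when equality} or excluded by the form of Proposition~\ref{prop:inequalities + mods}. We expect $d_i \ge 1$ there, so no further work is needed.

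Next we make the modulus an exact power of $z$. Choose $e \in \nat$ with $d_i^{(1)} \mid z^e$; such an $e$ exists and is computable because every prime of $d_i^{(1)}$ divides $z$ (take $e$ to be the maximal exponent occurring in $d_i^{(1)}$). Put $m := z^e / d_i^{(1)} \in \intg_{\ge 1}$. For any integer $X$, the congruence $X \equiv c_i \pmod{d_i^{(1)} z^n}$ is equivalent to $mX \equiv m c_i \pmod{z^{e} z^{n}}$, because multiplying both sides and the modulus by the nonzero integer $m$ preserves divisibility. The new constraint is therefore $\sum_k (m c_{i,k})\mathbf{x}_k \equiv m c_i \pmod{z^e z^{n}}$, and its modulus $z^e$ is a power of $z_{k_i}$, as property~\ref{step 3b} requires. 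The new coefficients $mc_{i,k}$ are nonzero exactly when $c_{i,k}$ is nonzero. Hence the support condition in property~\ref{step 3a} is unchanged: a linear power variable appears with a nonzero coefficient in the new constraint iff it did in the old one, and the base $z_{k_i}$ and the exponent $n_{k_i}$ are the same.

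Finally, none of these rewritings touches $A$, $\mathbf{v}$, the variables, or the bijection $F$. The polyhedron $A\mathbf{x} > \mathbf{v}$ is thus literally preserved, and the only side effect is new constant-modulus constraints, which the framework of Proposition~\ref{prop:inequalities + mods} allows. The transformation is a logical equivalence, so no case split is even needed. There is no real obstacle here. The only point requiring care is the zero-modulus edge case and checking that scaling by $m$ is an equivalence rather than merely an implication, which holds because $m \ne 0$.
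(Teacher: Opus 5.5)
Your proposal is correct and essentially identical to the paper's proof. Both split $d_i$ into a part coprime to $z_{k_i}$, which is moved to a constant-modulus constraint by the Chinese remainder theorem, and a part built from primes of $z_{k_i}$. Both then multiply the remaining congruence and its modulus by $z_{k_i}^e/d_i^{(1)}$, which leaves the coefficient support (hence property~\ref{step 3a}) and the polyhedron unchanged. The one blemish is your zero-modulus aside: replacing $d_i=0$ by $d_i=1$ would not be an equivalence, but the case never arises, since the variable moduli produced by the reductions have the form $|c|\,z^{n}$ with $c\neq 0$.
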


\begin{subproplist}
\item\label{step 3c} The power variable $z_{k_i}^{n_{k_i}}$ is not in the bottom family.
\end{subproplist}

This property is the most work to achieve and relies on iterative applications of Baker's theorem (Lemma~\ref{lem:Baker helper reduce}) and its consequence Lemma~\ref{lem:Baker to eliminate bottom family modular constraints}.
We prove that either the modular constraint is trivially satisfied or that it effectively bounds $n_{k_i}$, leading to a direct reduction in the number of variables.

\begin{lemma}\label{lem:step 3c}
    Assume properties~\ref{step 1},~\ref{step 2},~\ref{step 3a},~\ref{step 3d}, and~\labelcref{step 4a,step 4b,step 4c}.
    If property~\ref{step 3c} does not hold, we can reduce to finitely many systems with strictly fewer variables, or all modular constraints violating property~\ref{step 3c} trivially hold.
\end{lemma}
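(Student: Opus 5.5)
Fix a modular constraint with variable modulus, $\sum_k c_{i,k}\mathbf{x}_k \equiv c_i \pmod{d_i z^{n}}$, where $z = z_{k_i}$, $n = n_{k_i}$, and $z^n$ is in the bottom family. By property~\ref{step 3b} we may take $d_i = z^{e}$, so the modulus is $z^{n+e}$. The plan is to turn this congruence into an equation in power variables with small coefficients. We then apply Lemma~\ref{lem:Baker to eliminate bottom family modular constraints} to bound all exponents involved, or else conclude that the congruence is automatic.

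\textbf{Step 1: split the sum.} Write $S = \sum_k c_{i,k}\mathbf{x}_k - c_i$ and split it into two parts.
\begin{itemize}
\item $S_{\ge}$ collects the terms whose exponential part has base $z$ and exponent at least $n+e$. These are divisible by the modulus and can be dropped.
\item $S_{<}$ collects the rest.
\end{itemize}
By property~\ref{step 3a}, every linear power variable with $c_{i,k}\ne 0$ has exponential part at least $z^n$. So a linear term with base $z$ lies in $S_{<}$ only if its exponent is in $[n, n+e)$. We absorb such terms into pure-power-like terms, splitting into finitely many cases on $x_{i,j} \bmod z^e$. Linear terms with the other base $\bar z$ have exponential part $\ge z^n$ and are units times $\bar z^{m}$ modulo $z^{n+e}$ (coprimality), so they cannot be dropped.

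The congruence now says $S_{<} = y\, z^{n+e}$ for some integer $y$. If $S_{<}$ is identically zero, the constraint trivially holds. Otherwise we must control the size of $y$.

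\textbf{Step 2: bound $y$.} Since $z^n$ is in the bottom family, $\mathbf{x}_1/z^n < p(\log \mathbf{x}_1)$ by the property established in the proof of~\ref{step 3c}. By properties~\ref{step 4a}--\ref{step 4c} and~\ref{step 3d}, which I expect describe the level and bottom-family structure and the small-coefficient bounds, every term of $S_{<}$ is at most $\mathbf{x}_1$ up to a constant. Hence $|y| < C\,p(\log \mathbf{x}_1)$. Thus $y z^{n+e}$ is itself a linear power variable with base $z$, with small coefficient, and exponent in the bottom family.

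Every remaining linear power variable with nonzero coefficient is either in the bottom family, with linear part polynomial in $\log\mathbf{x}_1$, or has exponent much smaller. For the latter we iterate Lemma~\ref{lem:Baker helper reduce}: a nonzero combination $x_\alpha\alpha^{k_\alpha}+x_\beta\beta^{k_\beta}$ of two bottom-family terms of different bases cannot be smaller than $Z^m$ unless $\ell_Z - m$ is polylogarithmically bounded. This forces the smaller terms to also be within a polynomial-in-$\log$ factor, or else produces a cancellation among a proper subsum.

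\textbf{Step 3: apply the Baker lemma.} We obtain an equation
\[
x_\alpha\alpha^{k_\alpha}+x_\beta\beta^{k_\beta}+\sum_j c_j z_j^{n_j}+c=0,
\]
after grouping same-base large terms into one linear term per base, with conditions~\ref{Baker cond 1}--\ref{Baker cond 4} supplied by the bottom-family estimates. Here $\ell_z$ is the top exponent of the bottom family and $C_1$ comes from the level containing $\mathbf{x}_1$.

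We split into cases according to which subsums vanish. If some non-empty proper subsum vanishes, we recurse on the smaller pieces. A vanishing subsum consisting only of pure powers is a non-trivial equation, which reduces the number of variables via Remark~\ref{rem:reduce when equality} and Theorem~\ref{thm:eq with only pure powers}. Otherwise condition~\ref{Baker cond 5} holds, and Lemma~\ref{lem:Baker to eliminate bottom family modular constraints} gives a computable $N$ bounding $n$ and the other exponents. Splitting into the finitely many values of $n$ fixes a variable, giving strictly fewer variables.

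\textbf{Main obstacle.} The hardest step is Step 2 combined with the grouping in Step 3. One must show that, after discarding the terms divisible by the modulus, everything remaining has exponential part in the bottom family with linear parts polynomial in $\log$, so that the hypotheses of the Baker lemmas are met. Terms far below the bottom family must either cancel or be handled by iterating Lemma~\ref{lem:Baker helper reduce} down the levels. I would first try induction on the number of terms, peeling off the largest pair of opposite-base terms at each stage.
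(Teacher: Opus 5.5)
Your skeleton matches the paper's. You write the congruence as an equation with an unknown multiple of the modulus and absorb every bottom-family term, including that multiple, into $x_\alpha\alpha^{\kappa_\alpha}+x_\beta\beta^{\kappa_\beta}$. You use property~\ref{step 3a} to see that every nonzero term below the bottom family is a pure power, split on vanishing subsums, and finish with Lemma~\ref{lem:Baker to eliminate bottom family modular constraints}.

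The gap is that the step carrying essentially all the weight is assumed rather than proved. In Step~2 you use $\mathbf{x}_1/z^n<p(\log\mathbf{x}_1)$ ``by the property established in the proof of~\ref{step 3c}'', and in Step~3 conditions~\ref{Baker cond 1}--\ref{Baker cond 4} are ``supplied by the bottom-family estimates''. That polylogarithmic closeness is exactly condition~\ref{Baker cond 4}, and the paper obtains it only as a by-product of this very proof. Your argument is therefore circular exactly where it matters. Your sketch for filling it in does not work either:
\begin{itemize}
\item The modular equation cannot supply the small nonzero two-base combination $0<|x_\alpha\alpha^{k_\alpha}+x_\beta\beta^{k_\beta}|<C_3Z^m$ that item~4 of Lemma~\ref{lem:Baker helper reduce} demands, because its unknown multiple is a priori unbounded.
\item The terms far below the bottom family are not the obstacle. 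They are pure powers and enter Lemma~\ref{lem:Baker to eliminate bottom family modular constraints} directly.
\item The claim is false without the \emph{minimality} of $f$ in Definition~\ref{def:bottom family}, since $f=t$ always meets the other requirements. Your proposal never uses minimality.
\end{itemize}

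The missing idea is an induction down the pure powers of the bottom family, starting from $\kappa'_z=\ell_z$.
\begin{itemize}
\item If the next pure power $Z^m$ is on the same level as the current $z^{\kappa'_z}$, property~\ref{step 4a} forces $Z\neq z$. The bound then follows from the induction hypothesis and the level constant $C$.
\item Otherwise $Z^m$ lies on a later level; let $f'$ be the last index before that level. Minimality of the bottom family says the projection of $A_{1:f'}(\mathbf{x}_1,\dots,\mathbf{x}_{f'})\ge\mathbf{0}$ onto the small-coefficient indices $F'$ is thin. Lemma~\ref{lem:Bound for thinness} turns a witness $\mathbf{u}$ into $|\sum_{j\in F'}\mathbf{u}_j\mathbf{x}_j|<d\,\mathbf{x}_{f'+1}<dCZ^m$.
\item Grouped by base, this sum is $x'_\alpha\alpha^{\kappa'_\alpha}+x'_\beta\beta^{\kappa'_\beta}$, which is non-trivial by property~\ref{step 4c}. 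If it vanishes, that is a non-trivial equation handled by Remark~\ref{rem:reduce when equality}. Otherwise Lemma~\ref{lem:Baker helper reduce} gives $\ell_Z-m<p(\log\ell_Z)$.
\end{itemize}

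You also skip the preliminary reduction needed for condition~\ref{Baker cond 2}. Suppose no opposite-base term on a lower level than the modulus has a nonzero coefficient. Then property~\ref{step 3d} bounds the whole sum by $O(z^n)$, so the multiple takes finitely many values, each giving a trivial or a reducing equation. This case, not merely $S_{<}$ being identically zero, is where the ``trivially hold'' alternative comes from. After this reduction the modulus is off the first level. Properties~\ref{step 4a} and~\ref{step 4b}, together with minimality, then show that the first level contains a pure power of each base. That is what defines $\ell_\alpha,\ell_\beta$ and the constant $C_1$ you use.
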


\begin{subproplist}
\item\label{step 3d} Assume $\mathbf{x}_k$ has exponential part $\mathbf{x}_{k'} = z_{k_i}^{n_j}$ and that $\mathbf{x}_{k''} = z_{k_i}^{n_{k_i}}$.
Then $c_{i,k} = 0$ whenever $\mathbf{x}_{k'} > \mathbf{x}_{k''}$, or $\mathbf{x}_{k'} = \mathbf{x}_{k''}$ and $\mathbf{x}_k$ is a linear power variable.
\end{subproplist}

Satisfying property~\ref{step 3d} is more technical and is done simultaneously with the upcoming property~\ref{step 4c}.
We postpone sketching its use and proof until after this list.

\begin{proplist}
\item The levels are organised as follows.
\end{proplist}

\begin{subproplist}
\item\label{step 4a} Pure power variables with the same exponential basis are not on the same level.
\item\label{step 4b} A pure power variable and linear power variable with the same exponential part are not on the same level.
\end{subproplist}

The two properties above are very similar as are the arguments that they can be satisfied.
For the first, if $z^{n_1}$ and $z^{n_2}$ are on the same level, $z^{n_1-n_2} \in (1/C, C)$, where $C$ is the computable constant from the construction of the levels.
Then there are only finitely many possibilities for $n_1 - n_2$, allowing us to make a substitution and simplify matters.
Similarly, when $\mathbf{x}_1 = xz^n$ and $\mathbf{x}_2 = z^n$, then $x < C$, and again we can reduce the number of variables by splitting into the finitely many options for $x$.

\begin{lemma}\label{lem:step 4a, 4b}
When one of properties~\labelcref{step 4a,step 4b} does not hold, we can reduce to finitely many systems with strictly fewer variables.
\end{lemma}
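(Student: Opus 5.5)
The plan is to exploit the defining property of a level: if two variables lie on the same level, their ratio is bounded by a computable constant $C$, obtained from the construction of the levels in Section~\ref{sec:levels and bottom family}. Concretely, if $\mathbf{x}_k$ and $\mathbf{x}_{k'}$ with $k<k'$ are on the same level, then there is a computable $C$ with $\mathbf{x}_k/\mathbf{x}_{k'} < C$ whenever $A\mathbf{x}>\mathbf{v}$. This holds because the level condition propagates through the consecutive ratios $\mathbf{x}_k/\mathbf{x}_{k+1},\dots,\mathbf{x}_{k'-1}/\mathbf{x}_{k'}$, each of which is bounded. I take this bound as given, together with the strict ordering $\mathbf{x}_1>\cdots>\mathbf{x}_t>0$ from Proposition~\ref{prop:inequalities + mods}.

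\textbf{Violation of property~\ref{step 4a}.} Here there are pure power variables $\mathbf{x}_k = z^{n_j}$ and $\mathbf{x}_{k'} = z^{n_{j'}}$ with $j\ne j'$ on the same level, say $k<k'$. Then $1 < z^{n_j - n_{j'}} < C$, so $n_j - n_{j'} = f$ for some $f\in\{1,\dots,\lfloor\log_z C\rfloor\}$. We split into these finitely many cases. In each case we substitute $n_j = n_{j'}+f$ everywhere. The pure power variable $z^{n_j}$ becomes the constant multiple $z^f\cdot z^{n_{j'}}$ of $\mathbf{x}_{k'}$. A linear power variable $x_{i,j}z^{n_j}$ becomes $(x_{i,j}z^f)z^{n_{j'}}$. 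This is a linear power variable with base exponent $n_{j'}$, provided we record that its linear part is divisible by $z^f$. That divisibility is a constant-modulus condition on $\mathbf{x}$, namely $\mathbf{x}\equiv 0 \pmod{z^{f}z^{n_{j'}}}$, or equivalently we substitute $x_{i,j}\mapsto$ a fresh linear part multiplied by $z^f$. The variable $\mathbf{x}_k$ then disappears as an independent unknown: it equals $z^f\mathbf{x}_{k'}$, so we replace it in $A$, in all modular constraints, and in the variable moduli. The modulus $z^{n_j}$ becomes $z^f z^{n_{j'}}$, which is again of the allowed form $d\,z_{k_i}^{n_{k_i}}$. The result is a system of the shape in Proposition~\ref{prop:inequalities + mods} with one fewer variable.

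\textbf{Violation of property~\ref{step 4b}.} Here $\mathbf{x}_k = x_{i,j}z^{n_j}$ and $\mathbf{x}_{k'} = z^{n_j}$ lie on the same level. By property~\ref{step 1} we may assume $\mathbf{x}_{k'}<\mathbf{x}_k$; otherwise Lemma~\ref{lem:properties 1,2} already applies. Then $1 \le x_{i,j} < C$, and we split into the finitely many values $x_{i,j}=c$. In each case we substitute $\mathbf{x}_k = c\,\mathbf{x}_{k'}$, which eliminates $\mathbf{x}_k$ from all inequalities and modular constraints and again lowers the number of variables. Alternatively, we may view this as adding the non-trivial equation $\mathbf{x}_k - c\mathbf{x}_{k'}=0$ and invoking Remark~\ref{rem:reduce when equality}. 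The same viewpoint handles the first case, via the non-trivial equation $\mathbf{x}_k - z^f\mathbf{x}_{k'}=0$, so both cases reduce to the remark.

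The main obstacle is bookkeeping rather than mathematics. After the substitution, the system must still have exactly the form of Proposition~\ref{prop:inequalities + mods}. In particular, the bijection $F$ must be adjusted, and the variable moduli must still be powers of a base times a constant. The cleanest route is to phrase everything as adding one non-trivial equation and then appeal to Remark~\ref{rem:reduce when equality}, which by the proof of Proposition~\ref{prop:inequalities + mods} yields systems with strictly fewer variables.
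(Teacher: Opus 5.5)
Your proposal is correct and follows the paper's own proof. The same-level bound $\mathbf{x}_k/\mathbf{x}_{k'}\in(1,C)$ leaves only finitely many values for $n_j-n_{j'}$ (property~\ref{step 4a}) or for $x_{i,j}$ (property~\ref{step 4b}), and substituting each value removes a variable. One small slip: $\mathbf{x}\equiv 0 \pmod{z^{f}z^{n_{j'}}}$ is a variable-modulus constraint, not a constant-modulus one; your alternative of writing $x_{i,j}z^{n_j}=z^f\cdot x_{i,j}z^{n_{j'}}$ and rescaling the column by $z^f$ makes that condition unnecessary anyway.
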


\begin{subproplist}
\item\label{step 4c} If $z \in \{\alpha,\beta\}$ and $\mathbf{u} \ne \mathbf{0}$ is in the orthogonal space $U_\ell$, then the coefficient in~$\mathbf{u}$ of some power variable with exponential base $z$ is not zero.
\end{subproplist}

It remains to explain and establish properties~\ref{step 3d} and~\ref{step 4c}. 
We obtain these properties level by level while preserving property~\ref{step 3a}.
Let us go through an example.

\begingroup
\renewcommand{\theexample}{} 
\makeatletter
\def\@thmcountersep{} 
\makeatother
\begin{example}[Continuation of Example~\ref{ex:outline}]
    With the system~\eqref{ex:three variables}, assume that
    $\mathbf{x}_1 = 3^{n_1}$, $\mathbf{x}_2 = 2^{n_2}$, $\mathbf{x}_3 = x3^{n_3}$, and $\mathbf{x}_4 = 3^{n_3}$.
    Recall that $\mathbf{x}_1,\mathbf{x}_2,\mathbf{x}_3$ and $\mathbf{x}_4$ form the two levels of this system. 
    Moreover, the orthogonal subspace of the first level $U_1$ is generated by $(1, 0, -2)$.

    Let us start with property~\ref{step 4c}. 
    It is violated on the first level as $(1, 0, -2) \in U_1$, implying that $|3^{n_1} - 2x3^{n_3}|$ has to be small, where each term here has the same exponential base. 
    As $3^{n_3} < 3^{n_1}$, $3^{n_1} - 2x3^{n_3}$ is always a multiple of $3^{n_3}$, which puts quite some restraints on this difference. 
    We solve this problem by introducing a new variable $\mathbf{x}_3' = y3^{n_3} = \mathbf{x}_1 - 2\mathbf{x}_3$. 
    This introduction gives a new equality with which we eliminate $\mathbf{x}_3 = \frac{1}{2}\mathbf{x}_1-\frac{1}{2}\mathbf{x}_3'$.
    To remember that $\mathbf{x}_1 - \mathbf{x}_3' \equiv 0 \pmod {2\mathbf{x}_4}$, we add this modular constraint with variable modulus to our system. 
    Next, we have to ensure that $y > 0$, which we do by splitting into cases according to its sign $s$ and mapping $\mathbf{x}_3$ to $s\mathbf{x}_3$.
    Then, depending on $s$, we obtain the following system:
    \begin{equation*}
        \begin{pmatrix} 1 & -1 & 0& 0\\-1/2 & 1 & s/2& 0 \\1/2 & 0& -s/2 & -1\\0 & 0 & 0& 1\\0 & 0 & s& 3 \\ 0 & 0 & -s& -1\end{pmatrix}
        \begin{pmatrix}     \mathbf{x}_1\\\mathbf{x}_2\\\mathbf{x}_3'\\\mathbf{x}_4\end{pmatrix} \ge \mathbf{0}\,.
    \end{equation*}
    The bottom two rows indicate that $s = -1$ is the only viable option. 
    We are done, save for having to insert $\mathbf{x}'_3$ back into the ordering $\mathbf{x}_1 > \mathbf{x}_2 > \mathbf{x}_4 > 0$. 
    Equality would add a non-trivial equality, allowing a direct reduction by Remark~\ref{rem:reduce when equality}, so we only have to consider the four possibilities of strict inequalities. 
    If $\mathbf{x}_2 > \mathbf{x}_3'$, $\mathbf{x}_1$ and $\mathbf{x}_2$ form the first level (take $\mathbf{x}_1 = 3$ and $\mathbf{x}_2 = 2$), and this first level satisfies property~\ref{step 4c}.
    Else, as the bottom two rows imply that $\mathbf{x}'_3$ and $\mathbf{x}_4$ are on the same level, all four power variables are on the same level, which violates properties~\labelcref{step 4a,step 4b}, allowing us to reduce the number of variables.
    In general, property~\ref{step 4c} follows from repeatedly applying this procedure.

    Next, for property~\ref{step 3d}, take the modular constraint $\mathbf{x}_1 + \mathbf{x}_3 \equiv 0 \pmod{9\mathbf{x}_4}$.
    That is, $3^{n_1} + x3^{n_3} \equiv 0 \pmod{9\cdot 3^{n_3}}$.
    Then this property is violated by both $\mathbf{x}_1$ and $\mathbf{x}_3$ as $n_1 > n_3$. 
    
    First, let us satisfy it for $\mathbf{x}_1$.
    If $3^{n_1} > 3\mathbf{x}_4$, then $3^{n_1} \equiv 0 \pmod{9\mathbf{x}_4}$, and so we consider the options $\mathbf{x}_1 = 3\mathbf{x}_4$ (leading to fewer variables) and $\mathbf{x}_1 > 3\mathbf{x}_4$ (which we add to our inequalities and leads to the equivalent modular constraint $x3^{n_3} \equiv 0 \pmod{9\cdot 3^{n_3}}$).
    Thus, we dealt with $\mathbf{x}_1$.

    For $\mathbf{x}_3$, it is a little bit more complicated.
    Consider the options $\mathbf{x}_3 = 9\mathbf{x}_3'+i\mathbf{x}_4$ for $i = 0,\dots,8$ to get that
    $i\mathbf{x}_4 \equiv 9\mathbf{x}_3' + i\mathbf{x}_4 \equiv \mathbf{x}_3 \equiv 0 \pmod{9\mathbf{x}_4}$, allowing us to remove it from the modular constraint to obtain $i\mathbf{x}_4 \equiv 0 \pmod{9\mathbf{x}_4}$, which now satisfies the property.
    Then we make the substitution for $\mathbf{x}_3$ to $\mathbf{x}_3'$ and have to insert $\mathbf{x}_3'$ in the ordering $\mathbf{x}_1>\mathbf{x}_2>\mathbf{x}_4>0$.

    The difficulty in the following lemma is showing this entire procedure terminates.
\end{example}
\endgroup
\begin{lemma}\label{lem:replacements terminate 4c}
    Assume properties~\ref{step 1},~\ref{step 2},~\ref{step 3a},~\ref{step 3b},~\ref{step 4a}, and~\ref{step 4b} hold. Then we can reduce to a finite number of systems that satisfy properties~\ref{step 3a},~\ref{step 3d}, and~\ref{step 4c}.
\end{lemma}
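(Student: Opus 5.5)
We will treat the levels one at a time, from the first (largest) level downwards, and maintain the invariant that all already processed levels satisfy properties~\ref{step 3d} and~\ref{step 4c}, while property~\ref{step 3a} holds throughout. Whenever an intermediate system violates one of properties~\ref{step 1},~\ref{step 2},~\ref{step 4a} or~\ref{step 4b}, or yields a non-trivial equation, we reduce to systems with strictly fewer variables (Lemmas~\ref{lem:properties 1,2} and~\ref{lem:step 4a, 4b} and Remark~\ref{rem:reduce when equality}). Termination is then proved by induction on $t$, with a secondary measure for the steps that do not lower $t$.

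\textbf{Property~\ref{step 4c}.} Fix the current level $\ell$ and compute a basis of $U_\ell$ (Lemma~\ref{lem:compute basis orth space}). If some $\mathbf{u}\in U_\ell\setminus\{\mathbf{0}\}$ has non-zero coefficients only on variables with base $z$, such a $\mathbf{u}$ can be found by linear algebra: intersect $U_\ell$ with the coordinate subspace of $z$-based variables. Let $\mathbf{x}_k$ be the variable with non-zero coefficient in $\mathbf{u}$ whose exponential part $z^{n}$ is smallest, and let $\mathbf{x}_{k'}=z^n$ be the corresponding pure power variable. Then $\mathbf{u}\cdot\mathbf{x}$ is an integer multiple of $z^n$, which we write as $y z^n$ after splitting on sign and on $y=0$; the case $y=0$ gives a non-trivial equation. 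We introduce the new linear power variable $\mathbf{x}'=\mathbf{u}\cdot\mathbf{x}$ and eliminate $\mathbf{x}_k$ via this equality, as in the example. The divisibility that was lost is recorded in the modular constraint $\sum_{k''\ne k}\mathbf{u}_{k''}\mathbf{x}_{k''}-\mathbf{x}'\equiv0\pmod{|\mathbf{u}_k|\,\mathbf{x}_{k'}}$. This constraint satisfies property~\ref{step 3a} because every other variable involved has exponential part at least $z^n$. Finally we split into the finitely many ways of inserting $\mathbf{x}'$ into the ordering.

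The key point is that $\mathbf{u}\in U_\ell$ forces $|\mathbf{u}\cdot\mathbf{x}|$ to be bounded by a multiple of the lower-level variables. Hence $\mathbf{x}'$ moves strictly below level $\ell$, and the polyhedron of level $\ell$ loses one variable. This lowers the dimension of $U_\ell$ or the number of variables on level $\ell$, and the process can repeat only finitely often for the fixed level.

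\textbf{Property~\ref{step 3d}.} For each variable-modulus constraint with modulus $d_i z^{n_{k_i}}$, where $d_i$ is a power of $z$ by property~\ref{step 3b}, we handle each offending $\mathbf{x}_k$ with base $z$ in one of two ways.

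If $\mathbf{x}_k$ has exponential part strictly larger than $z^{n_{k_i}}$, we split into the finitely many cases $\mathbf{x}_{k'}=z^{c}\mathbf{x}_{k''}$ for $z^{c}< d_i$, each a non-trivial equation. In the remaining case $\mathbf{x}_{k'}\ge d_i\mathbf{x}_{k''}$, the term vanishes modulo the modulus and can be dropped.

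If $\mathbf{x}_k = x z^{n_{k_i}}$ is a linear power variable, we substitute $x = d_i x' + j$ for $0\le j<d_i$. This replaces $\mathbf{x}_k$ by the new linear variable $d_i x' z^{n_{k_i}}$ plus $j\mathbf{x}_{k''}$. The new variable contributes $0$ to the constraint, so only the pure power $\mathbf{x}_{k''}$ remains there. We then reinsert the new variable into the ordering.

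Both substitutions are unimodular changes of the variables of lower levels, so they preserve property~\ref{step 4c} on levels already processed. We therefore perform them before passing to the next level, or interleave them and re-run the~\ref{step 4c} step.

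\textbf{Main obstacle: termination.} The substitutions re-order variables, so a new variable may land on an already processed level and destroy~\ref{step 4c} there, or a fresh orthogonal vector may appear. We would first try a lexicographic measure: the vector of level sizes from top to bottom, followed by the dimensions of the $U_\ell$, and finally the number of variables that violate~\ref{step 3d}. We would then check that every substitution either strictly decreases this measure or reduces $t$. The~\ref{step 4c} step removes a variable from level $\ell$ and adds one to a strictly lower level, which is a lexicographic decrease. The~\ref{step 3d} step replaces $\mathbf{x}_k$ by a variable with the same exponential part, so it lies on the same level or lower; this needs a careful check, and it is where I expect the real work to lie.
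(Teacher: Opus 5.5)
Your constructive steps match the paper's: processing the levels in order, replacing $\mathbf{u}\cdot\mathbf{x}$ by a new linear power variable with a compensating variable-modulus constraint, and the two case splits for property~\ref{step 3d}. But the termination argument, which is the real content of this lemma, has a genuine gap. The claim that $\mathbf{x}'$ ``moves strictly below level $\ell$'' does not follow. Lemma~\ref{lem:Bound for thinness} only gives $\mathbf{x}' < d\,\mathbf{x}_{\lambda'+1}$ for some constant $d$. So among the orderings you split over, there are branches in which $\mathbf{x}'$ is inserted before $\mathbf{x}_{\lambda'+1}$, even before some $\mathbf{x}_j$ of level $\ell$ or of an already processed earlier level. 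There $\mathbf{x}_j < \mathbf{x}' < d\,\mathbf{x}_{\lambda'+1}$ puts $\mathbf{x}_j$ and $\mathbf{x}_{\lambda'+1}$ on a common level, so whole levels merge. Nothing forces such a merge to violate properties~\ref{step 4a} or~\ref{step 4b}, so your induction on $t$ does not absorb it. In that branch your vector of level sizes strictly \emph{increases} at the first merged position, and a processed level can lose property~\ref{step 4c}. Hence your lexicographic measure is not monotone, and for the reinsertion in the~\ref{step 3d} step you leave the argument open altogether.

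The missing idea is to use the intrinsic dimension of each level's induced subpolyhedron as the potential. It lies between $1$ and $t$. When levels merge, the merged level's intrinsic dimension is at least the sum of those of the first and last merged levels, hence strictly larger than before. In the non-merging branch of the~\ref{step 4c} step, level $\ell$ keeps its intrinsic dimension while its ambient dimension drops by one. For the~\ref{step 3d} substitution $\mathbf{x}_k = d_i\mathbf{x}'_k + j\mathbf{x}_{k''}$ one has $\mathbf{x}_k/\mathbf{x}'_k \le d_i + j$. So $\mathbf{x}'_k$ stays on the old level of $\mathbf{x}_k$, and reinserting it past later variables is again a merge that raises that level's intrinsic dimension. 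Consequently, every elimination in, or merge into, the first level strictly increases its pair (intrinsic dimension, minus ambient dimension) lexicographically within a finite range, so this happens only finitely often. The same argument then applies to the second level, and so on. This is how the paper proves termination, including for the alternation between the two procedures. (You should also observe that the variable of minimal exponential part in $\mathbf{u}$ is necessarily a linear power variable, by properties~\ref{step 1},~\ref{step 4a}, and~\ref{step 4b}; otherwise it could not be eliminated.)
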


\subsection{Levels and the bottom family}\label{sec:levels and bottom family}
In this section, we formally define levels, bases for their orthogonal subspaces, the bottom family, and power variables with small coefficients.

Recall that we have $t$ variables $\mathbf{x}_1 > \cdots > \mathbf{x}_t > 0$ where each $\mathbf{x}_k$ is (uniquely) of the form $z_j^{n_j}$ (a pure power variable) or $x_{i,j}z_j^{n_j}$ (a linear power variable).
We will actually use a looser assumption: Each $\mathbf{x}_k$ is strictly positive and there is a number $D \ge 1$ such that $\mathbf{x}_k/\mathbf{x}_{k+1} > D$ for $1 \le k \le t-1$ whenever $A\mathbf{x} \ge \mathbf{0}$.

The goal of this section is to partition $\mathbf{x}_1, \dots, \mathbf{x}_t$ into \emph{levels}. 
We partition the variables into sets whose ratios are bounded, while the ratio of variables in different levels can be arbitrarily large. 

\begin{definition}
    A \emph{partition of $1,\dots,t$ into levels} is given by the number of levels $L$, an increasing, surjective map $\mathcal{L} \colon \{1,\dots,t\} \to \{1, \dots,L\}$, and a number $C$ such that for all $j,k \in \{1,\dots,t\}$,
\begin{equation*}
    \mathcal{L}(j) = \mathcal{L}(k) \Longleftrightarrow \mathbf{x}_j/\mathbf{x}_k \in (1/C,C) \text{ whenever } A\mathbf{x} > \mathbf{v}\,.
\end{equation*}
\end{definition}
If $j, k$ are such that $\mathcal{L}(j) < \mathcal{L}(k)$, then we say that $\mathbf{x}_j$ is at a \emph{lower level} than $\mathbf{x}_k$, and if $\mathcal{L}(j) > \mathcal{L}(k)$, $\mathbf{x}_j$ is at a \emph{higher level} than $\mathbf{x}_k$.
We call $\mathcal{L}^{-1}(1)$ the \emph{lowest level} and $\mathcal{L}^{-1}(\ell)$ the \emph{$\ell$th level}.
By abuse of notation, we sometimes refer to an index $k$ being on a certain level and sometimes to an indeterminate $\mathbf{x}_k$ on a certain level.
For the remainder of the paper, we use the notation $L$, $\mathcal{L}$, and $C$ solely for the unique partition into levels.
\begin{proposition}\label{prop:levels}
    Let $A\mathbf{x} > \mathbf{v}$ be a non-thin polyhedron. 
    There is a partition of $1,\dots,t$ into levels, and for this partition, $L$, $\mathcal{L}$, and $C$ can be effectively computed.
\end{proposition}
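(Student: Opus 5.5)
We work with the closure $\overline{P} = \{\mathbf{x} \in \rat^t : A\mathbf{x} \ge \mathbf{v}\}$, which contains the polyhedron $A\mathbf{x} > \mathbf{v}$. By the Minkowski--Weyl theorem we write $\overline{P} = Q + K$. Here $Q$ is a polytope, the convex hull of finitely many computable points $\mathbf{q}_1,\dots,\mathbf{q}_m$. The set $K$ is the recession cone $A\mathbf{x} \ge \mathbf{0}$, generated by computable vectors $\mathbf{a}_1,\dots,\mathbf{a}_r$ (double description method). We also use the standing assumptions: the entries of $\mathbf{x}$ are positive integers, so $\mathbf{x}_t \ge 1$ (we add this inequality to the system if necessary), and $\mathbf{x}_1 > \cdots > \mathbf{x}_t$.

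The plan is to decide, for each consecutive pair $k,k+1$, whether the ratio $\mathbf{x}_k/\mathbf{x}_{k+1}$ is bounded on the polyhedron. Call $k$ a \emph{break} if some generator $\mathbf{a}_i$ has $(\mathbf{a}_i)_k > 0$ and $(\mathbf{a}_i)_{k+1} = 0$. This condition is checkable from the finite list of generators. Then set $\mathcal{L}(1) = 1$, let $\mathcal{L}(k+1) = \mathcal{L}(k)+1$ if $k$ is a break and $\mathcal{L}(k+1) = \mathcal{L}(k)$ otherwise, and let $L = \mathcal{L}(t)$. By construction $\mathcal{L}$ is increasing and surjective onto $\{1,\dots,L\}$.

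\textbf{Break implies unbounded ratio.} Let $k$ be a break. Since the polyhedron is non-empty (non-thin, in particular), we can fix a point $\mathbf{p}$ in it. For every $s \ge 0$ the point $\mathbf{p} + s\mathbf{a}_i$ still satisfies $A\mathbf{x} > \mathbf{v}$, and $(\mathbf{p}+s\mathbf{a}_i)_k/(\mathbf{p}+s\mathbf{a}_i)_{k+1} \to \infty$ as $s \to \infty$. Moreover, whenever $j \le k < j'$, the ordering gives $\mathbf{x}_j/\mathbf{x}_{j'} \ge \mathbf{x}_k/\mathbf{x}_{k+1}$. Hence indices on different levels have unbounded ratio, and no constant $C$ can bound it.

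\textbf{No break implies bounded ratio.} This is the direction I expect to be the main obstacle. Suppose $k$ is not a break. Then every generator with $(\mathbf{a}_i)_{k+1} = 0$ also has $(\mathbf{a}_i)_k = 0$. So there is a computable $c$, the maximum of $(\mathbf{a}_i)_k/(\mathbf{a}_i)_{k+1}$ over generators with $(\mathbf{a}_i)_{k+1} > 0$, such that $\mathbf{w}_k \le c\,\mathbf{w}_{k+1}$ for all $\mathbf{w} \in K$.

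Write a point as $\mathbf{x} = \mathbf{q} + \mathbf{w}$ with $\mathbf{q} \in Q$ and $\mathbf{w} \in K$, and let $M$ bound $|\mathbf{q}_k|$ and $|\mathbf{q}_{k+1}|$ over $Q$. Then
\[
\mathbf{x}_k \le M + c\,\mathbf{w}_{k+1} \le M + c(\mathbf{x}_{k+1} + M).
\]
The danger is that $\mathbf{x}_{k+1}$ is tiny compared with $M$. Here the lower bound $\mathbf{x}_{k+1} \ge \mathbf{x}_t \ge 1$ rescues us: we get $\mathbf{x}_k/\mathbf{x}_{k+1} \le M + c + cM =: C_k$. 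The reverse bound $\mathbf{x}_{k+1}/\mathbf{x}_k < 1$ is immediate from the ordering.

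Now take $C$ to be one more than the product of all $C_k$ over non-breaks. For $j \le j'$ on the same level, the ratio $\mathbf{x}_j/\mathbf{x}_{j'}$ telescopes through consecutive non-breaks and is therefore less than $C$. For indices on different levels the ratio is unbounded, as shown above. This establishes the required equivalence. Every ingredient, namely the generators, the vertices of $Q$, $c$, $M$ and $C$, is computable by linear programming (Lemmas~\ref{lem:lin programming trick} and~\ref{lem:compute basis orth space}).

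Finally, uniqueness follows from the characterisation: two indices share a level exactly when their ratio is bounded. This agrees with the description via the induced subpolyhedra $B_\ell$ given earlier.
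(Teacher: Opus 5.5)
Your proof is correct, and it reaches the same partition as the paper by a different mechanism.

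\textbf{The paper's route.} It takes the first level to be the smallest $\lambda$ for which the recession cone contains a vector with support exactly $\{1,\dots,\lambda\}$, decided via Presburger arithmetic. It bounds ratios inside that level on the cone by a Bolzano--Weierstrass argument combined with the minimality of $\lambda$. It then finds $C$ by a search, projects the level away, and iterates.

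\textbf{Why your breaks agree with it.} Your breaks encode exactly the same criterion. Cone vectors are non-negative and non-increasing, so a generator with $(\mathbf{a}_i)_k>0=(\mathbf{a}_i)_{k+1}$ exists precisely when the cone contains a vector with support $\{1,\dots,k\}$. This also holds at higher levels, after the paper's projection.

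\textbf{What each route buys.} Working with explicit generators from the double description method lets you replace the compactness argument, the search for $C$ and the iteration by one pass with an explicit constant. You also spell out the passage from the cone to the inhomogeneous polyhedron through $\overline{P}=Q+K$ and $\mathbf{x}_{k+1}\ge 1$. The paper only hints at this step (``one takes $\mathbf{v}$ into account''), and it genuinely needs integrality. For instance, for $\mathbf{x}_1>\mathbf{x}_2>0$, $\mathbf{x}_1<2\mathbf{x}_2+1$ the cone has no break, yet $\mathbf{x}_1/\mathbf{x}_2$ is unbounded over the rationals. You also use non-thinness only to get a point of the polyhedron. The paper's iterative route, in turn, produces the induced subpolyhedra $B_\ell$ directly, and it needs these afterwards for the spaces $U_\ell$.

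\textbf{One detail to tighten.} Your bounded direction holds only for integer points, so the unbounded direction should also be witnessed by integer points. Choose $\mathbf{p}$ integral, which is possible because a non-thin open polyhedron has a full-dimensional recession cone and hence contains arbitrarily large balls. Then scale $\mathbf{a}_i$ to be integral and let $s$ range over $\nat$.
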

\begin{proof}
    We inductively construct $\mathcal{L}$ and $C$. 
    Then $L$ is the cardinality of the image of~$\mathcal{L}$.
    To start, set $C = 1$ and $A' = A$.
    Using the decidability of Presburger arithmetic, find the smallest $1 \le \lambda < t$ such that $A_{1:\lambda}'\mathbf{x}' \ge \mathbf{0}$ for some $\mathbf{x}' \in \rat_{> 0}^\lambda$.

    We first show that $\mathbf{x}_1/\mathbf{x}_\lambda$ is bounded whenever $A'_{1:\lambda}(\mathbf{x}_1,\dots,\mathbf{x}_\lambda) \ge \mathbf{0}$.
    Otherwise, let $(\mathbf{y}^{(n)})_{n=0}^\infty \subset \rat_{> 0}^\lambda$ satisfy $A'_{1:\lambda}\mathbf{y}^{(n)} \ge \mathbf{0}$ and $\mathbf{y}_1^{(n)}/\mathbf{y}_\lambda^{(n)} > n$.
    By homogeneity of the inequality $A'\mathbf{y}^{(n)} \ge \mathbf{0}$, we can assume that $\mathbf{y}^{(n)}_1 = 1$. 
    Thus, $\mathbf{y}^{(n)}_k < D^{-(k-1)}$ for $1 \le k \le \lambda$ and $0 < \mathbf{y}^{(n)}_\lambda < 1/n$.
    The Bolzano-Weierstrass theorem gives a limit point $\tilde{\mathbf{y}} \in [0,D^{-(\lambda-1)}]^\lambda$ of  $(\mathbf{y}^{(n)})_{n=0}^\infty$ for which
    $\tilde{\mathbf{y}}_1 = 1$, $\tilde{\mathbf{y}}_\lambda = 0$, and $A_{1:\lambda}'\tilde{\mathbf{y}} \ge \mathbf{0}$.
    As $A'$ has rational entries, there is such a point $\mathbf{y} \in \rat^\lambda$.
    Multiplying $\tilde{\mathbf{y}}$ with an appropriate integer gives an integer-valued vector $\mathbf{y}$ such that $A'_{1:\lambda}\mathbf{y} \ge \mathbf{0}$ and $\mathbf{y}_\lambda = 0$. 
    Thus, for some $1 \le j < \lambda$, $\mathbf{y}_j \ne \mathbf{y}_{j+1} = \cdots = \mathbf{y}_\lambda = 0$ and so $\mathbf{y}_1,\dots,\mathbf{y}_j > 0$ as $\mathbf{y}_k\ge D\mathbf{y}_{k+1} $ for $1 \le k < \lambda$. 
    Thus, $A'_{1:j}(\mathbf{y}_1,\dots,\mathbf{y}_j) \ge \mathbf{0}$, contradicting the minimality of $\lambda$ and thus proving our claim.
    
    Thus, there is a number $C'$ such that $\mathbf{x}_j/\mathbf{x}_k \in (1/C',C')$ for all $1 \le j,k\le \lambda$, and any $C \ge C'$ has this property. 
    To compute a $C$, simply add the inequality $\mathbf{x}_1  \le C\mathbf{x}_\lambda$ to $A'$ to get a matrix $B'$ until $B'(\mathbf{x}_1,\dots,\mathbf{x}_\lambda) \ge \mathbf{0}$ does not hold for all $\mathbf{x}$. 
    Now take $C$ large enough that it has this property for all levels and update $A'$ to the projection of $A'\mathbf{x} \ge \mathbf{0}$ on $\mathbf{x}_\lambda,\dots,\mathbf{x}_t$.

    If no such $\lambda$ exists, try $\lambda = t$, where in the computation of $C$, one takes $\mathbf{v}$ into account. As the original polyhedron $A\mathbf{x} > \mathbf{v}$ was not thin, $A'\mathbf{x} \ge \mathbf{0}$ indeed has a solution.
\end{proof}

Next, we show how to compute a basis for the orthogonal subspaces of the levels.
For a level $1 \le \ell \le L$, let $B_\ell$ denote the matrix $A'_{1:\lambda}$ when computing the $\ell$th level. 
This $B_\ell$ gives the \emph{induced subpolyhedron} $B_\ell(\mathbf{x}_{\lambda},\dots,\mathbf{x}_{\lambda'}) \ge \mathbf{0}$ with a \emph{orthogonal complement of the $\ell$th level} $U_\ell$.
Let $p_\ell$ denote the dimension of $U_\ell$ and $\mathbf{u}_{\ell,1},\dots,\mathbf{u}_{\ell,p_\ell}$ a basis of $U_\ell$, which we can compute by Lemma~\ref{lem:compute basis orth space}.
We observe that each induced subpolyhedron $B_\ell$ has intrinsic dimension at least one. 
Otherwise, $\mathbf{0}$ is the only point in an induced subpolyhedron, which contradicts that the induced subpolyhedron contains a point $\mathbf{x}' > \mathbf{0}$.

Finally, we formally define the bottom family and power variables with small coefficients.

\begin{definition}\label{def:bottom family}
    Assume property~\ref{step 1}. Let $f \ge 1$ and  $F \subseteq \{1,\dots,f\}$ be such that $k \in F$ if and only if 
    \begin{itemize}
        \item $\mathbf{x}_k$ is a pure power variable; or
        \item $\mathbf{x}_k = x_{i,j}z_j^{n_j}$ is a linear power variable with an accompanying pure power variable $\mathbf{x}_{k'} = z_j^{n_j}$ satisfying $k' \le f$.
    \end{itemize}
    Further, assume $f \ge 1$ is the smallest number such that
\begin{enumerate}
    \item $f = t$ or $\mathcal{L}(f) \ne \mathcal{L}(f+1)$; and
    \item the projection of the polyhedron $A_{1:f}(\mathbf{x}_1,\dots,\mathbf{x}_f) \ge \mathbf{0}$ onto the coordinates in $F$ is not thin.
\end{enumerate}
Then the set $\{1,\dots,f\}$ forms the bottom family and $F$ indexes the power variables with small coefficients.
\end{definition}

Forcing $f \ge 1$ ensures that the bottom family is non-empty, and item~1 ensures that the bottom family only contains entire levels.
Item~2 says that there is no linear combination of power variables with small coefficients $\sum_{j=1}^e c_j\mathbf{x}_{k_j}$ that is non-trivial (i.e., not all $c_j$ are zero) whose size is bounded by a multiple of a power variable outside the bottom family. 

\begin{lemma}\label{lem:bottom family computable}
    One can compute the bottom family and power variables with small coefficients, i.e., $f$ and $F$.
\end{lemma}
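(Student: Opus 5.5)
The plan is to compute $f$ by brute-force search over the finitely many candidates $f \in \{1,\dots,t\}$, in increasing order, checking both defining conditions of Definition~\ref{def:bottom family} effectively for each. The first condition is decidable because, by Proposition~\ref{prop:levels}, we can compute the partition into levels; in particular the map $\mathcal{L}$ is available. So for a given $f$ we simply check whether $f = t$ or $\mathcal{L}(f) \ne \mathcal{L}(f+1)$.

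Next, for a fixed candidate $f$, we compute the index set $F \subseteq \{1,\dots,f\}$ directly from its definition. The bijection $F$ of Proposition~\ref{prop:inequalities + mods} tells us, for each $k \le f$, whether $\mathbf{x}_k$ is a pure power variable. If $\mathbf{x}_k$ is a linear power variable $x_{i,j}z_j^{n_j}$, the same bijection identifies the index $k'$ of its accompanying pure power variable $z_j^{n_j}$, and we test whether $k' \le f$. This is a finite syntactic check.

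For the second condition, we form the polyhedron $A_{1:f}(\mathbf{x}_1,\dots,\mathbf{x}_f) \ge \mathbf{0}$ and compute its projection onto the coordinates in $F$. Lemma~\ref{lem:lin programming trick} is stated for strict inequalities, whereas here the inequalities are non-strict. For the homogeneous cone, however, the projection is again a polyhedral cone, and Fourier–Motzkin elimination computes it effectively in the non-strict case as well. Alternatively, we can work with generators: we compute the Minkowski–Weyl generators $\mathbf{a}_1,\dots,\mathbf{a}_r$ by the double description method and project each of them onto the coordinates in $F$. The projected vectors generate the projected cone.

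We then decide thinness of this projected cone with Lemma~\ref{lem:thinness computable}. For a cone, thinness should be equivalent to having a nonzero orthogonal complement. Indeed, a linear functional bounded on a cone must vanish on the whole cone. So it suffices to check whether the projected generators span all of $\rat^{|F|}$, which is Lemma~\ref{lem:compute basis orth space} applied to the projection. The minimal $f$ passing both tests is the bottom family, and the corresponding $F$ indexes the power variables with small coefficients.

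Finally, the search must terminate with some $f$. I expect this to be the main subtle point: showing that $f = t$ always satisfies the second condition. In that case $F = \{1,\dots,t\}$ by property~\ref{step 1}, since every accompanying pure power variable has an index $\le t$. So we need the cone $A\mathbf{x} \ge \mathbf{0}$ itself not to be thin. I would derive this from property~\ref{step 2} as follows. If a nonzero $\mathbf{u}$ were orthogonal to the recession cone, then $\mathbf{u}\cdot\mathbf{x}$ would be bounded on $A\mathbf{x} > \mathbf{v}$. This holds because that polyhedron equals a bounded polytope plus its recession cone, and it is nonempty since otherwise the system trivially has no solution. A bounded $\mathbf{u}\cdot\mathbf{x}$ contradicts non-thinness. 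Hence some $f \le t$ works, and the minimal one is computed.
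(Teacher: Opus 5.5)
Your proposal is correct and follows the paper's proof: search for the least $f \le t$, check the level condition via $\mathcal{L}$, and decide the second condition by projecting the cone onto the coordinates in $F$ and testing whether its orthogonal complement is nonzero (Lemmas~\ref{lem:lin programming trick} and~\ref{lem:compute basis orth space}). You also spell out three points the paper leaves implicit: a cone is thin exactly when its orthogonal complement is nontrivial; the non-strict projection can be handled by Fourier--Motzkin; and $f = t$ always qualifies, which does follow from property~\ref{step 2} via the polytope-plus-recession-cone decomposition.
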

\begin{proof}
    Take the smallest $1 \le f \le t$ that satisfies the conditions above.
    As $t$ always satisfies these conditions, it remains to show that one can determine whether the two conditions above are satisfied, where only the second condition is non-trivial, which immediately follows by combining Lemmas~\ref{lem:lin programming trick} and~\ref{lem:compute basis orth space}.
\end{proof}

Having shown we can compute the levels, bottom family, etc., we verify that we can assume all properties outlined in Section~\ref{sec:outline} are satisfied.
\begin{lemma}\label{lem:all properties}
    We can reduce to systems where properties~\ref{step 1},~\ref{step 2},~\labelcref{step 3a,step 3b,step 3c,step 3d}, and~\labelcref{step 4a,step 4b,step 4c} are satisfied simultaneously.
\end{lemma}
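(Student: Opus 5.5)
The plan is to combine the individual reduction lemmas of Section~\ref{sec:outline} into one terminating procedure. Termination is measured by a well-founded order on systems. Every reduction either strictly decreases the number of variables $t$, or keeps $t$ fixed and moves us to a system satisfying more of the properties. Since $t$ never increases (it is only the number of modular constraints that may grow), we argue by induction on $t$. For a fixed $t$, we show that a fixed ordering of the reduction steps either exits to systems with fewer variables, decides the system outright, or ends with all properties holding. The base case $t=0$ is trivial, because the formula is then variable-free.

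For fixed $t$ the steps run in the following order.
\begin{enumerate}
\item Apply Lemma~\ref{lem:properties 1,2}, so that properties~\ref{step 1} and~\ref{step 2} hold, or we drop to fewer variables, or we decide the system directly.
\item With the polyhedron non-thin, Proposition~\ref{prop:levels} and Lemma~\ref{lem:bottom family computable} compute the levels, the bases of the spaces $U_\ell$, and the bottom family.
\item Apply Lemma~\ref{lem:property 3a} and then Lemma~\ref{lem:d_i power of z_(k_i) 3b}. The latter preserves property~\ref{step 3a} and the polyhedron, so the levels do not change.
\item Apply Lemma~\ref{lem:step 4a, 4b}. If property~\ref{step 4a} or~\ref{step 4b} fails, this step reduces $t$.
\item Apply Lemma~\ref{lem:replacements terminate 4c}, whose hypotheses now all hold. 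It yields finitely many systems satisfying properties~\ref{step 3a},~\ref{step 3d}, and~\ref{step 4c}.
\end{enumerate}

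The subtle point is in step~5. The substitutions behind Lemma~\ref{lem:replacements terminate 4c} change the polyhedron: variables are replaced by new ones and the ordering is re-inserted. Properties~\ref{step 1},~\ref{step 2},~\ref{step 3b},~\ref{step 4a}, and~\ref{step 4b} may therefore fail afterwards. We re-check each of them on every output system.
\begin{itemize}
\item If property~\ref{step 1},~\ref{step 2},~\ref{step 4a}, or~\ref{step 4b} fails, Lemmas~\ref{lem:properties 1,2} and~\ref{lem:step 4a, 4b} give a strict decrease in $t$, and the induction hypothesis applies.
\item If property~\ref{step 3b} fails, we apply Lemma~\ref{lem:d_i power of z_(k_i) 3b} again. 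This only multiplies constraints and splits moduli via the Chinese Remainder theorem. It leaves the polyhedron unchanged, so levels, $U_\ell$, and property~\ref{step 4c} are preserved. It keeps the variables appearing in each constraint and keeps the base $z_{k_i}$, so properties~\ref{step 3a} and~\ref{step 3d} are preserved as well.
\end{itemize}
Hence, after one round, either $t$ drops or all properties except~\ref{step 3c} hold simultaneously.

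Finally, apply Lemma~\ref{lem:step 3c}, whose hypotheses are exactly the properties now established. Either $t$ drops, or every modular constraint violating property~\ref{step 3c} holds trivially and can be deleted. Deleting constraints leaves the polyhedron untouched, so all other properties survive. Properties~\ref{step 3a},~\ref{step 3b}, and~\ref{step 3d} are statements about individual constraints, so they survive deletion as well.

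I expect the main obstacle to be the bookkeeping that shows the procedure does not cycle when $t$ is unchanged. Concretely, I must verify that each later step preserves what the earlier steps established, most delicately that Lemma~\ref{lem:d_i power of z_(k_i) 3b} does not destroy property~\ref{step 3d}. The first thing I would try is to check this directly from the CRT construction: the rescaled constraint has the same support and the same variable modulus base. If that check fails, the fallback is to order the steps so that property~\ref{step 3b} is re-established before property~\ref{step 3d} is produced, i.e.\ to fold it into the internal iteration of Lemma~\ref{lem:replacements terminate 4c}.
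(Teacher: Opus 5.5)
Your proposal is correct and takes the same route as the paper, whose proof simply chains Lemmas~\ref{lem:property 3a}, \ref{lem:properties 1,2}, \ref{lem:d_i power of z_(k_i) 3b}, \ref{lem:step 4a, 4b}, \ref{lem:replacements terminate 4c}, and~\ref{lem:step 3c} in essentially your order. Your induction on $t$ and your re-check of properties~\ref{step 1}, \ref{step 2}, \ref{step 3b}, \ref{step 4a}, and~\ref{step 4b} after Lemma~\ref{lem:replacements terminate 4c} make explicit bookkeeping that the paper leaves implicit; this is warranted, since that lemma can create moduli such as $2\cdot 3^{n_3}$ that break property~\ref{step 3b}.
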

\begin{proof}
    By Lemma~\ref{lem:property 3a}, property~\ref{step 3a} can always be assumed. Properties~\ref{step 1},~\ref{step 2}, and~\ref{step 3b} can also always be assumed due to Lemmas~\ref{lem:properties 1,2} and~\ref{lem:d_i power of z_(k_i) 3b}. 
    The first is sufficient to compute the levels, bases of the orthogonal subspaces of the levels, bottom family, and power variables with small coefficients due to Proposition~\ref{prop:levels} and Lemma~\ref{lem:bottom family computable}.
    Properties~\labelcref{step 4a,step 4b} can then also be guaranteed by Lemma~\ref{lem:step 4a, 4b}, as well as properties~\ref{step 3d} and~\ref{step 4c} by Lemma~\ref{lem:replacements terminate 4c}.
    Lemma~\ref{lem:step 3c} gives the last property, property~\ref{step 3c}. 
\end{proof}

In the spirit of Proposition~\ref{prop:levels}, we have the following lemma whose proof is in Appendix~\ref{app:levels and bottom family}.

\begin{lemma}\label{lem:Bound for thinness}
    Let $D \in \rel_{>0}$, $e< t$, and $A\mathbf{x} > \mathbf{v}$ be a polyhedron whose internal points satisfy $\mathbf{x}_k > D\mathbf{x}_{k+1}$ for $k = 1,\dots,t-1$. 
    Further, let $\mathbf{c} = (\mathbf{c}_1,\dots,\mathbf{c}_e) \in \intg^e\setminus\{\mathbf{0}\}$ such that $\sum_{i=1}^e \mathbf{c}_i\mathbf{x}_i = 0$ whenever $A_{1:e}(\mathbf{x}_1,\dots,\mathbf{x}_e) \ge \mathbf{0}$. Then there is a computable number $d$ such that $|\sum_{i=1}^e \mathbf{c}_i\mathbf{x}_i| < d\mathbf{x}_{e+1}$ whenever $\mathbf{x} \in\intg_{>0}^t$ satisfies $A\mathbf{x} > \mathbf{v}$. 
\end{lemma}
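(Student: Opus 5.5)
We prove Lemma~\ref{lem:Bound for thinness} by a compactness argument in the style of Proposition~\ref{prop:levels}, followed by an effective computation of $d$ via linear programming.

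The key claim is the following: for every point $\mathbf{x}$ in the closure of the cone $A\mathbf{x}\ge \mathbf{0}$ (with the ordering constraints $\mathbf{x}_k \ge D\mathbf{x}_{k+1}$), the quantity $|\sum_{i=1}^e\mathbf{c}_i\mathbf{x}_i|$ is bounded by a multiple of $\mathbf{x}_{e+1}$. First reduce the inhomogeneous polyhedron to its recession cone. If $\mathbf{x}$ satisfies $A\mathbf{x}>\mathbf{v}$, then $A\mathbf{x}\ge \mathbf{v}$. After writing the polyhedron as a bounded part plus its recession cone (Minkowski--Weyl), every point is $\mathbf{p}+\mathbf{r}$ with $\mathbf{p}$ in a computable polytope and $A\mathbf{r}\ge\mathbf{0}$. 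Since $A_{1:e}$ is the restriction to the first $e$ columns, one has to be careful: $A\mathbf{r}\ge \mathbf{0}$ does not directly give $A_{1:e}(\mathbf{r}_1,\dots,\mathbf{r}_e)\ge\mathbf{0}$. This is precisely what the argument by contradiction below handles.

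Suppose no such $d$ exists. Then there is a sequence $\mathbf{x}^{(n)}\in\intg^t_{>0}$ with $A\mathbf{x}^{(n)}>\mathbf{v}$ and $|\sum_i\mathbf{c}_i\mathbf{x}^{(n)}_i| \ge n\,\mathbf{x}^{(n)}_{e+1}$. Normalise by $\mathbf{x}^{(n)}_1$ to get $\mathbf{y}^{(n)}$, whose entries lie in $[0,1]$ by the ordering. Because the $\mathbf{x}^{(n)}$ are positive integers, $\mathbf{x}^{(n)}_1\to\infty$ along any subsequence where the ratio condition holds unboundedly. Indeed, if $\mathbf{x}^{(n)}_1$ stayed bounded, only finitely many integer points would occur, and the ratio $|\sum\mathbf{c}_i\mathbf{x}_i|/\mathbf{x}_{e+1}\le \sum|\mathbf{c}_i|\mathbf{x}_1$ would be bounded. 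Hence $\mathbf{v}/\mathbf{x}^{(n)}_1\to 0$. By Bolzano--Weierstrass, extract a limit $\tilde{\mathbf{y}}$ with $\tilde{\mathbf{y}}_1=1$ and $A\tilde{\mathbf{y}}\ge\mathbf{0}$. Since $\sum_i|\mathbf{c}_i|\ge |\sum\mathbf{c}_i\mathbf{y}^{(n)}_i| \ge n\mathbf{y}^{(n)}_{e+1}$, we get $\mathbf{y}^{(n)}_{e+1}\to0$, so $\tilde{\mathbf{y}}_{e+1}=\cdots=\tilde{\mathbf{y}}_t=0$. Therefore $A_{1:e}(\tilde{\mathbf{y}}_1,\dots,\tilde{\mathbf{y}}_e)=A\tilde{\mathbf{y}}\ge\mathbf{0}$, and the hypothesis gives $\sum_i\mathbf{c}_i\tilde{\mathbf{y}}_i=0$. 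This alone is not a contradiction, since $\sum\mathbf{c}_i\mathbf{y}^{(n)}_i$ tends to $0$ as well, only more slowly than $\mathbf{y}^{(n)}_{e+1}$.

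To get a genuine contradiction, work with the homogeneous system and exploit its linearity. Consider the cone $K=\{\mathbf{y}: A\mathbf{y}\ge\mathbf{0}\}$, augmented with the ordering constraints. Via Lemma~\ref{lem:lin programming trick}, project $K$ onto the coordinates $(\mathbf{y}_1,\dots,\mathbf{y}_{e+1})$, and then onto the pair of linear forms $(s,w)=(\sum_i\mathbf{c}_i\mathbf{y}_i,\ \mathbf{y}_{e+1})$. The image is a polyhedral cone $P$ in $\rat^2$ generated by finitely many computable rays. Any ray with $w=0$ comes from a point of $K$ with $\mathbf{y}_{e+1}=0$. By the ordering constraints this point also has $\mathbf{y}_{e+2}=\cdots=\mathbf{y}_t=0$, so it lies in $A_{1:e}\ge\mathbf{0}$ and therefore has $s=0$. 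Consequently every generating ray $(s,w)$ with $s\neq0$ has $w>0$, and $d_0=\max|s|/w$ over the generators is finite and computable. It follows that $|s|\le d_0 w$ on all of $K$.

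For the inhomogeneous polyhedron, $A\mathbf{x}>\mathbf{v}$ is contained in a polytope plus $K$. The polytope part contributes a bounded error $E$ to $|s|$. On integer points we have $\mathbf{x}_{e+1}\ge1$, so we may take $d=d_0+E+1$. The main obstacle is this last step: carefully transferring the homogeneous bound to the inhomogeneous integer points, given that the ordering $\mathbf{x}_k>D\mathbf{x}_{k+1}$ is assumed only on interior points. I would handle this by adding these ordering inequalities explicitly to $A$, which does not change the solution set, so that they are part of the recession cone $K$.
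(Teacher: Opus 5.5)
Your proof is correct, but it follows a different route from the paper's. The paper never forms the recession cone of the whole polyhedron. It applies Farkas' lemma to the hypothesis, once for $\mathbf{c}$ and once for $-\mathbf{c}$, to get $\mathbf{w}_-,\mathbf{w}_+\ge\mathbf{0}$ with $\mathbf{c}^\top=\mathbf{w}_-^\top A_{1:e}=-\mathbf{w}_+^\top A_{1:e}$. It then takes these nonnegative combinations of the rows of $A_{1:e}(\mathbf{x}_1,\dots,\mathbf{x}_e)>\mathbf{v}-A_{e+1:t}(\mathbf{x}_{e+1},\dots,\mathbf{x}_t)$, which sandwiches $\sum_i\mathbf{c}_i\mathbf{x}_i$ between two affine functions of the tail $\mathbf{x}_{e+1},\dots,\mathbf{x}_t$. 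The ordering and $\mathbf{x}_{e+1}\ge 1$ then bound both sides by a multiple of $\mathbf{x}_{e+1}$.

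Your projection argument is essentially the geometric dual of that Farkas certificate. Instead of exhibiting row multipliers, you show that the image of the ordered recession cone under $(s,w)$ meets $\{w=0\}$ only at the origin. You then read off the slope $d_0$ from its generators and transfer it to the inhomogeneous polyhedron via the polytope-plus-cone decomposition, the same tool the paper uses for Lemma~\ref{lem:thinness computable}. The transfer step you flag as the main obstacle does go through. If the polytope is chosen inside the augmented polyhedron, as the standard decomposition does, then writing $\mathbf{x}=\mathbf{p}+\mathbf{r}$ gives $\mathbf{r}_{e+1}\le\mathbf{x}_{e+1}$. Hence $|s(\mathbf{x})|\le\max_{\mathbf{p}}|s(\mathbf{p})|+d_0\mathbf{x}_{e+1}$, and $\mathbf{x}_{e+1}\ge 1$ finishes the argument.

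The paper's route is shorter and gives $d$ by an explicit formula in $\mathbf{w}_\pm$, $A$, $\mathbf{v}$ and $D$. It also uses the ordering only pointwise at the end, so it never needs to augment $A$ or decompose anything. Yours costs a decomposition and a projection or generator enumeration, but it makes visible that the bound is really a statement about the two-dimensional shadow of the ordered cone.

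Three small points to tidy up:
\begin{itemize}
\item When you augment $A$, include positivity $\mathbf{x}_t\ge 1$ together with $\mathbf{x}_k\ge D\mathbf{x}_{k+1}$. This loses nothing, since only points of $\intg^t_{>0}$ matter. Without $\mathbf{y}_t\ge 0$ in $K$, you have neither $w\ge 0$ on $K$ nor the implication ``$\mathbf{y}_{e+1}=0$ forces $\mathbf{y}_{e+2}=\cdots=\mathbf{y}_t=0$''.
\item If $D$ is irrational, replace it by a rational $0<D'\le D$ to keep every step computable.
\item The compactness paragraph is, as you yourself observe, a dead end and can simply be dropped.
\end{itemize}
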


\subsection{The quantifier elimination argument}\label{sec:Finalising the argument}

In the next section, we finalise our quantifier elimination argument by showing how to eliminate a large number of variables assuming the properties from Section~\ref{sec:outline} (which we can assume by Lemma~\ref{lem:all properties}).
Our method has many moving parts, so it is difficult to outline it directly. 
We start with the following crucial tool that states that there is always a solution as long as certain properties are satisfied.

\begin{theorem}\label{thm:polyhedron no modular constraints}
    Let $A\mathbf{x} > \mathbf{v}$ be a polyhedron such that the entries of $\mathbf{x}$ are power variables that satisfy properties~\ref{step 1},~\ref{step 2}, and~\labelcref{step 4a,step 4b,step 4c}.
    For every $R > 0$, we can find a solution in power variables $\mathbf{x}_k$ such that when $\mathbf{x}_k$ and $\mathbf{x}_{k+1}$ are not on the same level, we have that $\mathbf{x}_k/\mathbf{x}_{k+1} > R$.
\end{theorem}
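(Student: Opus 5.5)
Our plan is to build the solution level by level, from the highest level $L$ (smallest variables) down to level $1$, choosing each level at a scale much larger than the previously chosen ones. The geometric input is the structure of levels from Proposition~\ref{prop:levels}. The induced subpolyhedron $B_\ell(\mathbf{x}_\lambda,\dots,\mathbf{x}_{\lambda'}) \ge \mathbf{0}$ contains a strictly positive point, and its recession cone has orthogonal complement $U_\ell$. Since $A\mathbf{x} > \mathbf{v}$ is not thin (property~\ref{step 2}), the cone $B_\ell \mathbf{y} > \mathbf{0}$ restricted to the affine span orthogonal to $U_\ell$ has non-empty relative interior. So there is a relatively open cone $K_\ell$, lying in the subspace $U_\ell^\perp$, whose points satisfy the level-$\ell$ rows strictly. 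The key reduction is this: if level $\ell$ is filled with a vector $\mathbf{y}^{(\ell)}$ that lies in $K_\ell$ up to a relative error $\varepsilon$, and is at scale $S_\ell$ with $S_\ell / S_{\ell+1} > R'$ for a large computable $R'$, then all rows of $A\mathbf{x} > \mathbf{v}$ hold. Each row's dominant contribution comes from the lowest level with a non-zero coefficient, and that contribution is strictly positive on $K_\ell$. The contributions of $U_\ell$-directions vanish exactly only when we hit $U_\ell^\perp$ exactly, which we cannot do with powers. We handle this by requiring the error term $\mathbf{u}\cdot\mathbf{y}^{(\ell)}$ for $\mathbf{u}\in U_\ell$ to be small compared with $S_\ell$, and absorbing it by the later levels. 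This is exactly the content of Lemma~\ref{lem:Bound for thinness}, applied in reverse: residuals of size $d\,\mathbf{x}_{e+1}$ are permitted.

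So the core step is approximating, at each level, a given real direction $\mathbf{y} \in K_\ell$ by a vector of power variables. We need pure power variables $z_j^{n_j}$ and linear power variables $x_{i,j}z_j^{n_j}$, all positive, in the prescribed ratios up to factor $1+\varepsilon$. We need the components $\mathbf{u}_{\ell,1}\cdot\mathbf{y},\dots,\mathbf{u}_{\ell,p_\ell}\cdot\mathbf{y}$ to be tiny in absolute terms relative to $S_\ell$. Better still, they should be positive or negative as needed, and comparable to a chosen smaller scale. Pure power variables in the same level have distinct bases by property~\ref{step 4a}, so there is at most one $\alpha$-power and one $\beta$-power per level. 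A linear power variable has its pure power partner at a strictly higher level by property~\ref{step 4b}, so its linear part $x_{i,j}$ is a free large integer and can approximate any real ratio to arbitrary relative precision. The ratio between the $\alpha$-power and the $\beta$-power on a level can be approximated arbitrarily well by Theorem~\ref{thm:kronecker dense}, with exponents as large as we like. That keeps the scale of the level above $R'$ times the previous one.

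The main obstacle is the exact cancellation required along $U_\ell$. A vector $\mathbf{u}\in U_\ell$ forces $\mathbf{u}\cdot\mathbf{x}$ on this level to be of lower order than the level's scale. Pure powers of $\alpha$ and $\beta$ alone cannot achieve this better than Baker allows. Here we use property~\ref{step 4c}: every non-zero $\mathbf{u}\in U_\ell$ involves a variable of each base. We plan to take a basis of $U_\ell$ and put it in echelon form with respect to the variables of the level. We then solve for linear parts of linear power variables so that $\mathbf{u}_{\ell,i}\cdot\mathbf{y}^{(\ell)}$ equals a target value. Since the relevant linear power variables of base $z$ are multiples of $z^{n}$ with $z^n$ at a higher, much smaller level, we can hit any target modulo $z^n$, i.e.\ up to an error of size $O(z^{n})$. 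That error is of the scale of level $\ell+1$ and can be absorbed as above. Property~\ref{step 4c} guarantees that each independent constraint can be adjusted through a variable of either base, so the residual is a controlled combination of smaller powers. We expect to need an induction that pushes these residuals to the next level and chooses $R'$ depending on the constants $d$ of Lemma~\ref{lem:Bound for thinness}. The remaining concern is that the adjustments stay within $\varepsilon$ relative to $S_\ell$ and keep every $x_{i,j}\ge1$. This follows because the adjustments are $O(S_{\ell+1}) = O(S_\ell/R')$. Property~\ref{step 1} ensures the ordering between linear variables and their partners is consistent with this construction.
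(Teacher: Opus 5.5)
\textbf{There is a genuine gap in your core step: how you handle the $U_\ell$-directions.}

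Take a row of $A$ whose restriction to level $\ell$ lies in $U_\ell$. Its level-$\ell$ part vanishes on $K_\ell$. Its sign is therefore decided at scale $S_{\ell+1}$, by the residual $\mathbf{u}\cdot\mathbf{y}^{(\ell)}$ together with the level-$(\ell+1)$ contribution, which in your construction order is already fixed. So the residual must land within $\varepsilon S_{\ell+1}$ of a prescribed value $c\,S_{\ell+1}$; being $O(S_{\ell+1})$ with the right sign is not enough. Solving for linear parts of base $z$ only fixes the residual modulo $z^n$. When $z^n$ sits on level $\ell+1$, the leftover error is exactly the order that decides these rows, so it cannot be ``absorbed''.

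Lemma~\ref{lem:Bound for thinness} does not rescue this. It is a necessary bound satisfied by every solution, not a statement that every residual of size $d\,\mathbf{x}_{e+1}$ is admissible, so it cannot be applied in reverse.

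Worse, property~\ref{step 4c} only says each non-zero $\mathbf{u}\in U_\ell$ involves both bases; it may involve only the level's two pure powers. Take levels $\{\alpha^m,\beta^n\}$ and $\{\alpha^k\}$ with the rows $\alpha^k<\alpha^m-\beta^n<2\alpha^k$, so $U_1$ is spanned by $(1,-1)$. All properties hold and there are no linear parts to solve for. By Baker's theorem $|\alpha^m-\beta^n|$ is at least $\alpha^m$ divided by a polynomial in $m$, so once $\alpha^k$ is fixed only finitely many $(m,n)$ work. Your scheme (fix the higher levels, then place level $\ell$ at a much larger scale) fails outright here. The solution must be chosen jointly: pick $j$ with $\alpha^j>R+2$, use Theorem~\ref{thm:kronecker dense} to get $\beta^n/\alpha^k\approx\alpha^j-3/2$, and set $m=j+k$.

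\textbf{The missing idea.} Keep the already-built higher levels only up to two independent rescalings: $\alpha^{\kappa_\alpha}$ on their base-$\alpha$ part and $\beta^{\kappa_\beta}$ on their base-$\beta$ part. Then treat $\gamma=\beta^{\kappa_\beta}/\alpha^{\kappa_\alpha}$ as a free real parameter. This is legitimate because, after Lemma~\ref{lem:reduce to v = 0}, the strict homogeneous inequalities tolerate small perturbations of $\gamma$, and Kronecker density realises any real $\gamma$ to arbitrary precision with large exponents.

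Normalising by the next level's scale, you must then solve the simultaneous approximation problem $|\mathbf{u}_j\cdot\mathbf{x}+\gamma\,\mathbf{v}_j\cdot\mathbf{y}-c_j|<\varepsilon$. Here the current level's linear parts and its pure powers are chosen jointly with $\gamma$. Property~\ref{step 4c} is used precisely to guarantee that both the $\mathbf{u}_j$- and $\mathbf{v}_j$-parts are present, so that $\gamma$ genuinely enters each condition. This is Lemma~\ref{lem:induction step}. It needs the second-difference argument of Lemma~\ref{lem:lem for first kronecker}, Lemmas~\ref{lem:kronecker consequence 1}, \ref{lem:kronecker consequence 2} and~\ref{lem:nice interval bounded return}, and a reduction to $p\le 2$. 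Your proposal has no counterpart to any of this.
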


This theorem is among the most technical parts of the paper, and due to space constraints, its proof is in Appendix~\ref{app:no further constraints}. 
Kronecker's theorem on Diophantine approximation (Theorem~\ref{thm:kronecker dense}) is its main technical tool.

First, let us consider the case where all variables are in the bottom family.
Then all power variables have small coefficients and the cone is not thin, and property~\ref{step 3c} implies that there are no modular constraints with variable modulus. 
The modular constraints with a constant modulus also form no obstacle.
First, we can assume that all are of the form $\mathbf{x}_k \equiv s_k \pmod{b}$. For example, if our modular constraints were $\mathbf{x}_1 + \mathbf{x}_2 \equiv 1 \pmod 4$ and $\mathbf{x}_1 + 2\mathbf{x}_2 \equiv 0 \pmod 6$, then using the Chinese remainder theorem, it is sufficient to search through the pairs $(s_1,s_2) \in \{0,\dots,\operatorname{lcm}(4,6)-1\}^2$ that $\mathbf{x}_1 \equiv s_1 \pmod{24}$ and $\mathbf{x}_2 \equiv s_2 \pmod{24}$.
For $z \in \{\alpha,\beta\}$, $(z^n \bmod{m})_{n=0}^\infty$ is a very well-behaved ultimately periodic sequence.
We either have that $z^n \equiv s \pmod{b}$ has no solutions, one solution, or one can compute $A, B$ such that for all large enough $n$, $z^n \equiv s \pmod{b}$ holds if and only if $n \equiv a' \pmod{b'}$.
Similarly, if $x z^n$ represents a linear power variable, $z^n \bmod{b}$ is known, and $x z^n \equiv s \bmod{b}$ has to hold, then there is either no integer solution for $x$ or $x$ lies in a certain arithmetic progression.
Hence, when there are infinitely many solutions for each $\mathbf{x}_k$ under the modular constraints, we can blow the powers $\alpha$ and $\beta$ up to powers $\alpha^{a'}$ and $\beta^{b'}$ and the linear variables $x_{i,j}$ to $kx_{i,j}$ in one huge case analysis to ensure that the modular constraints with constant modulus always hold.
Then Theorem~\ref{thm:polyhedron no modular constraints} allows us to conclude there always is a solution.

In general, we cannot assume all variables belong to the bottom family. 
Our strategy (for the quantifier elimination) is to find a solution for the power variables outside the bottom family first.
As these are the only power variables that can be a modulus of a modular constraint with variable modulus, this strategy fixes all moduli, turning these constraints into modular constraints with constant modulus on which we can apply the method above.
There are two problems:
\begin{enumerate}
    \item How do we guarantee that for our choice of variables outside the bottom family we indeed have infinitely many modular solutions for each power variable in these new modular constraints?
    \item There could be power variables in the bottom family without small coefficients, and so the cone restricted to the bottom family could be thin. 
    Moreover, as $z_j^{n_j}$ is not in the bottom family, it is fixed, turning this power variable into a usual linear variable.
\end{enumerate}

We tackle the first problem first. 
After dealing with the modular constraints with constant modulus as above, we can apply the following lifting lemma, which is a variation of the lifting-the-exponent lemma. 
Its proof is in Appendix~\ref{app:Finalising the argument}.
\begin{lemma}\label{lem:range modulo}
    Let $\alpha, \beta \ge 2$ be coprime. Then we can compute a number $\ell$ and a set $S \subseteq \{0,\dots,\alpha^\ell-1\}$ such that for all $m \in \nat$, we have that
    \begin{equation*}
        \exists n \in \nat\colon \beta^n \equiv a \pmod{\alpha^m} \:\:\Longleftrightarrow\:\: \exists s \in S \colon s \equiv a \pmod{\alpha^\ell}\,.
    \end{equation*}
\end{lemma}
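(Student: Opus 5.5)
The plan is to analyse the subgroup $G_m = \langle \beta \rangle \subseteq (\intg/\alpha^m\intg)^\times$ prime by prime and show that, from some computable level $\ell$ onwards, $G_m$ is exactly the full preimage of $G_\ell$ under the reduction map. Since $\gcd(\alpha,\beta)=1$, $\beta$ is a unit modulo every $\alpha^m$. The set $\{\beta^n \bmod \alpha^m : n\in\nat\}$ is therefore the cyclic subgroup $G_m$: the sequence is purely periodic because $\beta$ is invertible.

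Write $\alpha = \prod_{p} p^{e_p}$. By the Chinese remainder theorem, $(\intg/\alpha^m)^\times \cong \prod_p (\intg/p^{e_pm})^\times$. However, $G_m$ is cyclic inside this product, so it is not simply the product of the local subgroups; the local orders must be synchronised. I will use lifting the exponent.

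For each odd prime $p \mid \alpha$, let $o_p$ be the order of $\beta$ mod $p$, and let $v_p(\beta^{o_p}-1) = \kappa_p \ge 1$. Then the order of $\beta$ mod $p^k$ is $o_p p^{\max(0,k-\kappa_p)}$. For $p=2$ I will use $\beta^2$ and handle the usual exceptional case with $\kappa_2 = v_2(\beta^2-1)$. The key structural fact is this: for $k \ge \kappa_p$, the image of $\langle\beta^{o_p}\rangle$ in $(\intg/p^k)^\times$ is exactly $\{x \equiv 1 \pmod{p^{\kappa_p}}\}$. This holds because $1+p^{\kappa_p}\intg_p$ is procyclic, generated by $\beta^{o_p}$ (for $p=2$, by $\beta^2$ when $\kappa_2\ge 2$).

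Set $o = \operatorname{lcm}_p(o_p)$, with the extra factor $2$ if $2\mid\alpha$, and $H = \langle \beta^{o}\rangle$. Locally, $\beta^{o}$ is $\beta^{o_p}$ raised to a power $o/o_p$, whose $p$-part is bounded. Absorbing this into a slightly larger $\kappa'_p = \kappa_p + v_p(o/o_p)$ gives $\overline{H} = \prod_p (1+p^{\kappa'_p}\intg_p)$ in $\prod_p \intg_p^\times$. This uses that $\overline{\langle \gamma\rangle}$ for $\gamma\in\prod_p(1+p^{\kappa'_p}\intg_p)$ with each component a generator is the full product: the factors are pro-$p$ for distinct primes $p$, so the closure of a cyclic group in a product of pro-$p$ groups over distinct primes is the product of the closures. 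Consequently $\overline{\langle\beta\rangle} = \bigcup_{i<o} \beta^i\,\overline{H}$ is open in $\intg_\alpha^\times=\prod_p\intg_p^\times$.

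Choose $\ell$ so that $\alpha^\ell$ is divisible by $p^{\kappa'_p}$ for every $p\mid\alpha$; this is computable from $\alpha,\beta$. Let $S = \{\beta^n \bmod \alpha^\ell : 0\le n < \operatorname{ord}_{\alpha^\ell}(\beta)\}$.
\begin{itemize}
\item For $m\ge\ell$: $a$ is a power of $\beta$ mod $\alpha^m$ iff $a$ lies in the image of the open set $\bigcup_i \beta^i\overline H$ modulo $\alpha^m$. Since that set is a union of cosets of $\prod_p(1+p^{\kappa'_p}\intg_p)\supseteq 1+\alpha^\ell\intg_\alpha$, this happens iff $a \bmod \alpha^\ell\in S$.
\item For $m<\ell$: the right-hand side of the statement as written (congruence mod $\alpha^\ell$) does not capture reduction mod $\alpha^m$, so the statement should be read for $m\ge \ell$, or with $S$ replaced by the finitely many tabulated sets $S_m$ for $m<\ell$. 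I would state the lemma's use for large $m$ and handle the finitely many small $m$ by direct computation.
\end{itemize}

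The main obstacle is the synchronisation across distinct primes, i.e.\ showing that the cyclic group generated by $\beta$ has open closure in the full product rather than only in each factor. I will justify it via the coprime-order (pro-$p$ for distinct $p$) argument, together with the $p=2$ LTE special case.
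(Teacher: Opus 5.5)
Your proof is correct, but it takes a different route from the paper. The paper argues by counting. It chooses $\ell$ so that for all $m\ge\ell$ the order of $\beta$ modulo $\alpha^{m+1}$ is exactly $\alpha$ times its order modulo $\alpha^m$; this comes from lifting the exponent prime by prime and is then combined via the Chinese remainder theorem. The powers of $\beta$ modulo $\alpha^m$ then form a set of size $r\alpha^{m-\ell}$ with $r=|S|$, and all of them reduce into $S$. Since $S$ has exactly $r\alpha^{m-\ell}$ lifts modulo $\alpha^m$, pigeonhole forces equality.

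You instead compute the closure of $\langle\beta\rangle$ in $\prod_p\intg_p^\times$. You show it contains the open subgroup $\prod_p(1+p^{\kappa'_p}\intg_p)$, using that $1+p^{\kappa}\intg_p$ is procyclic and that pro-$p$ factors for distinct primes are independent. The statement then follows because the closure is a union of cosets of $1+\alpha^\ell\intg_\alpha$.

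The paper's argument is more elementary: finite groups and a cardinality count, with no $p$-adic topology. Yours is more structural and makes the synchronisation across primes explicit. Your correction $\kappa'_p=\kappa_p+v_p(o/o_p)$ records that $\ell$ must also dominate the $p$-parts of the orders of $\beta$ modulo the other primes dividing $\alpha$. The paper's counting argument needs this as well, since otherwise the order does not grow by the full factor $\alpha$ per step. However, its final step, taking $\ell=\max_i\ell_i$ over the prime-power factors, passes over it.

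Your remark that the equivalence can only be expected for $m\ge\ell$ is also right. For $m<\ell$ the left side depends only on $a \bmod \alpha^m$, and the case $m=0$ with $a=0$ already fails. The paper's own proof likewise only argues for $m\ge\ell$.
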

In this lemma, $\alpha^m$ would be the power variable outside the bottom family, while $\beta$ is in the bottom family. 
Then, instead of studying the modular classes modulo $\alpha^m$, Lemma~\ref{lem:range modulo} allows us to reduce ourselves to considering the modular classes modulo $\alpha^\ell$, which crucially is independent of $m$. 
Thus, like the modular constraints with constant modulus, we can treat the modular constraint with variable modulus by blowing up the exponential bases $\alpha$ and $\beta$. 

For the second problem, we have a fix similar to the modular constraints. 
Instead of restricting the problem to just the variables outside the bottom family, we treat the variables inside the bottom family $\mathbf{x}_1,\dots,\mathbf{x}_b$ at the same time, treating them as usual linear variables $\tilde{\mathbf{x}}_1,\dots,\tilde{\mathbf{x}}_b$ (in a helpful modular class) instead of power variables. 
Then, by Remark~\ref{rem:reduce when equality}, we can eliminate all these linear variables, leaving strictly fewer variables (giving our quantifier-elimination argument).
By identifying each ``thinness'' (roughly said, linearly independent witnesses of thinness) with a power variable without small coefficients inside the bottom family, we can turn them into modular constraints.

For example, if $|2^{n_1} -3^{n_2} + y2^{n_3}|$ is a witness of the cone restricted to the bottom family being thin, $2^{n_1} = \mathbf{x}_1$, $3^{n_2} = \mathbf{x}_2$, and $\mathbf{x}_3 = x2^{n_3}$ are all in the bottom family, we get from the quantifier elimination some variables $\tilde{\mathbf{x}}_1,\dots,\tilde{\mathbf{x}}_3$. 
Then set $x = \tilde{\mathbf{x}}_1- \tilde{\mathbf{x}}_2 +\tilde{\mathbf{x}}_3$ such that it is sufficient that $\tilde{\mathbf{x}}_1- \tilde{\mathbf{x}}_2 \equiv x \pmod{2^{n_3}}$ and $\tilde{\mathbf{x}}_1 > \tilde{\mathbf{x}}_2$.
By the same techniques as in Section~\ref{sec:outline}, we can thus move this thinness into a modular constraint that we treat like the other modular constraints.
In this way, we also deal with the second problem.

The technical implementation of these ideas is in Appendix~\ref{app:Finalising the argument}.
\section{Conclusion}\label{Sec:conclusion}

Having proven Theorem~\ref{thm:main theorem}, which states that $\exists\mathsf{FO}(\intg;<,+,V_\alpha,V_\beta)$ is decidable for coprime $\alpha$ and $\beta$, we should explain our reliance on coprimality as the earlier result~\cite{karimov2025decidability} that established the decidability of $\exists\mathsf{FO}(\intg;<,+,\alpha^\nat,\beta^\nat)$ used the weaker notion of multiplicative independence.
To emphasise this distinction, we formulated our lemmas using multiplicative independence whenever the proofs permitted without significant modifications.
We depended on coprimality twice. 
For property~\ref{step 3c}, we applied two versions of Baker's theorem, Lemmas~\labelcref{lem:Baker helper reduce,lem:Baker to eliminate bottom family modular constraints}; while the latter's proof uses the former, coprimality is used in a single step. 
A similar step occurred in~\cite{karimov2025decidability}, and the authors' solution can be applied \emph{mutatis mutandis}. 
We omitted this generalisation as we do not know how to circumvent our second use of coprimality.
It occurs in Lemma~\ref{lem:range modulo} and its subsequent use in the proof of the main result has no counterpart in~\cite{karimov2025decidability}.

Lemma~\ref{lem:range modulo} describes the structure of $\{\beta^n \bmod{\alpha^m}: n \in\nat\}$ for a given number $m$. 
Coprimality is equivalent to $\beta$ being invertible modulo powers of $\alpha$, and so $s \equiv \beta^n \pmod{\alpha^m}$ solely depends on $s \bmod \alpha^\ell$ for a computable $\ell$ independent of $m$. 
Thus, the set $\{\beta^n \bmod{\alpha^m}: n \in\nat\}$ is well-behaved: $\#\{\beta^n \bmod{\alpha^m}: n \in\nat\}/\alpha^m$ is the same for all $m \ge \ell$.
However, for non-coprime bases, matters are more complicated. 
Take $\alpha = 4$ and $\beta = 6$.
The set $\{6^n \bmod{4^m}: n \in\nat\}$ consists of $2m+1$ numbers\footnote{This follows because $6^{2m} \equiv 0 \pmod{4^m}$, $6^{2m-1} \not\equiv 0 \pmod{4^m}$, and the number of factors of two in $(6^n)_{n=0}^\infty$ is strictly increasing, showing that $1, 6, 6^2,\dots,6^{2m}$ are all distinct modulo $4^m$.}, which gives a ratio $(2m+1)/4^m = \#\{6^n \bmod{4^m}: n \in\nat\}/4^m$.
For a given $s$, the equation $6^n \equiv s \pmod{4^m}$ either has no solutions, a single solution, or $s \equiv 0 \pmod{4^m}$ and all $n \ge 2m$ suffice.

Similarly, for a given $m$, $\{4^n \bmod{6^m}:n \ge 2m\}$ consists of an exceptional set (growing linearly in $m$) of numbers reached by a single number and a set of numbers reached infinitely often.
However, these sets grow much more slowly than $6^m$.
We would expect that new ideas are necessary to navigate these issues, possibly using a $p$-adic version of Baker's theorem.

We expect that for certain multiplicatively independent $\alpha$ and $\beta$ that are not coprime, $\exists\mathsf{FO}(\nat; +, V_\alpha, \beta^\nat)$ can be shown to be decidable with these techniques, but that the general case (i.e., all multiplicatively independent $\alpha$ and $\beta$) remains out of reach.

Beyond extending the result from coprime to multiplicatively independent bases, we also want to discuss some other directions.
One natural question is whether one can add more than two B\"uchi predicates, or, as an easier version, three power predicates. 
For example, is $\exists\mathsf{FO}(\intg;<,+,2^\nat,3^\nat,5^\nat)$ decidable? In~\cite{berthe2024decidability}, the authors needed Schanuel's conjecture to deal with problems of the form $\exists n_2,n_3,n_5 \colon d_1z_1^{n_{z_1}} < \cdots < d_kz_k^{n_{z_k}}$ where $z_i \in \{2,3,5\}$ and each $d_i$ is a power of~$z_i$.
We avoid this problem as we apply Kronecker's theorem on a one-dimensional torus and only need that $\log(\alpha)/\log(\beta)$ is irrational. 
For these three powers $2$, $3$, and $5$ one needs that $1/\log(2)$, $1/\log(3)$, and $1/\log(5)$ are linearly independent, for which Schanuel's conjecture can be used. 
It is also unclear how the Baker-type arguments generalise to three different bases.
Hence, this direction seems challenging.

Another natural generalization is $\exists\mathsf{FO}(\intg;<,+,P_\alpha,P_\beta)$, where $P_z$ is a $z$-regular set for $z \in \{\alpha,\beta\}$.
That is, $P_z$ is a predicate definable in $\mathsf{FO}(\intg;<,+,V_z)$.
As the existential fragment of B\"uchi arithmetic is strictly less expressive than its full theory, Theorem~\ref{thm:main theorem} does not apply to $\exists\mathsf{FO}(\intg;<,+,P_\alpha,P_\beta)$. 
In fact, when $P_\alpha$ and $P_\beta$ are unary and are interpreted as subsets of $\nat$, determining $P_\alpha \cap P_\beta = \emptyset$, $|P_\alpha \cap P_\beta| = \infty$, or $P_\alpha \subset P_\beta$ all seem difficult.
For example, setting $\alpha = 2$, $P_\alpha = \{2^n \colon n \ge 9\}$, $\beta = 3$, and $P_\beta$ as the set of numbers without a $2$ in their ternary expansion, one could ask to determine whether $P_\alpha \cap P_\beta = \emptyset$. Erd\H{o}s (see, e.g.,~\cite{Erdos1979Unconventional}) conjectured that this intersection is indeed empty. 
However, he wrote that ``As far as I can see, there is no method at our disposal to attack this conjecture.''
Moreover, Tao~\cite{tao2011collatz} wrote that this conjecture ``is still a fair distance beyond what one can do with current technology.''
\bibliography{bib}

\appendix

\section{Proofs from Section~\ref{sec:prelims}}\label{app:Baker}
\begin{proof}[Proof of Lemma~\ref{lem:thinness computable}]
    If the lemma holds for $\sim$ equal to $\ge$, it also holds for $\sim$ equal to $>$.
    For a given polyhedron $A\mathbf{x} \ge \mathbf{v}$ of ambient dimension $t$, let $U$ be the orthogonal complement of its recession cone.
    By the proof of~\cite[Corollary 7.1b]{schrijver1998theory}, compute a bounded polyhedron $B\mathbf{y} \ge \mathbf{w}$ such that $A\mathbf{x} \ge \mathbf{v}$ if and only if $\mathbf{x} = \mathbf{x}'+\mathbf{y}$ for some $\mathbf{x}'$ and $\mathbf{y}$ satisfying $A\mathbf{x}' \ge \mathbf{0}$ and $B\mathbf{y} \ge \mathbf{w}$.

    If $\dim(U) = 0$, let $\mathbf{u} \in\rat^t \setminus \{\mathbf{0}\}$ be outside $U$ and $\mathbf{y}$ satisfy $B\mathbf{y} \ge \mathbf{w}$.
    Then some $\mathbf{x}'$ in the recession cone satisfies $\mathbf{u} \cdot \mathbf{x}' \ne 0$.
    The cited corollary implies that for any $a \ge 0$, $\mathbf{x} \coloneqq a\mathbf{x}' + \mathbf{y}$ satisfies $A\mathbf{x} \ge\mathbf{v}$ while $|\mathbf{u}\cdot(a\mathbf{x}'+\mathbf{y})| \ge a|\mathbf{u}\cdot\mathbf{x}'| - |\mathbf{u}\cdot\mathbf{y}|$ is unbounded.
    Hence, the polyhedron $A\mathbf{x} \ge \mathbf{v}$ is not thin.

    If $\dim(U) > 0$, compute $\mathbf{u} \in U\setminus\{\mathbf{0}\}$ and a number $e$ such that $|\mathbf{y}_i| < e$ for every $\mathbf{y}$ in the bounded polyhedron $B\mathbf{y} \ge \mathbf{w}$.
    Then, writing $\mathbf{x}$ in the polyhedron $A\mathbf{x} \ge \mathbf{v}$ as $\mathbf{x} = \mathbf{x}'+\mathbf{y}$ as in the cited corollary, we have that $ \left|\mathbf{u}\cdot\mathbf{x} \right| \le te\max_{1 \le i \le t}|\mathbf{u}_i|$ as $\mathbf{u}$ is orthogonal to $\mathbf{x}'$.
    Then take $f = te\max_{1 \le i \le t}|\mathbf{u}_i|$.
\end{proof}
\begin{proof}[Proof of Theorem~\ref{thm:kronecker dense}]
    Let $\theta = \log\alpha/\log\beta$. Then $\theta$ is irrational, as otherwise, if $\theta  =p/q$ is rational, $\beta^p = \alpha^q$, contradicting multiplicative independence.
    Thus, $\{n\theta - m : m \in \intg_{>M}, n \in \intg_{>N}\}$, and so $\{n\log\alpha - m\log\beta : m \in \intg_{>M}, n \in \intg_{>N}\}$, is dense in $\rel$.
    Applying the continuous and surjective exponential function $\exp:\rel \to \rel_{>0}$ gives the theorem.
\end{proof}
\begin{proof}[Proof of Lemma~\ref{lem::soda observation}]
    Otherwise, $\mathcal{I}_2 = \mathcal{I}_3 = \emptyset$, trivially fulfilling the right-hand side of the if and only if condition. Hence,~\eqref{eq:Soda 2025} holds for every $(k_1,\dots,k_r)$, and, in particular, for $(k_1,\dots,k_r) = (0,\dots,0)$ and for $(k_1,\dots,k_r) = (0,\dots,0,1,0,\dots,0)$ with the $1$ at the $i$th position. 
    Subtracting these two equations gives that $c_i(z_i^0 - z_i^1) = 0$, which is impossible as $z_i \ge 2$ and $c_i \ne 0$.
\end{proof}
\begin{proof}[Proof of Lemma~\ref{lem:Baker helper reduce}]
    If $x_\alpha = 0$, then item 4 gives $|x_\beta| \ge 1$ and so $\beta^{k_\beta} < C_3Z^m$. 
    Now item 3 gives $\beta^{\ell_\beta - p_\beta(\log\ell_\beta)} < C_3Z^m$.
    If $Z = \beta$, we thus have that $\ell_Z - m$ is effectively bounded. 
    If $Z = \alpha$, item $1$ implies $C_1^{-1}\alpha^{\ell_\alpha-\log_\alpha(\beta) p_\beta(\log\ell_\beta)} < \beta^{\ell_\beta - p_\beta(\log\ell_\beta)}< C_3Z^m $ and again $\ell_Z - m$ is effectively bounded.
    Thus, we can assume that $x_\alpha \ne 0$, and by symmetry that $x_\beta \ne 0$.

    Again, by symmetry, we can assume that $Z = \alpha$. By item 1, we have that $\ell_\alpha\log\alpha - \ell_\beta\log\beta \in (-\log C_1,\log C_1)$ and so we can compute $N, s, t$ such that $\ell_\beta < s \ell_\alpha + t$ whenever $\ell_\alpha > N$.
    Hence, we can compute $s', t'$ such that
    \begin{equation}\label{eq:Baker help max}
        \log\max(3,k_\alpha,k_\beta) \le \log\max(3,\ell_\alpha,\ell_\beta) < s'\log\ell_\alpha + t'\,.
    \end{equation}
    Similarly, we can also compute a polynomial $q$ such that 
    \begin{equation}\label{eq:baker max poly}
        \max_{z \in \{\alpha, \beta\}}p_z(\log\ell_z)\log z < q(\log\ell_\alpha)
    \end{equation}
    Dividing item 4 by $x_\alpha \alpha^{k_\alpha} \ne 0$ and using that $\frac{1}{|x_\alpha|} \le 1$ as $x_\alpha \in \intg_{\ne 0}$ gives
    \begin{equation*}
        0< \left|\frac{-x_\beta}{x_\alpha}\frac{\beta^{k_\beta}}{\alpha^{k_\alpha}}-1\right| < C_3\frac{1}{|x_\alpha|}\alpha^{m-k_\alpha} \le C_3\alpha^{m-k_\alpha}\,.
    \end{equation*}
    We apply Theorem~\ref{thm:baker} with $c = -x_\beta/x_\alpha$. 
    Then~\eqref{eq:baker max poly} and items 2 and 3 give
    \begin{align*}
        h(c) &= \max_{z \in \{\alpha, \beta\}}\log|x_z| \\
        &< \max_{z \in \{\alpha, \beta\}}\log\left|C_2 z^{\ell_z-k_z}\right| \\
        &< \log|C_2| + \max_{z \in \{\alpha, \beta\}}p_z(\log\ell_z)\log z \\
        &< \log|C_2| + q(\log\ell_\alpha)\,.
    \end{align*}
    Hence Theorem~\ref{thm:baker} implies that there is a computable constant $C$ such that 
    \begin{equation*}
        C_3\alpha^{m-k_\alpha} > \left|\frac{x_\beta}{x_\alpha}\frac{\beta^{k_\beta}}{\alpha^{k_\alpha}}-1\right| > \max(3,k_\alpha,k_\beta)^{-C(\log|C_2| + q(\log \ell_\alpha))}\,.
    \end{equation*}
    Taking logarithms and applying~\eqref{eq:Baker help max} gives that
       \begin{equation*}
         (\ell_\alpha-m)\log\alpha-\log C_3 < C(\log|C_2| + q(\log\ell_\alpha))(s'\log\ell_\alpha + t')\,.
    \end{equation*}
    As the constants and polynomial are explicit, the polynomial $p$ is computable.
\end{proof}

\begin{proof}[Proof of Lemma~\ref{lem:Baker to eliminate bottom family modular constraints}]
    Our first goal is to inductively compute polynomials $q_1,\dots,q_r$ such that for $1 \le i \le r$, we have that $\ell_{z_i} - n_i < q_i(\log\ell_{z_i})$ implies that 
    \begin{equation*}
        \big|x_\alpha \alpha^{k_\alpha} + x_\beta \beta^{k_\beta}+c_1z_1^{n_1} + \cdots + c_{i-1} z_{i-1}^{n_{i-1}}\big| > |c_i|z_i^{n_i} + \cdots + |c_r|z_r^{n_r}+|c|\,.
    \end{equation*}
    This follows from repeatedly applying Lemma~\ref{lem:Baker helper reduce}.
    For this, we gather the powers of $\alpha$ and $\beta$ on the left-hand side as $x_\alpha' \alpha^{m_\alpha} +x_\beta' \beta^{m_\beta}$.
    Let $z \in \{\alpha,\beta\}$ and set $z^{m_z}$ to be smallest $z_1^{n_1},\dots,z_{i-1}^{n_{i-1}},z^{k_z}$ with exponential base $z$.
    Then $|x_z'| < C_2'z^{\ell_z - m_z}$ for some computable constant~$C_2'$, and $\ell_z - m_z < q_i(\log\ell_z)$ when $z^{m_z} = z_i^{n_i}$ and $\ell_z - m_z < p_z(\log\ell_z)$ when $z^{m_z} = z^{k_z}$.
    We bound $|c_i|z_i^{n_i} + \cdots + |c_r|z_r^{n_r}+|c|$ with $C_3z_i^{n_i} \coloneqq (|c_i|+ \cdots + |c_r|+|c|)z_i^{n_i}$.
    Finally, item 5 implies that $x_\alpha' \alpha^{m_\alpha} +x_\beta' \beta^{m_\beta} \ne 0$, giving item 4 of Lemma~\ref{lem:Baker helper reduce}, showing that the lemma can be applied and that $q_1,\dots,q_r$ can be computed.

    We claim that we can reduce to the case $r = 0$. 
    We apply the same argument as before. For $z \in \{\alpha,\beta\}$, let $z^{m_z}$ be the smallest $z_1^{n_1},\dots,z_r^{n_r},z^{k_z}$ with exponential base $z$.
    We write $x_\alpha' \alpha^{m_\alpha} +x_\beta' \beta^{m_\beta}$ for $x_\alpha \alpha^{k_\alpha} + x_\beta \beta^{k_\beta}+c_1z_1^{n_1} + \cdots + c_rz_r^{n_r}$ by gathering all the powers of $z^{n_z}$, and $p'_z = q_i$ when $z^{m_z} = z_i^{n_i}$ and $p'_z = p_z$ when $z^{m_z} = z^{k_z}$.
    We can compute a constant $C_2'$ such that $|x_z'| < C_2'z^{\ell_z-m_z}$ and $\ell_z - m_z < p_z'(\log\ell_z)$ when items 1--5 hold.
    Lastly, in this new setting, item 5 holds, as $x_\alpha'$ and $x_\beta'$ are non-empty proper subsums of the multiset $\{x_\alpha \alpha^{k_\alpha}, x_\beta \beta^{k_\beta},c_1z_1^{n_1}, \ldots, c_rz_r^{n_r}\}$.
    Then the hypothesis of this lemma is still satisfied, and we have reduced to the case $r = 0$.

    If we bound $\ell_\beta$ from above, item 2 bounds $\ell_\alpha$ from above, and then item 1 bounds $k_\alpha,k_\beta$ from above.
    Thus, it is sufficient to bound $\ell_\beta$ from above.

    Assume $c = 0$.
    If $x_\alpha \alpha^{k_\alpha} + x_\beta \beta^{k_\beta} = 0$, $x_\alpha$ is a non-zero multiple of $\beta^{k_\beta}$ as $\alpha$ and $\beta$ are coprime. 
    Thus, by items 2--4, we have that 
    \begin{equation*}
        \beta^{\ell_\beta - p_\beta(\log\ell_\beta)} < \beta^{k_\beta} \le |x_\alpha| < C_2 \alpha^{\ell_\alpha - k_\alpha} < C_2 \alpha^{p_{\alpha}(\log\ell_\alpha)} < C_2 \alpha^{p_{\alpha}(\log(C_1\ell_\beta))}\,.
    \end{equation*}
    Taking logarithms gives that
    \begin{equation*}
        (\ell_\beta - p_\beta(\log\ell_\beta))\log\beta< \log C_2+(p_{\alpha}(\log(C_1\ell_\beta)))\log\alpha\,.
    \end{equation*}
    This allows us to bound $\ell_\beta$ from above as the left-hand side grows linearly while the right-hand side grows logarithmically in $\ell_\beta$.

    If $c \ne 0$, introduce a new variable $m$ such that
    \begin{equation*}
        x_\alpha \alpha^{\ell_\alpha} + x_\beta \beta^{\ell_\beta}+c_1\beta^m = 0\,.
    \end{equation*}
    Then apply Lemma~\ref{lem:Baker helper reduce} with $Z = \beta$ and $C_3 = |c_1|+1$ to obtain a polynomial $p$ such that $\ell_\beta - m < p(\log\ell_\beta)$.
    Setting $m = 0$ we have that $\ell_\beta < p(\log\ell_\beta)$, bounding $\ell_\beta$ from above.
\end{proof}
\section{Proofs from Section~\ref{sec:outline}}\label{app:properties}
\begin{proof}[Proof of Lemma~\ref{lem:properties 1,2}]
    If $\mathbf{x}_{k'} > \mathbf{x}_k$, then $x_{i,j}$ is an integer such that $0 < x_{i,j}z_j^{n_j} < z_j^{n_j}$.
    As the above is impossible, we resolve this system immediately. If $t = 0$, the problem reduces to a quantifier-free sentence whose truth is straightforwardly verified.
    For property~\ref{step 2}, by Lemma~\ref{lem:thinness computable}, we can compute a witness $\mathbf{u} \in \rat^t$ and $f \in \rat_{\ge 0}$ such that $|\mathbf{u}\cdot\mathbf{x}| < f$ whenever $A\mathbf{x} >\mathbf{v}$.
    By multiplying with an appropriate (and computable) constant $m \in \intg_{>0}$, we can assume that $\mathbf{u}$ and $f$ are integer-valued. 
    Now one of $\mathbf{u}\cdot\mathbf{x} = c$ for $c \in\{-f+1,\dots,f-1\}$ has to hold.
    Split into these systems, and apply Remark~\ref{rem:reduce when equality} to reduce to a system with at most $t-1$ variables.
\end{proof}

\begin{proof}[Proof of Lemma~\ref{lem:d_i power of z_(k_i) 3b}]
    Assume property~\ref{step 3b} fails for $d_iz_{k_i}^{n_{k_i}}$. 
    Factor $d_i$ as $d_i = ef$ such that $\gcd(e,z_{k_i}) = 1$ and every prime divisor of $f$ divides $z_{k_i}$. 
    Then $f$ divides a (computable) power of $z_{k_i}$, say $z_{k_i}^m$.
    By the Chinese remainder theorem,
    \begin{align*}
        &\sum_{k=1}^tc_{i,k}\mathbf{x}_k\equiv c_i \pmod{d_i z_{k_i}^{n_{k_i}}} \\\Longleftrightarrow &\sum_{k=1}^tc_{i,k}\mathbf{x}_k\equiv c_i \pmod{e} \land \sum_{k=1}^tc_{i,k}\mathbf{x}_k\equiv c_i \pmod{f z_{k_i}^{n_{k_i}}}\,.
    \end{align*}
    Add the modular constraint with modulus $e$ to the modular constraints with constant modulus and replace $d_i$ with $f$ in the original constraint. 
    Multiplying this constraint by $z_{k_i}^m/f$ gives an equivalent modular constraint where the new $d_i$ is a power of $z_{k_i}$.
    This process does not alter any other modular constraint or the polyhedron and preserves properties~\ref{step 3a},~\ref{step 3d}, and~\ref{step 4c}.
    Apply this process to the modular constraints with variable modulus one by one. 
\end{proof}

\begin{proof}[Proof of Lemma~\ref{lem:step 3c}]
    Assume $\sum_{k=1}^t c_{i,k}\mathbf{x}_k \equiv c_i \pmod{d_iz_{k_i}^{n_{k_i}}}$ with $1 \le i \le s'$ violates this property.
    Let $x \in\intg$, $1 \le k'\le t$ such that $\sum_{k=1}^t c_{i,k}\mathbf{x}_k - c_i = xd_iz_{k_i}^{n_{k_i}}$ and $\mathbf{x}_{k'} = z_{k_i}^{n_{k_i}}$.

    First, assume that $c_{i,k} = 0$ when $\mathcal{L}(\mathbf{x}_k) < \mathcal{L}(\mathbf{x}_{k'})$ and $\mathbf{x}_k$ has exponential basis different than $z_{k_i}$.
    Then, $c_{i,1} = \cdots = c_{i,k'-1} = 0$ by property~\ref{step 3d}, and so we can compute a constant $C_1$ such that $\big|\sum_{k=1}^t c_{i,k}\mathbf{x}_k - c_i\big| < C_1\mathbf{x}_{k'}$, leaving only finitely many options for $x$. 
    These cases either yield a trivial equation with zero coefficients or allow for a reduction in the number of variables by Remark~\ref{rem:reduce when equality}.
    Due to this claim, we can also assume that $\mathbf{x}_{k'}$ is not on the first level and thus the bottom family consists of at least two levels.

    We claim that, for $z \in \{\alpha,\beta\}$, there is a pure power variable on the first level $\mathcal{L}^{-1}(1) = \{1,\dots,\lambda\}$ with exponential basis $z$.
    Otherwise, we argue, the first level $\{1,\dots,\lambda\}$ would satisfy the two conditions to be the bottom family, and by minimality, has to be the bottom family.
    This family satisfies the first condition of solely containing entire levels, and there is at most one power variable with a small coefficient on the first level by properties~\labelcref{step 4a,step 4b}.
    Thus, $|F| = 1$.
    In the notation of the second condition, $B\mathbf{y} \ge \mathbf{0}$ is a thin polyhedron of ambient dimension at most $1$, forcing that $\mathbf{y} = \mathbf{0}$. 
    This contradicts the definition of a level as $A_{1:\lambda}(\mathbf{x}_1,\dots,\mathbf{x}_\lambda) \ge \mathbf{0}$ for  $\mathbf{x}_1,\dots,\mathbf{x}_\lambda > 0$.
    Our claim follows.
    
    For $z \in \{\alpha,\beta\}$, let $z^{\kappa_z}$ and $z^{\ell_z}$ be the smallest and largest pure power variable with exponential base $z$ in the bottom family, respectively, which exist by the previous claim.
    Thus, $z^{\kappa_z} \le z^{\ell_z}$ and $z^{\ell_z}$ is on the first level.
    
    Assume that the bottom family consists of $\mathbf{x}_1,\dots,\mathbf{x}_f$.
    Thus, all power variables with small coefficients $\mathbf{x}_k$ satisfy $k\le f$ and if $z_j^{n_j}$ denotes their exponential part $\kappa_{z_j} \le n_j \le \ell_{z_j}$. If $z_{k_i} = z_j$, we have that $\kappa_{z_j} \le n_{k_i} \le \ell_{z_j}$.
    Hence, we write 
    \begin{equation}\label{eq:split into bottom family vs pure powers}
        x_\alpha \alpha^{\kappa_\alpha} + x_\beta \beta^{\kappa_\beta} = xd_iz_{k_i}^{n_{k_i}} - \sum_{k = 1}^f c_{i,k}\mathbf{x}_k = \sum_{k = f+1}^t c_{i,k}\mathbf{x}_k - c_i\,,
    \end{equation}
    where $x_{\alpha}$ and $x_\beta$ are integers. 
    We verify the conditions of Lemma~\ref{lem:Baker to eliminate bottom family modular constraints}.
    We do not include the terms $c_{i,k}\mathbf{x}_k$ for $k = f+1,\dots,t$ where $c_{i,k} = 0$, and $\mathbf{x}_k$ is a pure power variable when $c_{i,k} \ne 0$. 
    This claim follows quickly. As $\mathbf{x}_{k'}$ is in the bottom family, $k'\le f < k$.
    Thus, $\mathbf{x}_{k}$ has a smaller exponential part than $\mathbf{x}_{k'} = z_{k_i}^{n_{k_i}}$.
    Therefore, $\mathbf{x}_k$ is a pure power variable by property~\ref{step 3a}. 

    For condition~\ref{Baker cond 1}, the first part follows from the strict ordering $\mathbf{x}_1 > \dots > \mathbf{x}_t$ and that $\alpha^{\kappa_\alpha}$ and $\beta^{\kappa_\beta}$ being in the bottom family implies that their corresponding power variables have index at most $f$. The second part follows from the construction of $\kappa_\alpha, \kappa_\beta, \ell_\alpha, \ell_\beta$. 
    Condition~\ref{Baker cond 2} follows from $\alpha^{\ell_\alpha}$ and $\beta^{\ell_\beta}$ being on the first level, and we take $C_1 \coloneqq C$ from the level construction.

    For condition~\ref{Baker cond 3}, fix $z \in \{\alpha,\beta\}$.
    Then, for $1 \le k \le f$, power variables $\mathbf{x}_k$ with exponential base $z$ contain a factor $z^{\kappa_z}$ and are at most $Cz^{\ell_z}$. 
    Hence, the total that the $\mathbf{x}_k$ can contribute to $x_\alpha$ is at most $C\sum_{k=1}^f|c_{i,k}|$.
    For $x$ itself, by the same argument we have that $|x| < \frac{1}{d_i}C(|c_i|+\sum_{k=1}^t|c_{i,k}|)z^{\ell_z-\kappa_z}$.
    This observation allows us to determine a valid $C_2$.

    Condition~\ref{Baker cond 4} needs more work. 
    We apply induction on the pure power variables in the bottom family, starting with $\kappa_z' \coloneqq \ell_z$ for $z \in\{\alpha,\beta\}$, and decreasing one of $\kappa_\alpha'$ and $\kappa_\beta'$ at each step according to the ordering $\mathbf{x}_1 > \cdots > \mathbf{x}_f$ until the condition holds for $\kappa_\alpha'= \kappa_\alpha$ and $\kappa_\beta'= \kappa_\beta$. 
    For the induction basis, we take $p_z' = 1$ for $z \in\{\alpha,\beta\}$ as $\ell_z - \kappa_z'= 0 < 1$.
    For the induction step, assume that $(k'_\alpha,k'_\beta) \ne (\kappa_\alpha,\kappa_\beta)$ (as we are done otherwise) and assume that for $z \in\{\alpha,\beta\}$ we have computed a polynomial $p_z$ such that $\ell_z - \kappa_z'< p_z(\log\ell_z)$ holds whenever $A\mathbf{x} >\mathbf{v}$. 
    Compute $k$ and $Z \in \{\alpha,\beta\}$ such that $\mathbf{x}_k = Z^m$ is the largest pure power variable smaller than $\alpha^{\kappa_\alpha'}$ and $\beta^{\kappa_\beta'}$, which by construction is in the bottom family.
    If $z\in\{\alpha,\beta\}$ and $\mathbf{x}_k$ is on the same level as $z^{\kappa_z'}$, then $z \ne Z$ by property~\ref{step 4a}. 
    In that case, $\ell_Z - m$ can be bounded easily using $\ell_z - \kappa_z'< p_z'(\log\ell_z)$ and $z^{\kappa'_z} /Z^m, \alpha^{\ell_\alpha}/\beta^{\ell_\beta} \in (1/C, C)$.
    So we can assume that $\mathbf{x}_k = Z^m$ is smaller and not on the same level as $\alpha^{\kappa_\alpha'}$ and $\beta^{\kappa_\beta'}$ and so on a higher level. 
    Let $f' = \min\mathcal{L}^{-1}(k)-1$ and so $\alpha^{\kappa'_\alpha}$ and $\beta^{\kappa'_\beta}$ are among $\mathbf{x}_1,\dots,\mathbf{x}_{f'}$.
    Then compute $F' \subseteq \{1,\dots,f'\}$ such that $j\in F'$ if and only if $\mathbf{x}_j$ is a pure power variable or a linear power variable whose exponential part is among $\mathbf{x}_1,\dots,\mathbf{x}_{f'}$. 
    Thus $F'$ is a strict subset of the power variables with small coefficients and by the minimality of the bottom family, the second condition of Definition~\ref{def:bottom family} has to be false, i.e., the projection of the polyhedron $A_{1:f'}(\mathbf{x}_1,\dots,\mathbf{x}_{f'}) \ge \mathbf{0}$ on the indices of $F'$ is thin. 
    Using Lemma~\ref{lem:Bound for thinness}, compute a non-zero witness $\mathbf{u} \in \rat^{|F'|}$, i.e., $\sum_{j \in F'} \mathbf{u}_j \mathbf{x}_{j} = 0$ whenever $A_{1:f'}(\mathbf{x}_1,\dots,\mathbf{x}_{f'}) \ge \mathbf{0}$, and a number $d$ such that $\sum_{j \in F'} \mathbf{u}_j \mathbf{x}_{j} \le d\mathbf{x}_{f'+1}$ whenever $A\mathbf{x} > \mathbf{v}$.
    Now, for $z\in\{\alpha,\beta\}$, let $x'_zz^{\kappa'_z}$ be the restriction of $\sum_{j \in F'}\mathbf{x}_j \mathbf{u}_j$ to the variables with exponential base $z$.
    This sum is non-trivial by property~\ref{step 4c}.
    Thus, as $k$ is on the same level as $f'+1$, $\mathbf{x}_{f'+1} < C\mathbf{x}_k = CZ^m$, $|x'_\alpha \alpha^{\kappa'_\alpha} + x'_\beta \beta^{\kappa'_\beta}| < dCZ^m$.
    By splitting into the case where $x'_\alpha \alpha^{\kappa'_\alpha} + x'_\beta \beta^{\kappa'_\beta} = 0$ (which is a non-trivial equality that lets us reduce to a system with fewer variables by Remark~\ref{rem:reduce when equality}), we can assume that $x'_\alpha \alpha^{\kappa'_\alpha} + x'_\beta \beta^{\kappa'_\beta} \ne 0$.
    For $z \in \{\alpha,\beta\}$, using that $z^{\ell_z}$ is on the first level, compute a constant $C_2$ such that $|x'_z| < C_2z^{\ell_z-\kappa'_z}$.
    Hence, Lemma~\ref{lem:Baker helper reduce} lets us compute a polynomial $p$ such that $\ell_Z - m < p(\log\ell_Z)$.
    Thus, by induction, we can compute the polynomials $p_\alpha$ and $p_\beta$ for condition~\ref{Baker cond 4}.
    
    Finally, for condition~\ref{Baker cond 5}, we again split into subcases, the original and those where we add an equality coming from a proper subsum of the multiset $\{x_\alpha \alpha^{\kappa_\alpha}, x_\beta \beta^{\kappa_\beta},c_1z_1^{n_1}, \ldots, c_rz_r^{n_r}\}$ being zero. 
    The first observation of this proof dealt with the case where this subset equals $\{x_\alpha \alpha^{\kappa_\alpha}\}$ or $\{x_\beta \beta^{\kappa_\beta}\}$.
    In all other cases, we add this non-trivial equation, leading to a reduction in the number of variables by Remark~\ref{rem:reduce when equality}.
    Thus, condition~\ref{Baker cond 5} also holds.
    
    Lemma~\ref{lem:Baker to eliminate bottom family modular constraints} upper bounds $n_\alpha$ and $n_\beta$, letting us reduce to systems with strictly fewer power variables. 
\end{proof}

\begin{proof}[Proof of Lemma~\ref{lem:step 4a, 4b}]
    For property~\ref{step 4a}, assume otherwise: Say $\mathbf{x}_k > \mathbf{x}_{k'}$ violates this property. Then, $\mathcal{L}(k) = \mathcal{L}(k')$ and so $z_j^{n_j-n_{j'}}=\mathbf{x}_k/\mathbf{x}_{k'}\in (1, C)$. 
    Thus, there are finitely many, effectively computable possibilities for $n_j - n_{j'}$, and we can make the natural substitutions to reach systems with fewer variables.
    Similarly, for property~\ref{step 4b}, we have otherwise that $(x_{i,j}z_j^{n_j})/z_j^{n_j} = x_{i,j} \in (1/C, C)$, and again we split into finitely many cases with fewer variables by considering the options $x_{i,j} = 1,\dots,\lfloor C\rfloor$.
\end{proof}

\begin{proof}[Proof of Lemma~\ref{lem:replacements terminate 4c}]
    We use the method from the example.
    Assume that $\mathbf{x}_\lambda,\dots,\mathbf{x}_{\lambda'}$ forms the $\ell$th level and that property~\ref{step 4c} holds for all lower levels. 
    Using standard linear algebra, compute a basis for $U_\ell$ and a non-zero, integer-valued element $(c_\lambda,\dots,c_{\lambda'})$ of $U_\ell$ such that $c_i = 0$ for all power variables $\mathbf{x}_i$ of a certain exponential base.
    By the construction of the levels, at least two $c_i$ are non-zero, of which, by property~\ref{step 4a}, at least one is linear. 
    Let $\lambda \le k \le \lambda'$ be such that $c_k \ne 0$, $\mathbf{x}_k = x_{i,j}z^{n_j}$ is a linear power variable, and $\mathbf{x}_k$ has minimal exponential part among all such power variables. 
    Then introduce a new variable $\mathbf{x}'_k = x'z_j^{n_j} \coloneqq \sum_{i=\lambda}^{\lambda'} c_i\mathbf{x}_i$ which is indeed a multiple of $z_j^{n_j}$.
    Thus, $x' \in \intg$. 
    We split into cases to allow us to assume that $x'> 0$. 
    Using Lemma~\ref{lem:Bound for thinness}, there is a number $d$ such that $|\sum_{i=\lambda}^{\lambda'} c_i\mathbf{x}_i| < d\mathbf{x}_{\lambda'+1}$ whenever $A\mathbf{x} > \mathbf{v}$. 
    Thus, $\mathbf{x}_k' < d\mathbf{x}_{\lambda'+1}$.
    
    As in the example, we use this equality to eliminate $\mathbf{x}_k$ from the problem and add the corresponding modular constraint and inequality, as $\mathbf{x}_k$ is a strictly positive integer. 
    It rests to insert $\mathbf{x}_k'$ into $\mathbf{x}_1 > \cdots > \mathbf{x}_{k-1} > \mathbf{x}_{k+1} > \cdots > \mathbf{x}_t > 0$. 
    If $\mathbf{x}'_k$ equals some $\mathbf{x}_{k'}$, then Remark~\ref{rem:reduce when equality} suggests we can reduce to finitely many systems with at most $t-1$ power variables, so we can insert it strictly and split into cases.
    Redo the computation of the levels. 
    The main observation is the following.
    If two variables $\mathbf{x}_{j_1}$ and $\mathbf{x}_{j_2}$ were on the same level, they remain on the same level as their ratio remains bounded.

    If we insert $\mathbf{x}_k'$ later than $\mathbf{x}_{\lambda'+1}$, i.e., $\mathbf{x}_k' < \mathbf{x}_{\lambda'+1}$, then the computation of the first $\ell-1$ levels remains unchanged, and these levels still satisfy property~\ref{step 4c}.
    The $\ell$th level consists of $\{\lambda,\dots,\lambda'\}\setminus\{k\}$ and is closer to property~\ref{step 4c}: Its ambient dimension decreases while its intrinsic dimension remains unchanged.
    If we insert $\mathbf{x}_k'$ such that $\mathbf{x}_j < \mathbf{x}_k'$ for $j < \lambda'+1$, then $\mathbf{x}_k'$, $\mathbf{x}_j$ and $\mathbf{x}_{\lambda'+1}$ are on the same level. 
    Without loss of generality, let $j$ be minimal among all these $j$ and say that it is on the $\tilde{\ell}$th level. 
    The intrinsic dimension of this new induced subpolyhedron is at least the intrinsic dimension of $B_{\tilde{\ell}}$ plus the intrinsic dimension of the induced subpolyhedron at the $\mathcal{L}(\lambda'+1)$th level.
    As the (relative) order of the pure power variables remains the same, we preserve property~\ref{step 3a}.

    For property~\ref{step 4c}, it remains to show termination. 
    We do this level-by-level. 
    First, in both cases, when the first level is altered, the intrinsic dimension increases (option 1), or the intrinsic dimension remains unchanged while the ambient dimension decreases (option 2).
    As the intrinsic dimension is bounded by $t$, option 1 occurs finitely often. Meanwhile, as the ambient dimension is between $0$ and $t$, option 2 occurs at most $t$ times in a row. 
    Hence, this process alters the first level only finitely often. 
    By the same argument, the second level is altered finitely often, etc. As the number of levels is bounded by $t$, this process terminates. 

    In a similar manner, we deal with property~\ref{step 3d}. 
    Let $\sum_{k=1}^tc_{i,k}\mathbf{x}_k\equiv c_i \pmod{d_i z_{k_i}^{n_{k_i}}}$ be a modular constraint such that $\mathbf{x}_k$ violates this property with $d_i = z_{k_i}^d$.
    Say $\mathbf{x}_k$ has exponential part $z_j^{n_j}$.
    Then $z_j = z_{k_i}$.
    If $z_j^{n_j} > z_j^{n_{k_i}}$ in the ordering of the power variables, then split into the cases where $z_j^{n_j} = z_jz_j^{n_{k_i}},\dots,z_j^{d-1}z_j^{n_{k_i}}$ (for which Remark~\ref{rem:reduce when equality} allow us to reduce to strictly fewer variables) and $z_j^{n_j} > z_j^{d-1}z_j^{n_{k_i}}$. 
    In the latter case, we have that $\mathbf{x}_k \equiv 0 \pmod{z_j^{d-1}z_j^{n_{k_i}}}$ and so setting $c_{i,k} = 0$ gives an equivalent modular constraint.
    Note that this extra inequality does not change the levels as $n_j$ could not be bounded by $n_{k_i}$ by property~\ref{step 4a}.
    
    When $z_j^{n_j} = z_j^{n_{k_i}}$, we have that $\mathbf{x}_k = x_{i,j}z_j^{n_j}$ is a linear power variable, and we consider all the options for $x_{i,j}$ modulo $z_j^d$ as separate cases. Thus, we write $\mathbf{x}_k = x_{i,j}z_j^{n_j} = (z_j^dx_{i,j}')z_j^{n_j}+rz_j^{n_j}$ for some $0 \le r < z_j^d$ and $x_{i,j}' \in \intg_{\ge 1}$.
    Write $\mathbf{x}_k' = x_{i,j}'z_j^{n_j}$ and eliminate $\mathbf{x}_k$ from the system. 
    This does not introduce a new modular constraint or inequality. 
    Then $c_{i,k}z_j^d \mathbf{x}_k' \equiv 0 \pmod{z_{k_i}^{n_{k_i}+d}}$ by construction, so we can again set $c_{i,k} = 0$.
    Finally, we introduce $\mathbf{x}'_k$ in the ordering of the remaining $t-1$ variables. As $\mathbf{x}'_k < \mathbf{x}_k$, we can only insert it later in the ordering, and as $\mathbf{x}_k/\mathbf{x}'_k$ is bounded, if we introduce it at a later level, the intrinsic dimension of the $\mathcal{L}(k)$th level strictly increases.
    As we only alter linear power variables, property~\ref{step 3a} is preserved.
    
    We have to argue that iterating these algorithms back-to-back ultimately gives both properties.
    We assume that property~\ref{step 4c} holds and then run our algorithm for property~\ref{step 3d}. As there are only finitely many modular constraints with variable modulus, this process will terminate.
    If one of the choices of reinserting $\mathbf{x}'_k$ was on a lower level, the $\mathcal{L}(k)$th level has a strictly larger intrinsic dimension. 
    By the same argument as before, this occurs finitely often for the first level, which will thus be fixed eventually.
    As the number of levels is bounded by $t$, all levels will eventually be fixed by this argument.
    Thus, this process terminates.
\end{proof}

\section{Proofs from Section~\ref{sec:levels and bottom family}}\label{app:levels and bottom family}

\begin{proof}[Proof of Lemma~\ref{lem:Bound for thinness}]
    Applying Farkas' Lemma twice on $(\mathbf{c}_1,\dots,\mathbf{c}_e) \cdot \mathbf{y} = 0$ and $A_{1:e} \mathbf{y} \ge \mathbf{0}$, we can compute rational vectors $\mathbf{w}_-, \mathbf{w}_+ \ge \mathbf{0}$ such that $\mathbf{c}^\top = \mathbf{w}_-^\top A_{1:e} = -\mathbf{w}_+^\top A_{1:e}$.
    Hence, rearranging $A\mathbf{x} > \mathbf{0}$ and using these new vectors $\mathbf{w}_-$ and $\mathbf{w}_+$, we obtain
    \begin{equation*}
        \mathbf{w}_- \cdot (\mathbf{v}-A_{e+1:t}(\mathbf{x}_{e+1},\dots,\mathbf{x}_t)) < \sum_{i=1}^e\mathbf{c}_i\mathbf{x}_i < \mathbf{w}_+ \cdot (A_{e+1:t}(\mathbf{x}_{e+1},\dots,\mathbf{x}_t)-\mathbf{v})
    \end{equation*}
    Now using that $\mathbf{x}_{e+1} \ge 1$ (as $\mathbf{x}_{e+1} \in \intg_{>0}$) and the assumption on $D$, the lemma follows.
\end{proof}

\section{A polyhedron without further constraints}\label{app:no further constraints}

\begin{lemma}\label{lem:reduce to v = 0}
    It is sufficient to prove Theorem~\ref{thm:polyhedron no modular constraints} for $\mathbf{v} = \mathbf{0}$.
\end{lemma}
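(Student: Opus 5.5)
The plan is to show that the homogenised system $A\mathbf{x} > \mathbf{0}$ (really its recession-cone version) captures everything, and that a solution of it can be converted into a solution of $A\mathbf{x} > \mathbf{v}$ by scaling. Concretely, suppose Theorem~\ref{thm:polyhedron no modular constraints} holds whenever $\mathbf{v} = \mathbf{0}$. Given $A\mathbf{x} > \mathbf{v}$ satisfying properties~\ref{step 1},~\ref{step 2}, and~\labelcref{step 4a,step 4b,step 4c}, I will consider the auxiliary polyhedron $A\mathbf{x} > \mathbf{0}$. It has the same recession cone $A\mathbf{x} \ge \mathbf{0}$. The levels, the induced subpolyhedra $B_\ell$ and the orthogonal complements $U_\ell$ are all defined in Proposition~\ref{prop:levels} purely from the cone $A'\mathbf{x}\ge\mathbf{0}$, apart from the computation of $C$ on the last level. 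Hence properties~\labelcref{step 4a,step 4b,step 4c} transfer verbatim, and property~\ref{step 1} is a statement about variables, not about $\mathbf{v}$.

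Property~\ref{step 2} also needs checking. By the proof of Lemma~\ref{lem:thinness computable}, thinness depends only on whether the orthogonal complement $U$ of the recession cone is nonzero, so non-thinness of $A\mathbf{x} > \mathbf{v}$ gives non-thinness of $A\mathbf{x} > \mathbf{0}$. One subtlety is that the level constant $C$ may differ between the two systems. I will enlarge $C$ to work for both, which is harmless since the theorem's conclusion refers only to which consecutive pairs lie on different levels.

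The key step is the scaling argument. Non-thinness implies $U = \{\mathbf{0}\}$, so the cone $A\mathbf{x}\ge \mathbf{0}$ is full-dimensional. Its interior is exactly the set $A\mathbf{x} > \mathbf{0}$, since an interior point of a full-dimensional cone makes every nonzero row strictly positive. Rows of $A$ that are zero need $v_i<0$, which holds because the original system is satisfiable, or else it is vacuous. By compactness of the slice $\{\mathbf{x}:\|\mathbf{x}\|=1\}$, I then want the following: there is a (possibly smaller) open cone $K$ and $\varepsilon>0$ with $A\mathbf{x} \ge \varepsilon\|\mathbf{x}\|\mathbf{1}$ on $K$. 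With this, every $\mathbf{x}\in K$ with $\|\mathbf{x}\| > \max_i |v_i|/\varepsilon$ satisfies $A\mathbf{x} > \mathbf{v}$.

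The main obstacle is that the solutions produced by the $\mathbf{v}=\mathbf{0}$ case are power variables and cannot simply be multiplied by a scalar. The remedy is to apply the theorem for $\mathbf{v}=\mathbf{0}$ to the strengthened system $A\mathbf{x} > \mathbf{0}$ together with $A\mathbf{x} - \varepsilon\mathbf{x}_t\mathbf{1} > \mathbf{0}$, chosen with rational $\varepsilon$ small enough that the cone stays open and full-dimensional, and add the inequality $\mathbf{x}_t > N$ for some large $N$. These extra inequalities are homogeneous except $\mathbf{x}_t>N$.

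To handle $\mathbf{x}_t>N$, I would rather fold it into the parameter $R$. I instead expect the $\mathbf{v}=\mathbf{0}$ proof to yield arbitrarily large solutions, because the construction via Theorem~\ref{thm:kronecker dense} allows $M,N$ to be taken large. So the statement actually needed is: for every $R$ and $N$ there is a solution with all exponents above $N$. Either I include this in the $\mathbf{v}=\mathbf{0}$ statement, or I derive it by noting that the strengthened homogeneous cone still satisfies the properties. Verifying that the strengthened cone keeps properties~\labelcref{step 4a,step 4b,step 4c} and has the same level partition is the delicate point. Since $\varepsilon$ is small, the cone shrinks only slightly while staying full-dimensional, and I plan to check this via the computation in Proposition~\ref{prop:levels}.

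Then $A\mathbf{x} \ge \varepsilon\mathbf{x}_t\mathbf{1} > \mathbf{v}$ once $\mathbf{x}_t > \max_i|v_i|/\varepsilon$. This gives a solution of the original system with the required level gaps.
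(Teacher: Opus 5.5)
Your proposal has a genuine gap at the step that makes the solution large. The theorem for $\mathbf{v}=\mathbf{0}$, as stated, only promises \emph{some} solution with the prescribed level gaps. Applied to your strengthened homogeneous system, it gives $A\mathbf{x} > \varepsilon\mathbf{x}_t\mathbf{1}$. Nothing prevents $\mathbf{x}_t$ from being small (e.g.\ $\mathbf{x}_t=1$), so you cannot conclude $A\mathbf{x}>\mathbf{v}$. Neither of your two remedies closes this. Building ``all exponents above $N$'' into the $\mathbf{v}=\mathbf{0}$ statement changes the statement you are reducing to, which is not what the lemma asserts. Observing that the strengthened cone still satisfies the properties only re-establishes that \emph{a} solution exists, not a large one. ``Folding $\mathbf{x}_t>N$ into $R$'' is left without any mechanism.

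The missing idea is that your premise ``power variables cannot simply be multiplied by a scalar'' is too pessimistic: they can be scaled \emph{base by base}. Multiply every variable of base $\alpha$ by $\alpha^m$ and every variable of base $\beta$ by $\beta^n$. This preserves the shape ($z^{n_j}\mapsto z^{n_j+m}$, $xz^{n_j}\mapsto xz^{n_j+m}$) and turns $A\mathbf{x}$ into $\alpha^m(A_\alpha\mathbf{y}_\alpha + bA_\beta\mathbf{y}_\beta)$ with $b=\beta^n/\alpha^m$. Since $A_\alpha\mathbf{y}_\alpha+A_\beta\mathbf{y}_\beta>\mathbf{0}$ holds strictly and the expression is affine in $b$, it stays above some $\delta\mathbf{1}$ with $\delta>0$ for all $b$ near $1$. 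Theorem~\ref{thm:kronecker dense} supplies arbitrarily large $m,n$ with $b$ that close to $1$, so the margin $\alpha^m\delta$ eventually exceeds $\mathbf{v}$. Consecutive ratios change by at most a factor $b^{\pm1}$, so you just request $2R$ from the $\mathbf{v}=\mathbf{0}$ case.

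This is the paper's proof. It applies the $\mathbf{v}=\mathbf{0}$ case to $A\mathbf{x}>\mathbf{0}$ itself, so your $\varepsilon\mathbf{x}_t$-strengthening and the level-preservation check both become unnecessary. If you prefer to stay on your route, ``fold into $R$'' can be made precise. Append a dummy pure power variable $\mathbf{x}_{t+1}$ subject only to $\mathbf{x}_t>\mathbf{x}_{t+1}>0$. It forms its own top level, so the theorem's conclusion gives $\mathbf{x}_t>R\,\mathbf{x}_{t+1}\ge R$. You would then still need to verify that the strengthened, augmented system retains properties~\ref{step 2} and~\ref{step 4a}--\ref{step 4c}.
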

\begin{proof}
    Assume there is a solution $\mathbf{x}$ in power variables of the correct shape that satisfies $A\mathbf{x} > \mathbf{0}$.
    Split $\mathbf{x}$ and $A$ into two vectors $\mathbf{y}_\alpha$ and $\mathbf{y}_\beta$ and matrices $A_\alpha$ and $A_\beta$, respectively, such that for $z\in\{\alpha,\beta\}$, $\mathbf{y}_z$ contains exactly all power variables in $\mathbf{x}$ with exponential base $z$ and $A_z$ is the restriction of $A$ to the coordinates with exponential base~$z$.
    Thus, $A\mathbf{x} > \mathbf{0}$ if and only if $A_\alpha\mathbf{y}_\alpha+A_\beta\mathbf{y}_\beta >\mathbf{0}$, which we assumed to hold for our original solution.
    Then there is an $\varepsilon > 0$ such that $A_\alpha\mathbf{y}_\alpha+bA_\beta\mathbf{y}_\beta >\mathbf{0}$ whenever $|1-b| < \varepsilon$. 
    By Kronecker's density theorem (Theorem~\ref{thm:kronecker dense}), we find $\beta^n/\alpha^m$ such that $|1 - \beta^n/\alpha^m| < \varepsilon/2$ and $\alpha^m\varepsilon/2 > \max\mathbf{v}$. 
    Then, $\alpha^m \mathbf{y}_\alpha$ and $\beta^n \mathbf{y}_\beta$ consists of power variables of the correct form and
    \begin{equation*}
        \alpha^mA_\alpha\mathbf{y}_\alpha+\beta^n A_\beta\mathbf{y}_\beta > (\alpha^m\varepsilon/2) \mathbf{1} > \mathbf{v}.
    \end{equation*}
    Translating back to $\mathbf{x}$ and $A$, the reduction follows.
\end{proof}

\begin{lemma}\label{lem:lem for first kronecker}
    Let $\varepsilon \in \rat_{>0}$, $e_1 < e_2$ and $0 < c_1 < c_2$ in $\rat$.
    Then there is an $X$ such that for all $x > X$, $a \in \rat$, and functions $f : (e_1 x, e_2 x) \cap \intg \to \rel$ satisfying
    \begin{equation*}
        \frac{c_1}{x} \le f(n) - 2f(n+1) + f(n+2) \le \frac{c_2}{x}\,,
    \end{equation*}
    there is an integer $e_1 x < n < e_2 x$ satisfying $\{f(n) - a\} < \varepsilon$.
\end{lemma}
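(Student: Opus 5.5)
The plan is to pass to first differences and then sample $f$ along a suitable arithmetic progression. Set $g(n) = f(n+1) - f(n)$. The hypothesis says exactly that $g$ is increasing, with every increment lying in $[c_1/x, c_2/x]$. Over the middle half of the interval $(e_1x, e_2x)$, say on $J = (e_1x + \tfrac14(e_2-e_1)x,\; e_2x - \tfrac14(e_2-e_1)x)$, the function $g$ therefore increases by at least $\eta := c_1(e_2-e_1)/4$, up to an $O(1/x)$ error. Each single step changes $g$ by at most $c_2/x$, which is tiny for large $x$. So along $J$, $g$ sweeps an interval of length roughly $\eta$ in very fine steps.

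First I fix, depending only on $\eta$, an integer $q \ge 1$ with $q > 4/\eta$. Any interval of length $\eta/2$ then contains a rational $p/q$. I choose such a $p/q$ so that $g$ is below $p/q$ somewhere in the first half of $J$ and above $p/q + \eta/4$ somewhere in the second half. The function $n \mapsto q\,g(n) - p$ increases in steps of size at most $qc_2/x$ and crosses the whole window $[\varepsilon/2, \varepsilon)$; here I assume without loss of generality $\varepsilon < q\eta/4$, and $q$ depends on $\varepsilon$ only through this harmless constraint. Consequently there is a block of consecutive $n$, of length at least about $\varepsilon x/(2qc_2)$ and contained in $J$, on which $q\,g(n) - p \in [\varepsilon/2, \varepsilon)$.

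Next I sample along the progression. Let $n_0$ be the start of this block and put $h(m) = f(n_0 + qm)$. Then
\[ h(m+1) - h(m) = \sum_{i=0}^{q-1} g(n_0 + qm + i) = q\,g(n_0+qm) + O(q^2c_2/x). \]
For large $x$, this shows that, as long as $n_0 + qm + q$ stays in the block, $h(m+1) - h(m) - p \in (\varepsilon/4, \varepsilon)$. Since $p$ is an integer, the fractional part $\{h(m) - a\}$ moves forward modulo $1$ by a positive amount less than $\varepsilon$ at each step. It therefore cannot jump over the arc $[0, \varepsilon)$ modulo $1$, and it completes a full turn within $4/\varepsilon$ steps. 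The block supports about $\varepsilon x/(2q^2c_2)$ steps of $h$. Choosing $X$ so large that this exceeds $4/\varepsilon + 1$, and so that all the $O(1/x)$ and $O(q^2/x)$ error terms are below $\varepsilon/4$, yields some $m$ with $\{h(m) - a\} < \varepsilon$. The point $n = n_0 + qm$ lies in $J \subset (e_1x, e_2x)$, as required. All thresholds depend only on $\varepsilon, e_1, e_2, c_1, c_2$, and not on $a$ or $f$, which gives the uniform $X$.

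The main obstacle is the bookkeeping in the middle step: guaranteeing that the window where $q\,g(n) - p \in [\varepsilon/2, \varepsilon)$ is long enough and lies strictly inside the admissible range. I would handle this by first bounding the total variation of $g$ across $J$ from below by $\eta$ and from above by $c_2(e_2-e_1)$. After that I would choose the constants in the order $q$, then $\varepsilon$-reduction, then $X$, so that no circularity arises.
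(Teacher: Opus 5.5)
Your proposal is correct and is essentially the paper's argument. Your step $q$ and integer $p$ play the role of the paper's fixed step $k$ and of the integer that $G(n)=f(n)-f(n+k)\approx -k\,g(n)$ sweeps past, and both proofs then sample $f$ along that progression, where each step moves $f$ modulo $1$ by an amount in $(\varepsilon/2,\varepsilon)$ and so cannot skip an arc of length $\varepsilon$. One small tidy-up: the strict bound $h(m+1)-h(m)-p<\varepsilon$ does not follow from the $O(q^2c_2/x)$ estimate alone. It does follow because $h(m+1)-h(m)-p$ is the average of the $q$ values $q\,g(n_0+qm+i)-p$, each of which lies in $[\varepsilon/2,\varepsilon)$ when the whole stretch lies in your block.
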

\begin{proof}
    Let $d = \lceil 2/\varepsilon\rceil$ and $g(n) = f(n) - f(n+1)$. As $g(n+1)-g(n) = -(f(n)-2f(n+1)+f(n+2)) < 0$, $g$ is decreasing.
    Fix an integer $k > \frac{2}{c_1(e_2 - e_1)}$ and put $G(n) = f(n) - f(n+k) = \sum_{i=0}^{k-1} g(n+i)$.
    Then $G(n) - G(n+1) = \sum_{i=0}^{k-1}\big(f(n+i) - 2f(n+i+1) + f(n+i+2)\big) \in [kc_1/x,\, kc_2/x]$ and so $G(n) - G(n+k) \le k^2 c_2/x < \varepsilon/(4(d+1))$ for large enough $x$.
    When $x$ is large enough, there is an $e_1x < n < n+(d+1)k < e_2x$ such that $\{G(n)\} \in (\frac{3}{4}\varepsilon,\varepsilon)$.

    The gaps $G(n), G(n+k), G(n+2k), \dots,G(n+dk)$ decrease in steps smaller than $\varepsilon/(4(d+1))$ and thus $\{G(n+jk)\} \in (\varepsilon/2, \varepsilon)$ for $j = 0,\dots,d$.
    Each step $\{f(n+jk)\} \mapsto \{f(n+(j+1)k)\}$ moves down by an amount in $(\varepsilon/2, \varepsilon)$ modulo $1$; as these sum to at least $1$ and each is less than $\varepsilon$, the points cannot skip $[a, a+\varepsilon)$, and so $\{f(n+jk) - a\} < \varepsilon$ for some $0 \le j \le d$.
\end{proof}

Let $\|x\|$ denote the distance from $x$ to the nearest integer.
\begin{lemma}\label{lem:kronecker consequence 1}
    Let $\alpha, \beta \in\intg_{\ge 2}$ be multiplicatively independent, $a_3, a_4,b_3, b_4, b_5 \in \rat$, and $a_1,a_2,\varepsilon \in \rat_{>0}$ such that $b_3, b_5 \ne 0$. 
    Then there are $x_3, x_4 \in \intg$ and $\gamma \in \rel$ and arbitrarily large $m, n \in \nat$ such that 
    \begin{enumerate}
        \item $|\gamma - a_1| < \varepsilon$; 
        \item $|\frac{\alpha^m}{\beta^n} - a_2| < \varepsilon$;
        \item $|b_3x_3 - \gamma \alpha^m - a_3| < \varepsilon$; and
        \item $|\beta^n + b_4x_3 - \gamma b_5x_4 - a_4| < \varepsilon$.
    \end{enumerate}
\end{lemma}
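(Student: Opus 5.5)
We construct the solution in stages: first fix $m,n$, then define $\gamma$, then choose $x_3$, and finally get $x_4$ by a fractional-part argument.

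\textbf{Choosing $m,n$ and defining $\gamma$.} By Theorem~\ref{thm:kronecker dense} there are arbitrarily large $m,n$ with $|\alpha^m/\beta^n - a_2| < \varepsilon$, which gives item~2. Condition~4 involves $\beta^n$, which is huge, so $\gamma b_5 x_4$ must absorb it. The natural move is to tie $\gamma$ to $x_3$ through item~3 rather than fixing $\gamma=a_1$ outright. Item~3 says $\gamma \approx (b_3x_3 - a_3)/\alpha^m$. So we set
\[
\gamma \coloneqq (b_3x_3-a_3)/\alpha^m
\]
exactly, which makes item~3 hold with error $0$. Item~1 then asks that $x_3 \approx (a_1\alpha^m + a_3)/b_3$ to precision $\varepsilon\alpha^m/|b_3|$. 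Since $\alpha^m$ is large, this leaves a window of about $2\varepsilon\alpha^m/|b_3|$ integers for $x_3$.

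\textbf{Reducing item~4 to a fractional-part condition.} With this $\gamma$, item~4 becomes
\[
\Big|\beta^n + b_4x_3 - a_4 - \frac{b_5(b_3x_3-a_3)}{\alpha^m}\,x_4\Big|<\varepsilon .
\]
We need an integer $x_4$ with $x_4\cdot\frac{b_5(b_3x_3-a_3)}{\alpha^m}$ within $\varepsilon$ of $T\coloneqq\beta^n+b_4x_3-a_4$. The step size is $\delta\coloneqq |b_5\gamma| \approx |b_5|a_1$, a fixed nonzero constant, so not automatically small. Thus $x_4$ alone cannot hit $T$; we must use the freedom in $x_3$. Writing $x_4$ as the nearest integer to $T/(b_5\gamma)$, the requirement is
\[
\Big\|\frac{\beta^n + b_4x_3 - a_4}{b_5\gamma}\Big\| < \frac{\varepsilon}{|b_5\gamma|},
\qquad\text{where } \gamma=\frac{b_3x_3-a_3}{\alpha^m}.
\]
Define
\[
f(x_3)\coloneqq \frac{(\beta^n+b_4x_3-a_4)\,\alpha^m}{b_5(b_3x_3-a_3)}.
\]
With $x_3$ ranging over an interval of the form $(e_1X,e_2X)$ where $X\asymp \alpha^m$, the second difference of $f$ is approximately $\beta^n\alpha^m\cdot 2b_3^2/(b_5 (b_3x_3)^3)$. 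Since $\beta^n\asymp\alpha^m$, this is of order $1/X$ and has constant sign. The $b_4x_3$ term contributes a second difference of order $\alpha^m/X^2 \cdot 1/X$, which is negligible in comparison.

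\textbf{Applying Lemma~\ref{lem:lem for first kronecker}.} After possibly replacing $f$ by $-f$ to fix the sign, Lemma~\ref{lem:lem for first kronecker} applies with $x \asymp \alpha^m$ and constants $c_1<c_2$ bounding $x\cdot\Delta^2 f$. These constants are uniform in $m$ because $\alpha^m/\beta^n \approx a_2$ and $x_3/\alpha^m\approx a_1/b_3$. The lemma yields $x_3$ in the window with $\{f(x_3)\}$ small, or with $\{f(x_3)-a\}$ small for a shift $a$ that absorbs the rounding. Both item~1 and item~4 then hold, and $x_4$ is taken as the nearest integer. The signs of $b_3$ and $b_5$ only flip orientation, so we adjust $e_1,e_2$ accordingly.

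The main obstacle is checking the hypotheses of Lemma~\ref{lem:lem for first kronecker} uniformly. Concretely, the second-difference bounds must be $\asymp 1/x$ with constants independent of $m$, and the window for $x_3$ from item~1 must be of the form $(e_1x,e_2x)$ with $e_1<e_2$ fixed. This needs careful bookkeeping of the error terms, but should follow from $\alpha^m/\beta^n\to a_2$.
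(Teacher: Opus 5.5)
Your proposal is correct and follows the paper's proof essentially step for step. The paper likewise sets $\gamma=\gamma_j=(b_3j-a_3)/\alpha^m$ so that item~3 holds exactly, and uses item~1 to confine $j$ to a window $(e_1\alpha^m,e_2\alpha^m)$. It then rewrites item~4 as a nearest-integer condition on $(\beta^n+b_4j-a_4)/(\gamma_jb_5)=y/(j+b)-a$, so the $b_4$ term becomes an exact constant shift $a$ rather than a negligible perturbation, and applies Lemma~\ref{lem:lem for first kronecker} with $x=\alpha^m$. The only bookkeeping you leave implicit is to shrink $\varepsilon<a_1/2$ and normalise $b_3,b_5>0$: this keeps $\gamma$ in $(a_1/2,2a_1)$, so $b_5\gamma$ is nonzero and the tolerance $\varepsilon/(2a_1|b_5|)$ is uniform in $j$.
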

\begin{proof}
    If the result holds for some $\varepsilon' < \varepsilon$, it holds for $\varepsilon$; so we may shrink $\varepsilon$ and assume $\varepsilon < a_1/2$.
    Hence, every $\gamma$ with $|\gamma - a_1| < \varepsilon$ satisfies $\gamma > a_1/2 > 0$. 
    Replacing $x_3, x_4$ by $-x_3, -x_4$ if necessary, we may assume $b_3, b_5 > 0$.

    For an integer $x_3 = j$, item 3 holds exactly for $\gamma$ within $\varepsilon/\alpha^m$ of $\gamma_j := (b_3 j - a_3)/\alpha^m$, and in particular for $\gamma = \gamma_j$. Item 1 further demands $|\gamma_j - a_1| < \varepsilon$, that is,
    \begin{equation}\label{eq : kronecker 1}
        \frac{(a_1 - \varepsilon)\alpha^m + a_3}{b_3} < j < \frac{(a_1 + \varepsilon)\alpha^m + a_3}{b_3}\,. 
    \end{equation}
    Fix such a $j$ and set $\gamma = \gamma_j$, so items 1 and 3 hold. As $\gamma_j b_5 < 2 a_1 b_5$, item 4 holds for an integer $x_4$ when
    \begin{equation*}
        \left\| \frac{\beta^n + b_4 j - a_4}{\gamma_j b_5} \right\| < \frac{\varepsilon}{2 a_1 b_5}\,.
    \end{equation*}
    Writing the left-hand quantity as $y/(j + b)-a$ with
    \begin{equation*}
        b = -\frac{a_3}{b_3}, \qquad a = -\frac{b_4 \alpha^m}{b_3 b_5}, \qquad y = \frac{\alpha^m}{b_3 b_5}\left(\beta^n - a_4 + \frac{b_4 a_3}{b_3}\right),
    \end{equation*}
    this holds whenever $\{y/(j + b) -a\} < \varepsilon/(2 a_1 b_5)$.

    By Theorem~\ref{thm:kronecker dense}, there are arbitrarily large $m$ and $n$ that satisfy item 2. For these, $\beta^n = (1 + o(1))\alpha^m/a_2$, so $y = (1 + o(1))\alpha^{2m}/(a_2 b_3 b_5)$ lies between $c_1 \alpha^{2m}$ and $c_2 \alpha^{2m}$ for suitable $c_1 < c_2$, while \eqref{eq : kronecker 1} contains an interval $e_1 \alpha^m < j < e_2 \alpha^m$ with $e_1 < e_2$. Applying Lemma~\ref{lem:lem for first kronecker} with $x = \alpha^m$, these $a$ and $b$, and $\varepsilon$ replaced by $\varepsilon/(2 a_1 b_5)$ yields such a $j$. Taking $x_3 = j$, $\gamma = \gamma_j$, and $x_4$ the nearest integer to $(\beta^n + b_4 j - a_4)/(\gamma_j b_5)$ gives the lemma.
\end{proof}

\begin{lemma}\label{lem:kronecker consequence 2}
    Let $\alpha, \beta \in\intg_{\ge 2}$ be multiplicatively independent, $a_3, a_4\in \rat$, $b_3, b_4,b_5 \in \rat_{\ne 0}$ and $a_1, a_2,\varepsilon \in \rat_{>0}$. 
    Then there are $m, n \in \nat$, $x_3, x_4 \in \intg$ and $\gamma \in \rel$ such that 
    \begin{enumerate}
        \item $|\gamma-a_1|< \varepsilon$; 
        \item $|\frac{\alpha^m}{\beta^n} - a_2| < \varepsilon$;
        \item $|x_3 - \gamma b_3 x_4 - a_3| < \varepsilon$; and
        \item $|\alpha^m + b_4x_3 - b_5\gamma \beta^n - a_4| < \varepsilon$.
    \end{enumerate}
\end{lemma}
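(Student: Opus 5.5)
The plan is to follow the proof of Lemma~\ref{lem:kronecker consequence 1}. I first pick $m,n$ by Theorem~\ref{thm:kronecker dense}. Then I choose $x_3$ from an interval of length of order $\beta^n$, and read off $\gamma$ from item~4. Finally $x_4$ becomes a rounding, and item~3 reduces to a fractional-part condition that Lemma~\ref{lem:lem for first kronecker} handles with $x=\beta^n$.

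\textbf{Reductions.} As before, I shrink $\varepsilon$ so that $\varepsilon<a_1/2$; then every admissible $\gamma$ satisfies $\gamma\in(a_1/2,2a_1)$. By Theorem~\ref{thm:kronecker dense} I take arbitrarily large $m,n$ with $|\alpha^m/\beta^n-a_2|$ as small as I like, so item~2 holds and $\alpha^m=(a_2+o(1))\beta^n$.

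\textbf{Choosing $\gamma$ and $x_3$.} For an integer $x_3$, I define
\begin{equation*}
\gamma := \frac{\alpha^m+b_4x_3-a_4}{b_5\beta^n}\,,
\end{equation*}
so item~4 holds exactly. Item~1 then says that $x_3$ lies in an open interval with endpoints
\begin{equation*}
\frac{(a_1\pm\varepsilon)b_5\beta^n-\alpha^m+a_4}{b_4}\,.
\end{equation*}
Since $\alpha^m/\beta^n\to a_2$, this interval contains a fixed subinterval $(e_1\beta^n,e_2\beta^n)$ with $e_1<e_2$, for all large $n$. The endpoints $e_1,e_2$ may have either sign, which is harmless because $x_3\in\intg$.

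\textbf{Choosing $x_4$.} Item~3 can be written as
\begin{equation*}
|\gamma b_3|\cdot\Bigl|\frac{x_3-a_3}{\gamma b_3}-x_4\Bigr|<\varepsilon\,.
\end{equation*}
Since $|\gamma b_3|<2a_1|b_3|$, it suffices to take $x_4$ the nearest integer to $g(x_3):=(x_3-a_3)/(\gamma b_3)$ and to ensure $\|g(x_3)\|<\varepsilon/(2a_1|b_3|)$. Put $B=(\alpha^m-a_4)/b_4$ and $K=b_5\beta^n/(b_3b_4)$. Then
\begin{equation*}
g(x_3)=K\,\frac{x_3-a_3}{x_3+B}=K-\frac{y}{x_3+B}\,, \qquad y=K(a_3+B)\,.
\end{equation*}
Here $y$ is a constant multiple of $\beta^{2n}(1+o(1))$; it is nonzero for large $m$ because $B\sim\alpha^m/b_4$. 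Also $x_3+B=b_5\gamma\beta^n/b_4$, so $|x_3+B|$ lies between two fixed multiples of $\beta^n$ and has constant sign on the whole range. Consequently the second difference of $x_3\mapsto -y/(x_3+B)$ is $-2y/(x_3+B)^3+O(\beta^{-2n})$. This is of constant sign, and its absolute value lies in $[c_1/\beta^n,c_2/\beta^n]$ for fixed $0<c_1<c_2$ independent of $n$.

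\textbf{Applying Lemma~\ref{lem:lem for first kronecker}.} If this sign is positive, I apply Lemma~\ref{lem:lem for first kronecker} with $x=\beta^n$ to $f=g$, with shift $a=\eta/2$ and tolerance $\eta/2$, where $\eta=\varepsilon/(2a_1|b_3|)$. This gives an integer $x_3\in(e_1\beta^n,e_2\beta^n)$ with $\{g(x_3)-\eta/2\}<\eta/2$, so $\|g(x_3)\|<\eta$. If the sign is negative, I apply the lemma in the same way to $-g$, using $\|{-g}\|=\|g\|$. Taking this $x_3$, the corresponding $\gamma$, and $x_4$ the nearest integer to $g(x_3)$ then satisfies items~1–4.

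The main obstacle is the uniformity in the application of Lemma~\ref{lem:lem for first kronecker}. The constants $e_1,e_2,c_1,c_2$ must not depend on $n$, which uses $\alpha^m/\beta^n\to a_2$ together with the bound $|x_3+B|\asymp\beta^n$ coming from $\gamma\approx a_1>0$. One must also check that $(x_3+B)^{-3}$ changes by only a factor $1+o(1)$ across the whole $x_3$-interval, so that the second difference stays inside $[c_1/\beta^n,c_2/\beta^n]$ throughout.
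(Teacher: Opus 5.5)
Your proposal is correct and follows essentially the paper's proof: you fix $m,n$ via Theorem~\ref{thm:kronecker dense}, solve item~4 exactly for $\gamma$, and let item~1 cut out an interval of length $\Theta(\beta^n)\asymp\alpha^m$ for $x_3$. You then obtain item~3 by applying Lemma~\ref{lem:lem for first kronecker} to $(x_3-a_3)/(\gamma b_3)$, whose second difference is of constant sign and of order $\beta^{-n}$.

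You handle the sign by passing to $-g$, which is the right fix. Reflecting the argument, as the paper's $P(\pm x_3)$ does, would not change the sign of a second difference. The ``$1+o(1)$'' condition in your last paragraph is neither true nor needed: $(x_3+B)^{-3}$ varies by a bounded factor across the interval, and $\gamma\in(a_1-\varepsilon,a_1+\varepsilon)$ already keeps the second difference between fixed multiples of $\beta^{-n}$.
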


\begin{proof}
    As before we may shrink $\varepsilon$, so assume $\varepsilon < a_1/2$; then $|\gamma - a_1| < \varepsilon$ forces $2a_1 > \gamma > a_1/2 > 0$.
    Replacing $x_3$ by $-x_3$ if necessary, assume $b_3 > 0$.

    By Theorem~\ref{thm:kronecker dense}, fix arbitrarily large $m, n$ satisfying item 2.
    For an integer $x_3$, item 4 vanishes at $\gamma = \gamma_{x_3} := \frac{\alpha^m + b_4 x_3 - a_4}{b_5\beta^n}$ and holds within $\varepsilon$ in a neighbourhood of it.
    Since $\gamma_{x_3+1} - \gamma_{x_3} = O(\alpha^{-m})$, the condition $\gamma_{x_3} \in (a_1 - \varepsilon, a_1 + \varepsilon)$ of item 1 cuts out an interval $(e_1\alpha^m, e_2\alpha^m)$ of length $\Theta(\alpha^m)$ for choices of $x_3$ for some fixed $e_1 < e_2$.

    For item 3, observe that $2a_1b_3 > \gamma_{x_3}b_3 > 0$ and write
    \begin{equation*}
        P(x_3) = \frac{x_3 - a_3}{\gamma_{x_3}b_3} = \frac{b_5 \beta^n(x_3 - a_3)}{b_3 (\alpha^m + b_4 x_3 - a_4)}\,.
    \end{equation*}
    Item 3 holds whenever $\|P(x_3)\| < \varepsilon/(2a_1 b_3)$ does for a suitable integer $x_4$.
    One of $P(\pm x_3)$ is a rational function with positive second difference $O(\alpha^{-m})$, which is between $c_1/\alpha^m$ and $c_2/\alpha^m$ for fixed $0 < c_1 < c_2$ for all large $m$.
    Applying Lemma~\ref{lem:lem for first kronecker} with $x = \alpha^m$ and $\varepsilon$ replaced by $\varepsilon/(2a_1 b_3)$ yields an integer $x_3$ in this interval with $\|P(\pm x_3)\| < \varepsilon/(2a_1 b_3)$.
    Taking $\gamma = \gamma_{x_3}$ and $x_4$ the nearest integer to $P(\pm x_3)$ gives items 1--4.
\end{proof}

\begin{lemma}\label{lem:nice interval bounded return}
    Let $a \in \rat_{\ne 0}$, $c \in \rat$, $\varepsilon > 0$, and $\mathcal{I}$ an open, non-empty interval. 
    Then there is a number $\ell$ and open, non-empty interval $\mathcal{J} \subseteq \mathcal{I}$ such that for all $\gamma \in \mathcal{J}$ and $M \in \intg$, at least one $x \in \{M,M+1,\dots,M+\ell\}$ and $y \in \intg$ satisfies $|x+a\gamma y+c| < \varepsilon$.
\end{lemma}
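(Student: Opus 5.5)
We want $\gamma\in\mathcal{J}$ such that for every window of $\ell+1$ consecutive integers some $x$ in it has $x+c$ close to an integer multiple of $-a\gamma$. Equivalently, the set $\{a\gamma y + c : y\in\intg\}$ should come within $\varepsilon$ of every window $\{M,\dots,M+\ell\}$ (up to sign). Since $x$ ranges over integers, only the fractional part matters: $|x+a\gamma y+c|<\varepsilon$ for some integer $x$ near $-a\gamma y - c$ if and only if $\|a\gamma y+c\|<\varepsilon$, and then $x$ is forced to be the nearest integer to $-a\gamma y-c$. So the plan is to choose $\mathcal{J}$ so that the points $a\gamma y$ are spaced with gaps at most $\ell$, and that among suitable consecutive $y$'s one always has $\|a\gamma y+c\|<\varepsilon$, uniformly in $\gamma\in\mathcal{J}$.

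\textbf{Choice of $\mathcal{J}$.} First pick a rational $\gamma_0\in\mathcal{I}$ with $a\gamma_0 = p/q$ in lowest terms and $q > 2/\varepsilon$. Such a $\gamma_0$ exists because rationals with large denominator are dense in $\mathcal{I}$. Then the values $a\gamma_0 y \bmod 1$, for $y$ ranging over any $q$ consecutive integers, hit every multiple of $1/q$. Hence some $y$ in each such block gives $\|a\gamma_0 y + c\| \le 1/(2q) < \varepsilon/4$ (after also requiring $q>4/\varepsilon$).

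\textbf{Block condition.} Now take $\gamma$ in a small interval $\mathcal{J}$ around $\gamma_0$. For a block of $y$'s of length $q$, we need the perturbation $|a(\gamma-\gamma_0)y|$ to stay below $\varepsilon/2$ only \emph{relative to the block's start}. This is where a naive approach fails: $y$ is unbounded, so the drift accumulates.

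\textbf{Handling the drift.} The fix is to use the block structure itself. Consider $y = y_0 + j$ with $0\le j<q$. Then
\[
a\gamma y + c = (a\gamma y_0 + c) + a\gamma_0 j + a(\gamma-\gamma_0)j .
\]
The first term is some arbitrary real $c'$. The second term runs over all residues $k/q$ modulo $1$. The third term is at most $|a|\,|\gamma-\gamma_0|\,q<\varepsilon/4$ once $\mathcal{J}$ has radius below $\varepsilon/(4|a|q)$. Hence for any $y_0$, some $j<q$ gives $\|a\gamma(y_0+j)+c\|<\varepsilon/2$, with no dependence on $y_0$. That resolves the drift.

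\textbf{Window condition and main obstacle.} Given $M$, we need the resulting $x$ to lie in $\{M,\dots,M+\ell\}$. Choose $y_0$ with $-a\gamma y_0 - c$ near $M$, which is possible since $a\gamma\neq0$ and $\mathcal{J}$ can be chosen bounded away from $0$. The candidate $y_0+j$ then gives $x$ within $|a\gamma| q + 2$ of $M$. Setting $\ell := 2(\sup_{\mathcal{J}}|a\gamma|q+3)$ and shifting $y_0$ so the block lands at the start of the window finishes the argument. The main obstacle I anticipate is exactly this bookkeeping: keeping both the approximation error uniform over $\gamma\in\mathcal{J}$ and the $x$-range inside the window of fixed length $\ell$. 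The block decomposition above is designed to handle both at once.
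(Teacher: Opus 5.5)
Your proposal is correct and follows essentially the same route as the paper. Both arguments choose $\gamma_0 = p/(qa)\in\mathcal{I}$ with $\gcd(p,q)=1$ and $q$ large compared to $1/\varepsilon$, and take $\mathcal{J}$ of radius $O(\varepsilon/(|a|q))$ around $\gamma_0$; within any block of $q$ consecutive values of $y$, the drift $a(\gamma-\gamma_0)j$ then stays small while $pj/q$ hits every residue modulo $1/q$, and the window length is bounded by roughly $q\sup_{\gamma\in\mathcal{J}}|a\gamma|$. Your explicit centering of the block inside the window (hence the factor $2$ in $\ell$) and your remark that $\mathcal{J}$ must avoid $0$ are slightly more careful than the paper's bookkeeping, but the argument is the same.
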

\begin{proof}
    Let $p/(qa) \in \rat \cap \mathcal{I}$ be in minimum terms such that $q > 1/\varepsilon$ and assume $\mathcal{J}$ is close around $p/(qa)$, say a distance of less than $\varepsilon/(2|a|q)$.
    Then the map $n \mapsto \frac{pn}{q}$ between $\{0,\dots,q-1\}$ and $\{0,1/q,\dots,(q-1)/q\}$ is one-to-one, giving that $\|\frac{p}{q}y +c\| < \varepsilon/2$ for some $0 \le y < q$. 
    For $0 \le i < q$ and $N \in \intg$, 
    $|(a\gamma N - a \gamma(N+i)) - (\frac{p}{q}N - \frac{p}{q}(N+i))| < \varepsilon/2$.
    Hence, for every $N \in \intg$, we have that $\|a\gamma (N+i) +c\| < \varepsilon$ for some $0 \le i < q$. 
    For each $i$, the corresponding $x = x_i$ equals $-a\gamma(N+i)-c$ rounded to the nearest integer. 
    While varying $N$, these witnesses $x_i$ cannot be farther than $q|a\gamma|+1$ apart, allowing one to take an $\ell \in \nat$ that satisfies $\ell \ge q\sup_{\gamma\in \mathcal{J}}|a\gamma|+1$. 
\end{proof}

\begin{lemma}\label{lem:induction step}
    Let $\alpha,\beta \in \intg_{\ge 2}$ be multiplicatively independent and 
    \begin{equation*}
        A(\mathbf{x}_1, \dots, \mathbf{x}_K, \mathbf{x}_{K+1}, \mathbf{y}_1, \dots, \mathbf{y}_{L}, \mathbf{y}_{L+1}, x', y') > \mathbf{0}
    \end{equation*} be a non-empty polyhedron whose elements satisfy $\mathbf{x}_i,\mathbf{y}_j, x',y' > 0$, and the ratios $\mathbf{x}_i/x'$, $\mathbf{y}_i/x'$, $\mathbf{x}_i/y'$, and $\mathbf{y}_i/y'$ can be arbitrarily large.
    Let $\mathbf{xy}$ be short-hand for $\mathbf{x}_1, \dots, \mathbf{x}_{K+1}, \mathbf{y}_1,\dots,\mathbf{y}_{L+1}$.
    Further, assume the orthogonal complement $U$ of $A_{1:K+L+2}(\mathbf{xy}) \ge \mathbf{0}$ contains no non-zero $(\mathbf{u}, \mathbf{v}) \in \rat^{K+1} \times \rat^{L+1}$ such that $\mathbf{u} = \mathbf{0}$ or $\mathbf{v} = \mathbf{0}$.
    Then there are $x_1,\dots,x_K,y_1,\dots,y_L \in \intg_{\ge 1}$, $k,\ell,m,n \in \nat$ such that $A(\mathbf{xy},x',y') > \mathbf{0}$ when setting $\mathbf{x}_i = x_i\alpha^k$ for $1 \le i \le K$, $\mathbf{y}_i = y_i\beta^\ell$ for $1 \le i \le L$, $x' = \alpha^k$, $y' = \beta^\ell$, $\mathbf{x}_{K+1} = \alpha^m$, and $\mathbf{y}_{L+1} = \beta^n$.
\end{lemma}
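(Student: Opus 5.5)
Throughout, write $\mathbf{x}=(x_1,\dots,x_K)$ and $\mathbf{y}=(y_1,\dots,y_L)$ for the linear parts we must choose.

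The plan is to find a real point of the cone with the required shape and then round it to integers. Kronecker-type density results (Lemmas~\ref{lem:kronecker consequence 1}, \ref{lem:kronecker consequence 2} and~\ref{lem:nice interval bounded return}) control the rounding error in the directions orthogonal to the recession cone.

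\textbf{Step 1: a real interior point.} Since the polyhedron is open and non-empty, and the ratios of $\mathbf{xy}$ to $x',y'$ may be arbitrarily large, pick a real interior point. By homogeneity, normalise it so that $x'=\alpha^k$ and $y'=\beta^\ell$ are tiny compared with $\mathbf{xy}$. Then there is an $\varepsilon>0$ such that any perturbation of size $<\varepsilon$ in each coordinate stays inside. Perturbations along directions in the recession cone of $A_{1:K+L+2}(\mathbf{xy})\ge\mathbf 0$ are harmless at any scale. Only the components along $U$ must be approximated to within $\varepsilon\cdot\min(\alpha^k,\beta^\ell)$.

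\textbf{Step 2: reduce to controlling finitely many linear forms.} Fix a basis $\mathbf{w}_1,\dots,\mathbf{w}_p$ of $U$, each written as $(\mathbf{u}_i,\mathbf{v}_i)$ with both parts non-zero by hypothesis. After scaling the target point by a common factor $\gamma$, we need the following.
\begin{itemize}
\item The $\alpha$-block $(x_1\alpha^k,\dots,x_K\alpha^k,\alpha^m)$ and the $\beta$-block $(y_1\beta^\ell,\dots,y_L\beta^\ell,\beta^n)$ approximate $\gamma$ times the target in the non-orthogonal directions. This is free, because the $x_i,y_i$ are arbitrary integers and $\alpha^k,\beta^\ell$ are small.
\item For each $i$, $\mathbf{u}_i\cdot(\mathbf{x}\alpha^k,\alpha^m)+\mathbf{v}_i\cdot(\mathbf{y}\beta^\ell,\beta^n)$ is within $\varepsilon\alpha^k$ of the prescribed value.
\end{itemize}
Dividing by $\alpha^k$ (with $\beta^\ell/\alpha^k$ near a chosen constant, via Theorem~\ref{thm:kronecker dense}) turns each condition into a simultaneous inhomogeneous approximation. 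Each has the shape $|\text{integer combination of }\mathbf{x} + \gamma\cdot(\text{integer combination of }\mathbf{y}) + \alpha^{m-k}\text{ or }\beta^{n-\ell}\text{ terms} - a|<\varepsilon$.

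By Gaussian elimination on the basis, I would bring the constraints to echelon form. The hypothesis $\mathbf{u}_i,\mathbf{v}_i\neq\mathbf 0$ guarantees that each constraint couples an $\alpha$-integer with a $\gamma$-multiple of a $\beta$-integer, or one of the pure powers.

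\textbf{Step 3: solve the approximations.} The two constraints involving the pure powers $\alpha^m$ and $\beta^n$ are exactly of the forms handled by Lemma~\ref{lem:kronecker consequence 1} (when $\alpha^m$ enters with $x_3$ and $\beta^n$ with $x_4$) or Lemma~\ref{lem:kronecker consequence 2}. I would apply whichever matches after elimination; this fixes $m,n,\gamma$ and two of the integers. The remaining constraints each read $|x+a\gamma y+c|<\varepsilon$ in one fresh pair of integers. I would solve these one by one with Lemma~\ref{lem:nice interval bounded return}, shrinking the admissible interval for $\gamma$ first so that the lemma's interval $\mathcal{J}$ can be chosen before $\gamma$ is fixed. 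The bounded window $\{M,\dots,M+\ell\}$ lets me keep each $x$ in the range dictated by the target point, so positivity and the Step~1 inequalities survive. Finally, the scaling by $\gamma$ is absorbed by the homogeneity of the cone.

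\textbf{Main obstacle.} The hardest part will be ordering the choices consistently. The constant $\gamma$ must lie in the nested intervals $\mathcal{J}$ from Lemma~\ref{lem:nice interval bounded return}. Yet it is produced by Lemma~\ref{lem:kronecker consequence 1} or~\ref{lem:kronecker consequence 2}, which only guarantee $|\gamma-a_1|<\varepsilon$. I would first choose all the intervals $\mathcal{J}$, take $a_1$ as their common centre and $\varepsilon$ below their radius, and only then invoke the Kronecker lemma. A further check is needed that after elimination the coefficients $b_3,b_5$ (respectively $b_3,b_4,b_5$) are non-zero, which is where the hypothesis on $U$ is used. If $U$ has no constraint involving a pure power at all, I would pick $m,n$ directly from Theorem~\ref{thm:kronecker dense} instead.
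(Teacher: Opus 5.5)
\textbf{Gap: the pure-power row is coupled to all the other rows.} The problem is in Step~3, where you assert two things: that after elimination the constraints containing $\alpha^m,\beta^n$ involve only two integer unknowns, and that ``the remaining constraints each read $|x+a\gamma y+c|<\varepsilon$ in one fresh pair of integers''.

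Take the paper's normal form, reached after row operations and a change of basis. The other rows are indeed diagonal, $\mathbf{x}_j+\gamma a\mathbf{y}_j$. But the row carrying $\alpha^m$ still reads $b_1\mathbf{x}_1+\dots+b_{p-1}\mathbf{x}_{p-1}+\alpha^m+\gamma a'\beta^n$, or its analogue when $\beta^n$ sits in another row. The $b_j$ cannot in general be removed by choosing a different basis of $U$. For example, let $U$ be spanned by $(1,0,0\mid a,0,0)$, $(0,1,0\mid 0,a,0)$ and $(b_1,b_2,1\mid 0,0,a')$ with $b_1b_2\neq0$. Then no basis of $U$ separates a two-dimensional block in $\alpha^m,\beta^n$ and two integer unknowns from a one-dimensional block in the remaining pair.

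Consequently the pairs you solve afterwards with Lemma~\ref{lem:nice interval bounded return} are not fresh: each $\mathbf{x}_j$ also enters the pure-power row with weight $b_j$. That lemma only places $\mathbf{x}_j$ somewhere in a window $\{M,\dots,M+\ell\}$. Once $m,n,\gamma$ are fixed, this moves the pure-power row by up to $|b_j|\ell\gg\varepsilon$.

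Nor can you fix these unknowns first and fold them into the constant $a_4$ of Lemma~\ref{lem:kronecker consequence 1} or~\ref{lem:kronecker consequence 2}. They are of size comparable to $\alpha^{m-k}$, and their own rows require $\gamma$ to precision $O(\alpha^{-(m-k)})$, before $m$ and $\gamma$ are known. Your ordering of $\gamma$ into nested intervals $\mathcal{J}$ does not address this, because the obstruction is the shared unknowns, not where $\gamma$ lies.

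\textbf{What you cover and what is missing.} Your plan handles exactly the paper's easy cases. Rows with $b_j=0$ are disposed of by Lemma~\ref{lem:nice interval bounded return}, as in the paper. The case $p\le2$ is Lemmas~\ref{lem:kronecker consequence 1} and~\ref{lem:kronecker consequence 2}.

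The missing idea is reducing the coupled case to $p\le2$. The paper does this by tying unknowns together. It substitutes $\mathbf{x}_2=r\mathbf{x}_1+s$, with $r\in\rat$ of large denominator close to $a_2/a_1$ and $\mathbf{x}_1$ in a suitable congruence class. Then row~2 follows from row~1 together with conditions on $\gamma$ (cf.~\eqref{eq:rewrite thinness 2}), and $\mathbf{x}_2$ drops out of the pure-power row. The paper iterates this, swapping the roles of $\mathbf{x}$ and $\mathbf{y}$ when $\beta^n$ sits in a separate row.

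Within your framework, a simultaneous unimodular change of variables $X=S\mathbf{x}$, $Y=S\mathbf{y}$ could also decouple the rows, for instance taking the first row of $S$ proportional to $(b_1,\dots,b_{p-1})$ when $\alpha^m$ and $\beta^n$ share a row. But that is precisely the step your proposal lacks. You would also have to re-verify the cone condition and positivity in the new coordinates.

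\textbf{Smaller omission.} Your fallback only treats the case where no constraint of $U$ involves a pure power. When exactly one of $\alpha^m,\beta^n$ occurs, neither Kronecker lemma applies as stated. The paper handles this by first adding a dummy variable so that both occur.
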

\begin{proof}
    Let $(\mathbf{u}_1,\mathbf{v}_1),\dots,(\mathbf{u}_p,\mathbf{v}_p)$ be a basis of $U$.
    By assumption, $p \le \min(K+1, L+1)$.

    We claim that we can assume that $(\mathbf{u}_i)_{K+1}$ is non-zero for at least one $1 \le i \le p$.
    If not, expand the ambient and intrinsic dimension of the polyhedron by 1 by introducing a new variable $\tilde{y}$ and adding the inequalities $\mathbf{x}_{K+1} - \tilde{y} < x'$ and $-\mathbf{x}_{K+1} + \tilde{y} < x'$.
    Then the orthogonal complement of this new polyhedron has increased by one and contains the element belonging to $\mathbf{x}_{K+1} - \tilde{y} = 0$. 
    Similarly, $(\mathbf{v}_i)_{L+1}$ is non-zero for at least one $1 \le i \le p$.

    Applying Gaussian elimination on this basis, we reduce to the case where $(\mathbf{u}_i)_{K+1} \ne 0$ if and only if $i = p$.
    Similarly, $(\mathbf{v}_i)_{L+1} \ne 0$ if and only if $i = \tilde{p}$, where $\tilde{p} \in \{p, p-1\}$.

    With an appropriate change of basis of $\mathbf{x}_1,\dots,\mathbf{x}_K$ that preserves integerness of entries, we get that $(\mathbf{u}_i)_j \ne 0$ if and only if $i = j$ for $1 \le i \le p-1$ and $1 \le j \le K$.
    For $p \le j \le K$, $(\mathbf{u}_i)_j = 0$ for all $i$.
    Project out the $j$th coordinate such that the premise of the lemma remains satisfied.
    By the same argument as Lemma~\ref{lem:reduce to v = 0}, $A(\mathbf{xy},x', y')$ gets large enough that we can take $\mathbf{x}_j$ to be an integer.
    Thus, we reduced to the case $K = p-1$.
    Similarly, we treat the vectors $\mathbf{v}_i$ with $i \ne \tilde{p}$. 
    Let $p'$ denote the unique number in $\{p-1,p\}$ not equal to $\tilde{p}$.
    Then $L = p-1$, $\mathbf{v}_{\tilde{p}} = (0,\dots,0,1)$, $(\mathbf{v}_i)_j \ne 0$ if and only if $i = j$ for $1 \le i \le p-2$ and $1 \le j \le p - 1$, and $(\mathbf{v}_{p'})_j = 0$ for $1\le j \le p-1$.

    Next, by multiplying with an appropriate constant, we can assume that $(\mathbf{u}_i)_i = 1$ for $1 \le i \le p$, and using Gaussian elimination, we can ensure that $(\mathbf{v}_{\tilde{p}})_i = 0$ for $1\le i < p$.
    For $1 \le i \le p-1$, set $b_i = (\mathbf{u}_p)_i$.
    Lastly, let $a \in \intg$ be an integer multiple of $(\mathbf{v}_1)_1,\dots,(\mathbf{v}_{p-2})_{p-2}$ and $(\mathbf{v}_{p'})_{p-1}$. 
    Then, for $1 \le i \le p-2$, there is a $\mathbf{y}_i' \in \intg$ such that $a\mathbf{y}_i = (\mathbf{v}_i)_i \mathbf{y}_i'$ and similarly, for $\mathbf{y}_{p-1}$ and $(\mathbf{v}_{p'})_{p-1}$.
    Hence, instead of these $(\mathbf{v}_i)_i$, we can simply consider $a$, and we write $a'$ for $(\mathbf{v}_{\tilde{p}})_p$.

    Thus, when $\tilde{p} = p$, we have reduced to
    \begin{equation*}
        \begin{pmatrix}
            \mathbf{u}_1^\top & \mathbf{v}_1^\top \\ 
            \vdots & \vdots \\
            \mathbf{u}_p^\top & \mathbf{v}_p^\top
        \end{pmatrix}
        =
        \begin{pmatrix}
            1 & 0 & \cdots & 0 & 0 & a & 0 & \cdots & 0& 0 \\ 
            0 & 1 & \cdots & 0 &0 & 0 & a & \cdots & 0& 0 \\ 
            \vdots & \vdots & \ddots& \vdots & \vdots & \vdots & \vdots & \ddots & \vdots& \vdots \\ 
            0 & 0 & \cdots & 1 & 0 & 0 & 0 & \cdots & a& 0 \\ 
            b_1 & b_2 & \cdots & b_{p-1} & 1 & 0 & 0 & \cdots & 0& a' \\ 
        \end{pmatrix}
    \end{equation*}
    and, when $\tilde{p} = p-1$, we have reduced to
    \begin{equation}\label{eq:matrix ptilde = p-1}
        \begin{pmatrix}
            \mathbf{u}_1^\top & \mathbf{v}_1^\top \\ 
            \vdots & \vdots \\
            \mathbf{u}_p^\top & \mathbf{v}_p^\top
        \end{pmatrix}
        =
        \begin{pmatrix}
            1 & 0 & \cdots & 0 & 0 & a & 0 & \cdots & 0& 0 \\ 
            0 & 1 & \cdots & 0 &0 & 0 & a & \cdots & 0& 0 \\ 
            \vdots & \vdots & \ddots& \vdots & \vdots & \vdots & \vdots & \ddots & \vdots& \vdots \\ 
            0 & 0 & \cdots & 1 & 0 & 0 & 0 & \cdots & 0& a' \\ 
            b_1 & b_2 & \cdots & b_{p-1} & 1 & 0 & 0 & \cdots & a& 0 \\ 
        \end{pmatrix}\,.
    \end{equation}
    As the polyhedron $A(\mathbf{xy},x',y') > \mathbf{0}$ is non-empty, it contains an element $(\widetilde{\mathbf{xy}},\tilde{x}',\tilde{y}') \in\rat^{2p} \times \rat_{>0}^2$.
    By the strict inequality, we can change $\tilde{x}'$ and $\tilde{y}'$ slightly. 
    That is, when $g \coloneqq \tilde{x}'/\tilde{y}'$, $\varepsilon > 0$, and $x',y'>0$ satisfy $x'/y' \in (g - \varepsilon, g+\varepsilon)$ we have that $A(\widetilde{\mathbf{xy}},x',y') > \mathbf{0}$.
    As any $0<\varepsilon'< \varepsilon$ also has this property, we can assume that $\varepsilon < g/2$.
    Thus, when choosing $x' > 0$, we have that $y'>0$ and that $x'/y'$ is bounded when $x'/y' \in (g - \varepsilon, g+\varepsilon)$. 

    For $1 \le j \le p$, let $c_j = \mathbf{u}_j \cdot \tilde{\mathbf{x}} + \mathbf{v}_j \cdot\tilde{\mathbf{y}}$.
    Then, for small enough $\varepsilon > 0$, we have that $x'/y' \in (g - \varepsilon, g+\varepsilon)$ and $|\mathbf{u}_j \cdot \mathbf{x} + \mathbf{v}_j \cdot\mathbf{y}-c_jx'|/x' < \varepsilon$ imply that for any row $A_i$ of $A$ that is in $U$, we have that $A_i(\mathbf{xy},x',y') > 0$ as each such $A_i$ is a linear combination of $(\mathbf{u}_j, \mathbf{v}_j)$ and $A_{i,2p+1}+A_{i,2p+2}y'/x'$ is $O(\varepsilon)$ away from $A_{i,2p+1}+A_{i,2p+2}\tilde{y}'/\tilde{x}'$.

    By assumption, there is a non-zero vector $\widetilde{\mathbf{xy}}$ such that $A_{1:2p}\widetilde{\mathbf{xy}} \ge \mathbf{0}$. 
    In particular, there is a $p$-dimensional open cone for which for every $\widetilde{x}$ within, there is a unique $\widetilde{\mathbf{y}}$ such that $A_{1:2p}\widetilde{\mathbf{xy}} \ge \mathbf{0}$. Hence, we can assume that $(a_1,\dots,a_p) := \widetilde{x} \in \rat_{\ne0}^p$.
    Then, if the projection of $\mathbf{xy}$ on the first $p$ coordinates is in a narrow enough cone around $(a_1,\dots,a_p)$ and its entries are much larger than $x'$ and $y'$, $A_i \cdot (\mathbf{xy},x',y') > \mathbf{0}$ for every row $A_i$ outside of $U$. 
    Thus, there is a number $N$ and a small enough $\varepsilon$ such that $A(\mathbf{xy},x',y') > \mathbf{0}$ whenever
    \begin{enumerate}
        \item $|\mathbf{x}_j/\mathbf{x}_p - a_j/a_p| < \delta$ for $1 \le j \le p-1$ and $\mathbf{x}_p > 0$;
        \item $|\mathbf{x}_p/x'| > N$;
        \item $|\mathbf{u}_j \cdot \mathbf{x} + \mathbf{v}_j \cdot \mathbf{y}-c_jx'|/x'< \varepsilon$  for $1 \le j \le p$; and
        \item $x'/y' \in (g - \varepsilon, g+\varepsilon)$.
    \end{enumerate}
    Moreover, this still holds when tightening $\delta > 0$. Similarly, we can update $g$ and $\varepsilon$ to $\tilde{g}$ and $\tilde{\varepsilon}$ when $(\tilde{g} - \tilde{\varepsilon}, \tilde{g}+\tilde{\varepsilon}) \subseteq (g - \varepsilon, g+\varepsilon)$.
    Then $\tilde{\varepsilon} < \tilde{g}/2$ still holds.
    By updating $g$ and $\varepsilon$, we can also write $y'/x' \in (g - \varepsilon, g+\varepsilon)$ for item 4.

    We claim that when the statement holds for a real $\gamma \in (g - \varepsilon, g+\varepsilon)$ and $\mathbf{x}_p > N$, we can take $\gamma$ of the form $y'/x'$ for $y'\in \beta^\nat$ and $x'\in\alpha^\nat$.
    Indeed, due to the strong inequalities and the denseness of $\{\beta^n/\alpha^m: m,n\in \nat\}$ in $\rel_{\ge0}$ (Theorem~\ref{thm:kronecker dense}), there are $y'/x'$ arbitrarily close to $\gamma$. 
    Now set $\mathbf{x}' = x'\mathbf{x}$, $\mathbf{y}' = y'\mathbf{y}$, and items 1-4 are satisfied.

    Let us unpack these definitions further. Recall that $\mathbf{x}_p, x'\in \alpha^\nat$, $\mathbf{y}_p,y' \in \beta^\nat$, $\mathbf{x}_1,\dots,\mathbf{x}_{p-1}$ are integer multiples of $x'$, and $\mathbf{y}_1,\dots,\mathbf{y}_{p-1}$ are integer multiples of $y'$. 
    If we write $a_j$ for $a_j/a_p$, $\gamma = y'/x'$, and divide $\mathbf{x}$ by $x'$ and $\mathbf{y}$ by $y'$, it is sufficient to find $\gamma \in \rel$, $\mathbf{x}, \mathbf{y} \in \intg^p$, $\mathbf{x}_p\in\alpha^\nat$, and $\mathbf{y}_p\in\beta^\nat$ that satisfy
    \begin{enumerate}
        \item $|\mathbf{x}_j/\mathbf{x}_p - a_j| < \delta$ for $1 \le j \le p-1$;
        \item $\mathbf{x}_p > N$;
        \item $|\mathbf{u}_j \cdot \mathbf{x} + \gamma\mathbf{v}_j \cdot \mathbf{y} - c_j|< \varepsilon$  for $1 \le j \le p$; and
        \item $\gamma \in (g - \varepsilon, g+\varepsilon)$.
    \end{enumerate}
    Here, we might have updated $g$ and $\varepsilon$ and used that $x'> 0$.
    
    We return to the basis of our orthogonal complement. 
    If $b_j = 0$ for some $1 \le j < \tilde{p}$, then the only constraints involving $\mathbf{x}_j$ and $\mathbf{y}_j$ are that $|\mathbf{x}_j/\mathbf{x}_p - a_j| < \delta$ and that $|\mathbf{x}_j + \gamma a\mathbf{y}_j-c_j| < \varepsilon$.
    By Lemma~\ref{lem:nice interval bounded return}, there is a number $\ell$ and an interval $\mathcal{J} = (\tilde{g} - \tilde{\varepsilon}, \tilde{g}+\tilde{\varepsilon})$ such that for every $\gamma \in \mathcal{J}$ and $M \in \nat$, the latter is satisfied by at least one $\mathbf{x}_j \in \{M,M+1,\dots,M+\ell\}$ for some $\mathbf{y}_j$.
    If $\mathbf{x}_1$ is large enough, all of those satisfies $|\mathbf{x}_j/\mathbf{x}_p - a_j| < \delta$ for some $M$.
    Thus, potentially updating $N$, these constraints have no effect, and we can assume $b_j \ne 0$ for $j < \tilde{p}$.

    We reduce to $p \le 2$.
    First, consider $\tilde{p} > 2$. 
    We want to write $\mathbf{x}_2 = r \mathbf{x}_1 + s$ for $r, s \in \rat$ such that when $\mathbf{x}_1$ lies in a certain congruence class, $\mathbf{x}_2$ is also an integer.
    Taking $0 < \delta' < \delta$ small enough, there is certainly a $r \in \rat$ with arbitrarily large denominator such that $|\mathbf{x}_1/\mathbf{x}_p - a_1| < \delta'$ implies that $|r\mathbf{x}_1/\mathbf{x}_p - a_2| < \delta$. For a fixed $s$ and large enough $N$, $|\mathbf{x}_2/\mathbf{x}_p - a_2|=|r\mathbf{x}_1/\mathbf{x}_p+s/\mathbf{x}_p - a_2| < \delta$ holds as well.
    There are integers $\mathbf{y}_1$ and $\mathbf{y}_2$ satisfying  $|\mathbf{x}_1 + \gamma a \mathbf{y}_1 - c_1|,|r\mathbf{x}_1+s + \gamma a \mathbf{y}_2 - c_2| < \varepsilon$ when
    \begin{equation}\label{eq:rewrite thinness 1}
        \left\|\frac{1}{a\gamma}(\mathbf{x}_1 - c_1)\right\|\,,\,\left\|\frac{1}{a\gamma}(r\mathbf{x}_1+s - c_2)\right\| < \varepsilon\,.
    \end{equation}
    This will always hold when 
    \begin{equation}\label{eq:rewrite thinness 2}
        \left\|\frac{1}{a\gamma}(\mathbf{x}_1- c_1)\right\|< \frac{\varepsilon}{2r|a|(g+\varepsilon)}\,,\,\left\|\frac{1}{a\gamma}(rc_1+s -c_2)\right\| < \frac{\varepsilon}{2|a|(g+\varepsilon)}\,.
    \end{equation}
    From this form, it is clear that $s$ exists when $\gamma$ is in a small enough specific interval and $r$ has a large enough denominator, and that $\mathbf{x}_2$ is always an integer when $\mathbf{x}_1$ lies in a certain congruence class. Making the substitution $\mathbf{x}_2 = r\mathbf{x}_1+s$ into the $j = p$-case in item 3 fully eliminates $\mathbf{x}_2$.

    If $p > 2$ and $\tilde{p} \le 2$, we have that $p = 3$ and $\tilde{p} = 2$. 
    We apply the same method but have to swap the roles of $\mathbf{x}$ and $\mathbf{y}$. 
    The difficulty is that we are dealing with $\mathbf{x}_1$, $\mathbf{x}_2$, $\mathbf{y}_1$, and $\mathbf{y}_3$, where the last is a power of $\beta$. 
    First use linear algebra on the last row of the matrix~\eqref{eq:matrix ptilde = p-1} and potentially update $\varepsilon$ in the process to get a matrix
    \begin{equation*}
        \begin{pmatrix}
            1 & 0 & 0 & a & 0 & 0\\
            0 & 1 & 0 & 0 & 0 & a'\\
            0 & 0 & 1 & b_1' & a & b_3'
        \end{pmatrix}\,,
    \end{equation*}
    which corresponds to item 3. 
    Then write $\mathbf{y}_1 = r\mathbf{y}_3+s$ and apply the exact same method to eliminate $\mathbf{y}_1$. 
    When $\varepsilon$ is small enough and $\mathbf{y}_1$ and $\mathbf{y}_3$ are large enough, then item 3 implies that $\mathbf{x}_1$ and $\mathbf{y}_1$ are proportional in $(a\gamma)^{-1}$ and $\mathbf{x}_2$ and $\mathbf{y}_3$ in $(a\gamma)^{-1}$. Hence, if $\mathbf{y}_1 = r\mathbf{y}_3$, then $|\mathbf{x}_1/\mathbf{x}_p - a_1| < \delta'$ implies that $|\mathbf{x}_2/\mathbf{x}_p - a_2| < \delta$ for some $0 < \delta' < \delta$, giving the new item 1 when $r$ is chosen from a specific interval.
    Equivalent to~\eqref{eq:rewrite thinness 1} and~\eqref{eq:rewrite thinness 2}, we get that we need that
    \begin{equation*}
        \|\gamma a (r\mathbf{y}_3+s)-c_1\| \,,\,\|\gamma a \mathbf{y}_3-c_2\| < \varepsilon\,,
    \end{equation*}
    which follows from
    \begin{equation*}
        \| r\mathbf{y}_3c_2+\gamma as-c_1\| < \frac{\varepsilon}{2|a|(g+\varepsilon)} \,,\,\|\gamma a \mathbf{y}_3-c_2\| < \frac{\varepsilon}{2r|a|(g+\varepsilon)}\,.
    \end{equation*}
    Now taking $r$ and $s$ in these intervals, such that for some $c, c'$, $r \beta^{cn+c'}+s$ is an integer for all $n$. 
    Then make the natural substitution $\beta \mapsto \beta^c$ and apply the same linear algebra to return to the original shape~\eqref{eq:matrix ptilde = p-1}.

    Thus, we can assume that $p \le 2$.
    The remaining two cases for $p =2$ are exactly Lemmas~\ref{lem:kronecker consequence 1} and~\ref{lem:kronecker consequence 2}.
    The remaining case, $p = 1$, follows easily: Item 1 is trivially satisfied.
    Item 3 is equivalent to $|\alpha^m+\gamma a' \beta^n - c_1| < \varepsilon$.
    As we started with a satisfiable system for arbitrarily large $\mathbf{x}_1$ and $\mathbf{y}_1$, we must have $a' < 0$. 
    Then taking $-\alpha^m/(a'\beta^n) \in (g - \varepsilon/2, g+\varepsilon/2)$ with $\alpha^m > N$ large enough (which exist by Theorem~\ref{thm:kronecker dense}), we can take $\gamma$ such that $\alpha^m + \gamma a'\beta^n - c_1 = 0$.
\end{proof}

\begin{proof}[Proof of Theorem~\ref{thm:polyhedron no modular constraints}]
    By Lemma~\ref{lem:reduce to v = 0}, we can assume that $\mathbf{v} = \mathbf{0}$.

    We claim we can assume the highest level $L$ contains exactly two power variables, which are pure variables of different exponential bases.
    Indeed, properties~\ref{step 1} and~\ref{step 4b} imply that for any linear power variable with exponential part $z_j^{n_j}$, there is a pure power variable $\mathbf{x}_k = z_j^{n_j}$ on a higher level, which, as $L$ is the highest level, cannot exist. 
    By property~\ref{step 4a}, there is at most one pure power variable of each exponential base at the $L$th level. 
    As levels are non-empty, there is already at least one pure power variable $\mathbf{x}_t$.
    If there is a second, the claim follows.
    Otherwise, introduce a dummy variable $\mathbf{x}_{t+1}$ such that $\mathbf{x}_{t+1} < \mathbf{x}_t < 2\mathbf{x}_{t+1}$ and update $A$ and $\mathbf{v}$ accordingly.
    It is easily verified that all variables $\mathbf{x}_1,\dots,\mathbf{x}_t$ remain on their original levels, and $\mathbf{x}_t$ and $\mathbf{x}_{t+1}$ are on the same level while the hypothesis of the lemma remains satisfied. 
    The claim follows.

    We complete the argument by induction on $L$.
    
    \emph{The base case $L = 1$}. 
    There are exactly two variables at this level (being the highest), namely, pure power variables of both bases. 
    Thus, we need to find $m, n \in \nat$ such that $A(\alpha^m, \beta^n) > \mathbf{0}$.
    As the polyhedron $A\mathbf{x} > \mathbf{0}$ is not thin (property~\ref{step 2}) and has ambient dimension 2, there are $c \in \rat$ and $\varepsilon > 0$ such that $A(1, b) > \mathbf{0}$ whenever $|b-c| < \varepsilon$. 
    By Kronecker's theorem (Theorem~\ref{thm:kronecker dense}), there are $m, n$ such that $\beta^n/\alpha^m \in (c - \varepsilon, c+ \varepsilon)$ and so by homogeneity, $A(\alpha^m, \beta^n) > \mathbf{0}$.
    
    \emph{Inductive step}. 
    Assume we have $L \ge 2$ levels of which the lowest is $\mathcal{L}^{-1}(1) = \{1,\dots,\lambda\}$.
    By possibly introducing dummy variables, we can assume there are exactly two pure power variables of differing exponential base among $\mathbf{x}_1',\dots,\mathbf{x}_{\lambda}'$.
    By the induction hypothesis, there are power variables $\mathbf{x}_{\lambda+1}',\dots,\mathbf{x}_t'$ of the correct shape such that $A'(\mathbf{x}_{\lambda+1}',\dots,\mathbf{x}_t') > \mathbf{0}$, where this polyhedron is the projection onto all but the lowest level. 

    We want to relabel polyhedra and vectors to split them by their exponential bases and levels. 
    For $z \in \{\alpha,\beta\}$, let $\mathbf{y}^{(z)}$ collect the power variables $\mathbf{x}_{\lambda+1}',\dots,\mathbf{x}_t'$ with exponential base $z$ and $A_z'$ be the restriction of $A_{\lambda+1:t}$ to these power variables. 
    Thus, for any $n \in \nat$, $z^n \mathbf{y}^{(z)}$ also consists of power variables of the correct shape and  $A_\alpha'\mathbf{y}^{(\alpha)} + A_\beta'\mathbf{y}^{(\beta)} = A_{\lambda+1:t}(\mathbf{x}_{\lambda+1}',\dots,\mathbf{x}_t')$.
    Let $\ell_z$ denote the exponent of the smallest pure power variable with exponential basis $z$. 
    Thus, each $\mathbf{y}^{(z)}_k$ is a multiple of $z^{\ell_z}$.
    We keep $\ell_\alpha$ and $\ell_\beta$ variable while using our known solution to scale it, i.e., if $\tilde{\mathbf{y}}^{(z)}$ is part of a solution of the last $L-1$ levels, we write $\mathbf{y}^{(z)} = z^{\kappa_z}\tilde{\mathbf{y}}^{(z)}$ for some $\kappa_z \in \nat$. 
    
    We treat the power variables on the first level similarly. 
    For $z \in \{\alpha,\beta\}$, let $\tilde{\mathbf{x}}^{(z)}$ collect the power variables $\mathbf{x}_{1},\dots,\mathbf{x}_\lambda$ with exponential base $z$.
    Let $K_z+1$ be the dimension of $\tilde{\mathbf{x}}^{(z)}$ and assume $\tilde{\mathbf{x}}^{(z)}_{K_{z}+1}$ is the sole pure power variables in $\tilde{\mathbf{x}}^{(z)}$.
    For $1 \le k \le K_z$, we write $\tilde{\mathbf{x}}_k^{(z)} = x_{i,j}z^{\ell_z}$, which we put together as 
    \begin{equation*}
        A(\mathbf{x}_1,\dots,\mathbf{x}_t) = \tilde{A}(x_1\alpha^{\ell_\alpha},\dots,x_{K_\alpha}\alpha^{\ell_\alpha},\alpha^{n_\alpha},y_1\beta^{\ell_\beta},\dots,y_{K_\beta}\beta^{\ell_\beta},\beta^{n_\beta},\alpha^{\ell_\alpha},\beta^{\ell_\beta})\,.
    \end{equation*}
    This form is exactly as in Lemma~\ref{lem:induction step} whose technical condition is property~\ref{step 4c}.
\end{proof}

\section{Proofs from Section~\ref{sec:Finalising the argument}}\label{app:Finalising the argument}

\begin{proof}[Proof of Lemma~\ref{lem:range modulo}]
    Assume $\ell$ has the following property: For $m \ge \ell$, $\beta$ has order $r$ in $(\intg/\alpha^m \intg)^*$ and order $\alpha r$ in $(\intg/\alpha^{m+1} \intg)^*$.
    Then, by induction, $\beta$ having order $r$ in $(\intg/\alpha^\ell \intg)^*$ implies that $\beta$ has order $\alpha^{m-\ell}r$ in $(\intg/\alpha^m \intg)^*$. If $0 \le a < \alpha^\ell$, $m \ge \ell$, $\tilde{a} \equiv a \pmod{\alpha^m}$, and $\beta^n \not\equiv a \pmod{\alpha^\ell}$ for all $n \in \nat$, we have that $\beta^n \not\equiv \widetilde{a} \pmod{\alpha^m}$ for all $n \in \nat$.
    The pigeonhole principle implies that $\ell$ satisfies the lemma.
    After computing $\ell$, take $S = \{\beta^i \bmod{\alpha^\ell} : 0 \le i < r\}$, where $r$ is the order of $\beta$ in $(\intg/\alpha^\ell \intg)^*$.

    It remains to compute such $\ell$.
    First, consider $\alpha = p$ being prime. 
    For $m \ge 2$ and $1 \le c < p$, consider $x = 1+cp^{m-1}$. 
    The binomial theorem gives
    \begin{equation*}
        x^p = (1+cp^{m-1})^p = \sum_{i=0}^p \binom{p}{i}c^ip^{(m-1)i} \equiv 1 + cp^m + \frac{p-1}{2}c^2p^{2m-1} \pmod{p^{m+1}}
    \end{equation*}
    as $(m-1)i \ge m+1$ for $i \ge 3$ and $m \ge 2$.
    Hence, for $m \ge 3$ (due to the case $p=2$), we have that $x^p \equiv 1+cp^m \pmod{p^{m+1}}$.
    Therefore, if $m \ge 3$ and $x$ has order $r$ in $(\intg/p^m\intg)^*$, $x^p$ has order $pr$. 
    Thus, one can take $\ell \ge 3$ such that $\beta^r \equiv 1 \pmod{\alpha^{\ell-1}}$ and $\beta^r \not\equiv 1 \pmod{\alpha^\ell}$.

    When $\alpha$ is a prime power, say $\alpha = p^r$, let $\ell'$ be the constructed number for the prime $p$. 
    Then take $\ell$ such that $\ell \ge \ell'/r$.
    Now for general $\alpha$, one factors $\alpha$ as $p_1^{r_1}\cdots p_b^{r_b}$ and compute $\ell_i$ for each $p_i^{r_i}$. 
    Then let $\ell \coloneqq \max_{1 \le i \le b}\ell_i$, which is correct using the Chinese remainder theorem.
\end{proof}

We strengthen Theorem~\ref{thm:polyhedron no modular constraints} by allowing variables that are not power variables. 
\begin{corollary}\label{cor:polyhedron no modular constraints + lin variables}
    Let $A\mathbf{x}+B\mathbf{y} > \mathbf{v}$ be a non-thin polyhedron where $\mathbf{x}$ consists of power variables and the projection onto the variables in $\mathbf{x}$ satisfies properties~\labelcref{step 1} and~\labelcref{step 4a,step 4b,step 4c}.
    For every $R > 0$, there is a solution in power variables $\mathbf{x}_k$ and integers $\mathbf{y}_k$ such that $\mathbf{x}_k/\mathbf{x}_{k+1} > R$ whenever $\mathbf{x}_k$ and $\mathbf{x}_{k+1}$ are on different levels.
\end{corollary}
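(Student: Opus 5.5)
Write $P\subseteq\rat^{t+s}$ for the polyhedron $A\mathbf{x}+B\mathbf{y}>\mathbf{v}$. The plan is to use Lemma~\ref{lem:lin programming trick} to project the $\mathbf{y}$-variables out, solve the resulting system with Theorem~\ref{thm:polyhedron no modular constraints}, and then lift back to integer $\mathbf{y}$. The projection is a polyhedron $A'\mathbf{x}>\mathbf{v}'$ in the power variables alone, and by hypothesis it satisfies properties~\ref{step 1},~\ref{step 4a},~\ref{step 4b} and~\ref{step 4c}. Property~\ref{step 2} for the projection follows from non-thinness of $P$ only partially: a thinness witness $\mathbf{u}$ for the projection extends by zeros to $(\mathbf{u},\mathbf{0})$, which would witness thinness of $P$, so the projection is non-thin. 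Theorem~\ref{thm:polyhedron no modular constraints} then yields, for each $R$, power variables $\mathbf{x}$ with $A'\mathbf{x}>\mathbf{v}'$ and the required ratio gaps between levels. Hence there is a rational $\mathbf{y}$ with $A\mathbf{x}+B\mathbf{y}>\mathbf{v}$.

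The remaining task, and the main obstacle, is to make $\mathbf{y}$ integral. The fibre $\{\mathbf{y}: B\mathbf{y}>\mathbf{v}-A\mathbf{x}\}$ is an open polyhedron, but it can be thin: it might lie in a slab such as $|\mathbf{w}\cdot\mathbf{y}-\mathbf{c}\cdot\mathbf{x}|<1$ and contain no lattice point. I would rule this out using non-thinness of $P$ together with scaling. Recall the proof of Lemma~\ref{lem:reduce to v = 0}: there, a solution of the homogeneous strict system is multiplied by $\alpha^m$ (for the $\alpha$-part) and $\beta^n$ (for the $\beta$-part), with $\beta^n/\alpha^m$ close to $1$ by Theorem~\ref{thm:kronecker dense}. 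Power variables of the right shape remain of the right shape under this scaling, and the ratio gaps between levels are preserved. I would apply the same scaling argument to the combined system.

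Concretely, I would argue as follows. First reduce to $\mathbf{v}=\mathbf{0}$ as in Lemma~\ref{lem:reduce to v = 0}. Next take a solution $(\mathbf{x},\mathbf{y})$ of the homogeneous strict system $A\mathbf{x}+B\mathbf{y}>\mathbf{0}$. Since that set is open, perturbing gives a $\rat$-solution, and by openness one may also pick a solution $(\mathbf{x},\mathbf{y})$ together with a ball of radius $\rho$ around $\mathbf{y}$ lying in the fibre, where $\rho$ is proportional to the size of the solution. Now scale the $\alpha$-part by $\alpha^m$ and the $\beta$-part by $\beta^n$, with $\beta^n/\alpha^m$ within $\varepsilon$ of $1$, and scale $\mathbf{y}$ by $\alpha^m$. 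The scaled fibre then contains a ball of radius $\rho\alpha^m$, up to an $O(\varepsilon)$ error, and for $m$ large this radius exceeds $\sqrt{s}$, so the ball contains an integer point $\mathbf{y}$.

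The delicate point is the existence of an open ball in the fibre of positive radius. This is exactly what non-thinness of $P$ together with openness gives: the strict inequalities make the fibre over a point of the interior of the projection a non-empty open set. Some care is needed because the $\mathbf{x}$ produced by Theorem~\ref{thm:polyhedron no modular constraints} must land in the interior of the projection. Strict inequalities guarantee this, since the projection of an open set is open, so the solution $\mathbf{x}$ is an interior point and its fibre is open and non-empty. After homogeneous scaling the radius of this ball grows linearly, which yields the integer $\mathbf{y}$.
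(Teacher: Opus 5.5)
Your argument is correct, but it takes a different route from the paper's.

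\textbf{The paper's route.} The paper does no further scaling. It builds a safety margin into the right-hand side: it applies Theorem~\ref{thm:polyhedron no modular constraints} to the projection of $A\mathbf{x}+B\mathbf{y}>\mathbf{v}+s\tilde b\mathbf{1}$, where $\tilde b=\max_{i,j}|B_{i,j}|$. It then rounds the resulting rational $\mathbf{y}'$ one coordinate at a time to the integer in $[\mathbf{y}'_j,\mathbf{y}'_j+1)$. Each rounding costs at most $\tilde b$ in every row, so the margin absorbs all $s$ roundings.

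\textbf{Your route.} You pass to the homogeneous system and use that its fibres scale linearly. You multiply the $\alpha$-part by $\alpha^m$, the $\beta$-part by $\beta^n$ and $\mathbf{y}$ by $\alpha^m$, with $\beta^n/\alpha^m$ close to $1$ by Theorem~\ref{thm:kronecker dense}. This blows the open ball in the fibre up until it contains a lattice point. The same scaling also lifts the homogeneous solution back above $\mathbf{v}$.

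\textbf{Comparison.} The paper's argument is shorter and purely combinatorial: there is no second appeal to Kronecker and no $\varepsilon$-perturbation to control. Yours reuses the mechanism of Lemma~\ref{lem:reduce to v = 0} and makes transparent why a thin fibre over $\mathbf{x}$ is harmless.

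\textbf{Two small points to tidy.}
\begin{itemize}
\item The cross-base ratios $\mathbf{x}_k/\mathbf{x}_{k+1}$ are not exactly preserved. They change by the factor $\alpha^m/\beta^n$ or its inverse, so invoke the theorem with, say, $2R$ and take $\varepsilon<1/2$.
\item The ball in the fibre comes from openness and non-emptiness of the strict system, not from non-thinness of $P$. You need non-thinness only to obtain property~\ref{step 2} for the projection, and your zero-extension argument for that is fine.
\end{itemize}
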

\begin{proof}
    Let $\tilde{b} = \max_{i,j} |B_{i,j}|$.
    The projection of the (non-thin) polyhedron $A\mathbf{x}+B\mathbf{y} > \mathbf{v} + s\tilde{b}\mathbf{1}$ onto the coordinates of $\mathbf{x}$ is non-thin and satisfies the hypothesis of Theorem~\ref{thm:polyhedron no modular constraints}. Let $\mathbf{x}_1,\dots,\mathbf{x}_t$ be power variables in this projection.
    Then there are rational numbers $\mathbf{y}_1',\dots,\mathbf{y}_s'$ such that $A\mathbf{x}+B\mathbf{y}' > \mathbf{v} + s\tilde{b}\mathbf{1}$.
    We apply induction on $j$ such that $\mathbf{y}_1,\dots,\mathbf{y}_j$ are integers and $A\mathbf{x}+B(\mathbf{y}_1,\dots,\mathbf{y}_j,\mathbf{y}_{j+1}',\dots,\mathbf{y}_s') > \mathbf{v} + (s-j)\tilde{b}\mathbf{1}$.
    For $j = 0$, this trivially holds.
    If the induction holds for $j$, then for $\mathbf{y}_{j+1}$ take the single integer in $[\mathbf{y}_{j+1}',\mathbf{y}_{j+1}'+1) \cap\intg$.
    Then the induction step follows from the definition of $\tilde{b}$, and the corollary follows when taking $j = s$.
\end{proof}

\begin{proof}[Proof of Theorem~\ref{thm:main theorem}]
    Compute the bottom family $\mathbf{x}_1,\dots,\mathbf{x}_b$ and a basis $\mathbf{u}_1,\dots,\mathbf{u}_p$ of the orthogonal complement $U$ of the (recession) polyhedron $A_{1:b}(\mathbf{x}_1,\dots,\mathbf{x}_b) \ge \mathbf{0}$. 
    We can assume that for each $1 \le i \le p$ there is a $1 \le u_i \le b$ such that $\mathbf{x}_{u_i}$ does not have small coefficients and $(\mathbf{u}_j)_{u_i} \ne 0$ if and only if $i = j$. 
    We achieve this using induction on $I = 0,\dots,p$, where we satisfy the property for $1 \le i \le I$ and eliminate $\mathbf{x}_{u_1},\dots,\mathbf{x}_{u_I}$ from all other basis elements. Then the properties of the bottom family say that a power variable $\mathbf{x}_{u_{I+1}}$ without small coefficients exists.
    Moreover, at each step, we choose the power variable $\mathbf{x}_{u_i}$ to have minimal exponential part.
    Multiplication with an appropriate constant ensures all $\mathbf{x}_i$ are integer-valued.

    For $1 \le i \le p$, set $x_i = \mathbf{u}_i \cdot (\mathbf{x}_1,\dots,\mathbf{x}_b)$ symbolically, which we treat as a usual integer variable and also quantify in the quantifier elimination. 
    Use these equations to eliminate $\mathbf{x}_{u_i}$ from the polyhedron $A$ and all modular equations such that $\mathbf{x}_{u_i}$ only appears in these equations, which we can turn into a modular constraint with variable modulus that by construction satisfies properties~\labelcref{step 3a,step 3c} already. 
    We ensure that property~\ref{step 3b} holds by applying the proof of Lemma~\ref{lem:d_i power of z_(k_i) 3b} \emph{mutatis mutandis}.
    Moreover, when $\mathbf{x}_k,\mathbf{x}_{k+1}$ are not on the same level, then $\mathbf{x}_k/\mathbf{x}_{k+1}$ is bounded from below.

    Rename the $t$ variables and $b$ such that again the first $b$ variables are power variables from the bottom family, and that $\mathbf{x}_{b+1},\dots,\mathbf{x}_t$ are power variables and the new integer-valued variables $x_i$.
    We can assume that $b \ge 1$ as $b = 0$ would allow us to reduce to fewer variables, as we have at least one integer-valued variable and can apply Remark~\ref{rem:reduce when equality}.

    Let us deal with the modular constraint with constant modulus.
    For such a modular constraint $\sum_{k=1}^t c_{i,k}'\mathbf{x}_k \equiv c_i' \pmod{d_i'}$, compute all $(s_1,\dots,s_{b+1}) \in \{0,\dots,d_i'-1\}^{b+1}$ such that $\sum_{k=1}^b c_{i,k}'s_k + s_{b+1} \equiv c_i' \pmod{d_i'}$.
    Then split into the cases for these tuples $(s_1,\dots,s_{b+1})$ and replace $\sum_{k=1}^t c_{i,k}'\mathbf{x}_k \equiv c_i' \pmod{d_i'}$ by $\mathbf{x}_k \equiv s_k \pmod{d_i'}$ and $\sum_{k=b+1}^t c_{i,k}'\mathbf{x}_k \equiv s_{b+1} \pmod{d_i'}$.

    The modular constraints with variable modulus are more complicated.
    To start, let $\sum_{k=1}^t c_{i,k}\mathbf{x}_k \equiv c_i \pmod{d_iz_{k_i}^{n_{k_i}}}$ be a modular constraint with variable modulus where all $c_{i,k} \in \intg$ and $d_i$ is a power of $z_{k_i}$.
    
    Let $d$ be such that each $d_i$ divides $z_{k_i}^d$. 
    Then, can reduce to the case where $d_i = z_{k_i}^d$ by considering the cases $ez_{k_i}^{n_{k_i}}+  \sum_{k=1}^t c_{i,k}\mathbf{x}_k \equiv c_i \pmod{d_iz_{k_i}^{n_{k_i}}}$ for all $0 \le e < d$ such that $d_i \mid e$. 
    Next, recall that for pure power variables $\mathbf{x}_k= z_j^{n_j}$ and $\mathbf{x}_{k'} = z_{j'}^{n_{j'}}$ with $k < k'$ we have that $\mathbf{x}_{k'}/\mathbf{x}_k$ is bounded from above. 
    Hence, when $z_j = z_{j'}$, we can split into the case where $n_j > n_{j'}$ and  the finitely many case where $n_j = n_{j'} + c$ for a constant $c \in \nat$.
    Thus, we can assume that pure power variables of a specific exponential base are ordered strictly.
    
    Similarly, we can reduce to the case where $c_{i,k} = 0$ for power variables $\mathbf{x}_k$ in the bottom family with the exponential part $z_{k_i}^{n_j}$ (thus the same exponential base).
    If $n_j > n_{k_i}$, we again split into the cases where $n_j - n_{k_i} = 0,\dots,d-1$ and else $\mathbf{x}_k \equiv 0 \pmod{z_{k_i}^{n_{k_i}+d}}$ and so we can equivalently set $c_{i,k} = 0$.
    When $n_j = n_{k_i}$, $\mathbf{x}_k$ is a linear power variable (properties~\labelcref{step 3a,step 3c}), and so we split into the cases where $x_{i,j} z_{k_i}^{n_j} = (d\tilde{x}_{i,j}+e) z_{k_i}^{n_j}$ for $0 \le e < d$ to obtain that $\mathbf{x}_k$ is always divisible by $z_{k_i}^{n_{k_i}+d}$, allowing us to set $c_{i,k} = 0$. 
    When $n_j < n_{k_i}$, using properties~\labelcref{step 3a,step 3c}, we have that $\mathbf{x}_k$ is outside the bottom family.
    
    Finally, observe that when $a_ix_i \equiv 0\pmod{e_i}$ for $i = 1,2$ and $e_1 \mid e_2$, then $a_1x_1+na_2x_2\equiv 0\pmod{e_1}$ for any $n \in \intg$, i.e., we can use a restricted form of Gaussian elimination on these constraints. 
    First, as all moduli have the form $z_{k_i}^{d+n_{k_i}}$, pure power variables of a specific exponential base are strictly ordered, we can rename the modular constraints such that for $i < i'$ with $z_{k_i} = z_{k_{i'}}$ that $n_{k_i} \le n_{k_{i'}}$.
    As $c_{i,k} \ne 0$ when $\mathbf{x}_k$ has opposite exponential base of $z_{k_i}$, we can ensure by this limited Gaussian elimination that for each equation either $c_{i,k} = 0$ for all $1 \le k \le b$ or there is a $1 \le k\le b$ such that $c_{i,k} \ne 0$ and $c_{i',k} = 0$ for all $1 \le i' < i$.
    This process might increase $d$, as in the example above, $a_2$ might have more factors in common with $c_1$ than $a_1$ does (we do not make this more specific here). 

    To finally deal with the modular constraints with variable modulus, compute a number $\ell_i$ such that, for every $1 \le k \le b$ with $c_{i,k} \ne 0$ (and thus $\mathbf{x}_k$ has exponential base different than $z_{k_i}$), we have that $z_{k_i}^{\ell_i}/c_{i,k}$ contains enough factors of $z_{k_i}$ to apply Lemma~\ref{lem:range modulo}. 
    Then, as in the constant modulo case, split into the cases for $(s_1,\dots,s_{b+1}) \in \{0,\dots,z_{k_i}^{\ell_i}-1\}^{b+1}$ where $\sum_{k=1}^bc_{i,k}s_k + s_{b+1}\equiv c_i\pmod{z_{k_i}^{\ell_i}}$ and for $1 \le k \le b$, $s_k \in S$ whenever $\mathbf{x}_k$ is a pure power variable, and add the modular constraints with $\mathbf{x}_k \equiv s_k \pmod{z_{k_i}^{\ell_i}}$ for $1 \le k \le b$ and $\sum_{k=b+1}^t c_{i,k}\mathbf{x}_k \equiv s_{b+1} \pmod{z_{k_i}^{\ell_i}}$.

    Combine all these conditions on $\mathbf{x}_k$ being something modulo something into one constraint $\mathbf{x}_k \equiv s_k \pmod{e_k}$ using the Chinese remainder theorem (and return false if such an $s_k$ does not exist). 
    For a pure power variable $\mathbf{x}_k = z_j^{n_j}$ and modular constraint $\mathbf{x}_k \equiv s_k \pmod{e_k}$, we have that $(z_j^{n} \bmod{e_k})_{n=0}^\infty$ is ultimately periodic.
    As the next value only depends on the previous, we either have that there is no solution $n_j$ (leading to a direct contradiction), exactly one $n_j$ (leading to a reduction to a system with fewer power variables), or we can compute $a_j, b_j, c_j$ such that $z_j^{n_j} \equiv s_k \pmod{e_k}$ holds exactly when $n_j \equiv a_j \pmod{a_j'}$ and $n_j > c_j$.
    Similarly, when $\mathbf{x}_k = x_{i,j}z_j^{n_j}$ is a linear power variable, we can compute $b_k, c_k$ such that $z_j^{n_j} \equiv z_j^{n_j+b_k} \pmod{e_k}$ for all $n_j \ge c_j$. 
    Splitting into the cases where $n_j < c_j$ (which are systems with strictly fewer variables), and $n_j \equiv b_k \pmod{b_k'}$ and $n_j \ge c_j$ for $a_k = 0,\dots,b_k-1$.
    We reduce to the case where we have infinitely many options for each $\mathbf{x}_{k}$.

    Now we make our quantifier elimination. 
    Find a solution $\tilde{\mathbf{x}}$ to our problem with $\tilde{\mathbf{x}}_1,\dots,\tilde{\mathbf{x}}_b$ integer variables satisfying $\tilde{\mathbf{x}}_k \equiv s_k \pmod{e_k}$ and $\tilde{\mathbf{x}}_{b+1},\dots,\tilde{\mathbf{x}}_t$ are power variables or integers (depending on what they should be in the previous construction). 
    There are fewer than $t$ power variables, and so we can use our quantifier elimination to either show there is no solution $\tilde{\mathbf{x}}$ (letting us output false) or find a solution $\tilde{\mathbf{x}}$. 
    From this solution, fix $\mathbf{x}_k \coloneqq \tilde{\mathbf{x}}_k$ for $b+1 \le k \le t$.

    For a modular constraint with constant modulus, we have that for any choice of power variables $\mathbf{x}_1,\dots,\mathbf{x}_b$ of the correct form satisfying $\mathbf{x}_k \equiv \tilde{\mathbf{x}}_k \pmod{e_k}$, the constraint is satisfied by construction. 
    For the modular constraints with variable modulus, there are two cases.
    First, if $c_{i,k} = 0$ for $1 \le k \le b$, the constraint is again automatically satisfied. 
    Secondly, for the other constraints, one-by-one choose the value of $\mathbf{x}_k$ modulo the (now fixed) moduli $z_{k_i}^{d+n_{k_i}}$. 
    Our partial Gaussian elimination determines this order as it allows us to satisfy the modular constraints one at a time, and, by construction, the power variables whose modulus we choose later do not influence the earlier choices.
    By Lemma~\ref{lem:range modulo} and the earlier construction, we can indeed find $a_k,a_k',b_k,b_k'$ and substitutions of the following form.
    If $\mathbf{x}_k = z_j^{n_j}$ is a pure power variable, we make the substitution to $\mathbf{x}_k \mapsto z_j^{b_k}(z_j^{b'_k})^{n_j} = z_j^{b_k}\mathbf{x}_k$.
    If $\mathbf{x}_k = x_{i,j}z_j^{n_j}$ is a linear power variable outside the bottom family we take $\mathbf{x}_k \mapsto (a_k'x_{i,j}+a_k)z_j^{n_j} = a_k'\mathbf{x}_k + a_k\mathbf{x}_{k'}$, where $\mathbf{x}_{k'} = z_j^{n_j}$.
    Similarly, when $z_j^{n_j}$ is in the bottom family, we make both these substitutions. 
    Then, for any choice of power variables of the correct form, the modular constraints with constant modulus are automatically fulfilled.

    Finally, we verify the inequalities can be satisfied.
    For a row $A_i$ such that $A_{i,1} = \cdots = A_{i,b} = 0$, the quantifier elimination gives $A_i\tilde{\mathbf{x}} > 0$.
    Remove these rows.
    Moreover, as $A_{1:b}(\mathbf{x}_1,\dots,\mathbf{x}_b) > \mathbf{v} - A_{b+1:t}(\tilde{\mathbf{x}}_{b+1},\dots,\tilde{\mathbf{x}}_t)$ and all modular constraints are automatically satisfied, Corollary~\ref{cor:polyhedron no modular constraints + lin variables} gives that there are power variables of the correct form in this polyhedron, implying the theorem. 
\end{proof}

\end{document}